\documentclass[a4paper,11pt,reqno]{amsart}
          \usepackage{amssymb}
	  \usepackage{amsmath}
	  \usepackage{amsthm}
          \usepackage{amsfonts}
          \usepackage[english]{babel}
          \usepackage[utf8]{inputenc}
          \usepackage{enumitem}

\usepackage[margin=2.5cm]{geometry}

\usepackage{tikz}
\usetikzlibrary{arrows,shapes}

 \usepackage[unicode,colorlinks,plainpages=false,hyperindex=true,bookmarksnumbered=true,bookmarksopen=false,pdfpagelabels]{hyperref}
 \usepackage{caption}
 \hypersetup{urlcolor=cyan,linkcolor=blue,citecolor=red,colorlinks=true}

\newtheorem{thm}{Theorem}[section]
\newtheorem{cor}[thm]{Corollary}
\newtheorem{lem}[thm]{Lemma}
\newtheorem{prop}[thm]{Proposition}

\newtheorem{defn}[thm]{Definition}

\newtheorem{conj}[thm]{Conjecture}
\theoremstyle{definition}
\newtheorem{rem}[thm]{Remark}
\newtheorem*{conv*}{Conventions}
\newtheorem*{conte*}{Content}

\numberwithin{equation}{section}

\newcommand{\be}{\begin{equation}}
\newcommand{\ee}{\end{equation}}
\newcommand{\bea}{\begin{eqnarray}}
\newcommand{\eea}{\end{eqnarray}}
\newcommand{\ba}{\begin{aligned}}
\newcommand{\ea}{\end{aligned}}

 \usepackage{color}
\definecolor{lila}{rgb}{1,0.2,0.9}
\definecolor{purple}{rgb}{0.4,0,1}
\definecolor{darkgreen}{rgb}{0,0.5,0}

\def\cA{{\mathcal A}}                       %
\def\sing{\mathrm{sing}}                    %
\def\bT{{\mathbb T}}                        %
\def\cp{\mathbb{C}P^{n-1}}                  %
\def\loc{\mathrm{loc}}                      %
\def\reg{\mathrm{reg}}                      %
\def\ri{{\mathrm{i}}}                       %
\def\1{{\mbox{\boldmath $1$}}}              %
\def\bC{{\mathbb C}}                        %
\def\bR{{\mathbb R}}                        %
\def\bZ{{\mathbb Z}}                        %
\def\tr{\mathrm{tr}}                        %
\def\diag{\mathrm{diag}}                    %
\def\SU{\mathrm{SU}}                        %
\def\fH{{\mathfrak{H}}}                     %
\def\cH{{\mathcal H}}                       %
\def\dt {\left.\frac{d}{dt}\right|_{t=0}}   %
\def\Ad{\mathrm{Ad}}                        %
\def\U{\mathrm{U}}                          %
\def\reg{\mathrm{reg}}                      %
\def\cL{{\mathcal L}}                       %
\def\cS{{\mathcal S}}                       %
\def\su{{\mathfrak{su}}}                    %
\def\k{{\mathfrak k}}                       %
\def\cT{{\mathcal T}}                       %
\def\cQ{{\mathcal Q}}                       %
\def\fc{{\mathfrak c}}                      %
\def\cM{{\mathcal M}}                       %
\def\cP{{\mathcal P}}                       %
\def\fF{{\mathfrak F}}                      %
\begin{document}

\title{Spherical singularities  in compactified \\ Ruijsenaars--Schneider  systems
}

\maketitle

\begin{center}

L.~Feh\'er${}^{a,b}$  and H.R.~Dullin${}^c$  \\

\bigskip

${}^a$Department of Theoretical Physics, University of Szeged, Hungary\\
${}^b$HUN-REN Wigner Research Centre for Physics, Budapest, Hungary\\
e-mail: lfeher@physx.u-szeged.hu \\
${}^c$School of Mathematics and Statistics, University of Sydney, Australia\\
e-mail: holger.dullin@sydney.edu.au

\end{center}

\begin{abstract}
We investigate certain Liouville integrable systems
constructed earlier
via reduction of the quasi-Hamiltonian double of $\SU(n)$.
These systems
live on compact connected symplectic manifolds of dimension $2(n-1)$ and
 can be interpreted as
compactified trigonometric Ruijsenaars--Schneider systems.
Depending on the value of a parameter $0<y< \pi$, they arise in two drastically different forms:
in type (i) these are toric systems, while in the type (ii) cases they possess globally continuous
action variables that generate a Hamiltonian torus action (only) on a dense
open subset of the phase space.
The principal goal of the paper is to study those fibers of the action map
(alias the $\bT^{n-1}$ momentum map)
which are contained in the complement
of the domain of the densely defined torus action occurring in the type (ii) cases.
We demonstrate that \emph{all such `singular fibers'
are smooth connected isotropic submanifolds.}
We also work out a model of the fibers as quotient spaces of certain
subgroups of $\SU(n)$ with respect to an action of another subgroup.
The general results are exemplified by determining  the vertices
of the polytope filled by the action variables in the
simplest type (ii) cases that appear for any $n\geq 4$  with $\pi/(n-1) <y < \pi/(n-2)$,
and proving that the fibers
over the `singular vertices' are diffeomorphic to
 $S^3 \simeq \SU(2)$ in these cases.
In this way, our findings enrich the set of examples of Liouville
integrable systems with spherical singularities.
 \end{abstract}

 \setcounter{tocdepth}{2}

 \newpage

\tableofcontents

\newpage

\section{Introduction}\label{Sec1}

Liouville integrable Hamiltonian systems are understood for regular values of the integral map \cite{Arn}.
The study of critical values and the corresponding singular fibers is an active area of research.
When the integral map is the momentum map of an effective Hamiltonian action of the $N$-torus on
a compact connected $2N$-dimensional  symplectic manifold, Delzant's theorem \cite{D} states that the image
of the momentum map is an $N$-dimensional convex polytope with certain additional restrictions. The preimage of a regular
value in the interior of the polytope is an $N$-dimensional Liouville torus.
The singular values fill the lower-dimensional faces, and the corresponding singular fibers are simply tori of smaller dimension
down to edges with fibers $S^1$ and vertices with fibers just a point.
The existence of an $N$-torus action is very restrictive, and most integrable systems are not toric.
The typical (technically speaking `non-degenerate')  singular fibers of Liouville integrable systems
have been classified, see, e.g.,   \cite{BO,Zung}.
Relatively recently \cite{B} so-called spherical singularities have been discovered,
which posses singular fibers that are non-toral embedded submanifolds, often spheres.
 Even though they are degenerate singularities, they do seem to occur commonly
in integrable systems studied in mathematical physics,
for example in the celebrated Gelfand--Cetlin systems  \cite{BMZ,CK,CKO,NU}
introduced by Guillemin and Sternberg \cite{GS}.
Several systems exhibiting spherical singularities are related to
toric degenerations \cite{HK, NNU,NU-pol}.
For a more detailed review of general background, see  \cite{K}.

The particular systems we are going to investigate
were constructed in \cite{FK}
by reduction of a dynamical system on $\SU(n) \times \SU(n)$.
 On an open dense subset of
their  compact phase space  these systems behave like the systems obeying Delzant's theorem.
The image of the `action map'
is a convex polytope, but some vertices violate Delzant’s conditions
and we show that
some of these non-Delzant vertices are spherical singularities.
The pertinent systems have a physical
interpretation as compactified trigonometric Ruijsenaars--Schneider systems \cite{RIMS95,RS}.
This  interpretation is well known to experts on Calogero--Moser--Sutherland type integrable many-body systems,  which
have numerous applications
in physics and connections to several areas of mathematics \cite{vDVrev,RBanff}.
It is a tied to a coordinate system on the phase space that we will not use in the present work.
 For the interpretation and the history of these systems, see also \cite{FKlim} and Appendix \ref{AppA}.

We need to recall the basics of the construction of the
systems of interest from \cite{FK}.
It begins with the manifold
\be
P:= \SU(n) \times \SU(n) = \{(A,B)\},
\label{I.1}\ee
and the map $\mu: P \to \SU(n)$ given by the group commutator
\be
\mu(A,B):= A B A^{-1} B^{-1}.
\label{I.2}\ee
This is an $\SU(n)$ equivariant map if we let $\SU(n)$ act on itself by conjugations and on $P$
by the diagonal conjugations: $\eta \cdot (A,B):= (\eta A\eta^{-1}, \eta B \eta^{-1})$, $\forall \eta\in \SU(n)$.
 The center of $\SU(n)$ is the set of scalar matrices $z \1_n$  with $z$ any $n$-th root of unity,
which we identify with $\bZ_n$.
Notice that it is the adjoint group $\SU(n)/\bZ_n$ that acts effectively.
As a result of the equivariance, the inverse image $\mu^{-1}(\mu_0)$ is stable with respect to the isotropy group
$\SU(n)_{\mu_0}$ for any  $\mu_0\in \SU(n)$.
Let us call $\mu_0$  a `strongly regular'  value of the map $\mu$ if $\SU(n)_{\mu_0}/\bZ_n$ acts freely on $\mu^{-1}(\mu_0)$.
For strongly regular $\mu_0$, $\mu^{-1}(\mu_0)$ is an embedded submanifold of $P$ and the quotient space
\be
P(\mu_0):= \mu^{-1}(\mu_0)/ \SU(n)_{\mu_0}
\label{I.3}\ee
possesses a unique smooth\footnote{Throughout  the paper, `smooth' means $C^\infty$.
 The terminology `strongly regular' is justified since in our setting `strongly regular value' implies regular value
 in the usual sense \cite{AMM}.}
 manifold structure for which the projection $\pi_0: \mu^{-1}(\mu_0) \to P(\mu_0)$ is a smooth submersion.
  This quotient space is always compact, connected and it comes equipped with a symplectic form,  denoted
 $\omega_{\mu_0}$.
 Moreover,  the conjugation invariant smooth functions of either of the matrices $A$ or $B$ descend
 to Hamiltonians in involution on the symplectic manifold $P(\mu_0)$.
 Choosing one of these matrices, generically one obtains  $(n-1)$ independent Hamiltonians.
 The origin of the symplectic form will be illuminated in Section \ref{Sec2} below.

In order to obtain a Liouville integrable system $P(\mu_0)$ should be a smooth
manifold and as small as possible. This can be achieved by taking $\mu_0$ to be of the form
\be
\mu_0(y):=\diag\left(e^{2\ri y}, \ldots, e^{2\ri y}, e^{-2(n-1) \ri y}\right)
\label{I.5}\ee
with  a suitable real parameter $0\leq y \leq\pi$.
 We call $y$ \emph{admissible} if $\mu_0(y)$ is strongly regular   and  is not
a multiple of the unit matrix $\1_n$.
This is always the case except for the finite set of excluded  values  \cite{FK} for which
\be
e^{2m \ri y} = 1\quad\hbox{for some} \quad
m=1,\dots, n.
\label{I.6}\ee
For admissible $y$, $\dim(P(\mu_0(y))) = 2(n-1)$.

 We choose to consider the integrable system on $P(\mu_0(y))$
obtained by using the conjugation invariant globally smooth functions of $B$
 to define the commuting Hamiltonians.
For any admissible $y$, action variables for this system are encapsulated by an `action map'
\be
(\hat \beta_1, \dots, \hat \beta_{n-1}): P(\mu_0(y)) \to \bR^{n-1}
\label{I.7}\ee
constructed in \cite{FK}  using essentially the eigenvalues of $B$.
The $\hat \beta_i$ are smooth functions either on the whole of $P(\mu_0(y))$ or
on a dense open submanifold thereof, and are always globally continuous.
They form the momentum map for an effective action of the torus $\bT^{n-1}$,
at least on a dense open submanifold. Whether this dense open submanifold
is the whole of $P(\mu_0(y))$ or a proper subset, depends on the value
of $y$. This dichotomy classifies the  admissible  cases  (values of $y$)  as type (i) and type (ii).
In the type (i) cases one has a globally defined $\bT^{n-1}$ action,
and $P(\mu_0(y))$ is a toric manifold, actually isomorphic with the standard toric manifold $\cp$ (equipped with a $y$-dependent
multiple of the Fubini-Study symplectic form).

The following proposition is a reformulation and slight completion
  of a result from \cite{FK}.
  For this, let us recall that the Farey sequence $\fF_n$ of order $n$ is the set of all
 fractions $0 \le a/b \le 1$ with $\gcd(a,b) = 1$ and $b \le n$ written in increasing order (see, for example, \cite{HW,R})
A pair of two consecutive elements in $\fF_n$ are referred to as  Farey neighbours.
We say that a subinterval of $[0,\pi]$ is of type (i) if it entirely consists of type (i) values of $y$,
and similarly for type (ii).
 Equivalently, we  refer to type (i) and type (ii) subintervals of $[0,1]$
using $y/\pi$ as the parameter.

\begin{prop}\label{thmI1}
The Farey sequence $\fF_n$ gives the excluded values for $y/\pi$.
For $1 \leq \kappa  \leq  (n-1)$ coprime to $n$,
the two open intervals between  $\kappa/n$ and its
neighbours in $\fF_n$ are of type (i),
and all other intervals between neighbours in $\fF_n$ are of type (ii).
\end{prop}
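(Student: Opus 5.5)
The first assertion is elementary. By \eqref{I.6}, $y\in[0,\pi]$ is excluded exactly when $my\in\pi\bZ$ for some $m\in\{1,\dots,n\}$. Equivalently, $y/\pi=k/m$ with $0\le k\le m\le n$, and after cancelling common factors this is a reduced fraction in $[0,1]$ with denominator at most $n$, i.e.\ an element of $\fF_n$. For $\mu_0(y)$ not a scalar matrix one needs $e^{2\ri y}\neq e^{-2(n-1)\ri y}$, i.e.\ $e^{2n\ri y}\neq 1$, which is the case $m=n$ of the same condition. So the admissible values are precisely the union of the open intervals between Farey neighbours in $\fF_n$.

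For the second assertion I will go back to the criterion from \cite{FK} that separates type (i) from type (ii). There the action map \eqref{I.7} is built from the eigenvalues of $B$ on $\mu^{-1}(\mu_0(y))$. The torus action is global exactly when no eigenvalue collision can occur on the reduced space. Concretely, I expect this to say that certain linear forms in $y$ avoid $\pi\bZ$. These forms compare the eigenvalue arguments of $\mu_0(y)$, namely $2y$ (with multiplicity $n-1$) and $-2(n-1)y$, against the constraint that $B$ and $ABA^{-1}$ are tied through $\mu_0$. The resulting condition depends only on the integer parts $\lfloor my/\pi\rfloor$ for $m=1,\dots,n$.

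The key structural step is that on each open interval $I$ between Farey neighbours, every function $y\mapsto\lfloor my/\pi\rfloor$ with $m\le n$ is constant, since its jumps occur only at points of $\fF_n$. Hence the type is constant on $I$, and it suffices to test one value of $y$ in $I$, or to read the type off from the endpoints of $I$.

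The remaining task, and the main obstacle, is to identify exactly which intervals satisfy the \cite{FK} criterion. The plan is to recast the criterion as the statement that the sequence $\big(\lfloor my/\pi\rfloor\big)_{m=1}^{n}$ coincides with the sequence $\big(\lfloor m\kappa/n\rfloor\big)_{m=1}^{n}$ for some $\kappa$ coprime to $n$. Put differently, $y/\pi$ is as close to $\kappa/n$ as the Farey grid of order $n$ allows. The neighbours of $\kappa/n$ in $\fF_n$ are the fractions $a/b$ with $|\kappa b-an|=1$ and $1\le b\le n-1$; they are unique on each side, with $b\equiv\pm\kappa^{-1}\pmod n$. On these two adjacent intervals no fraction $m'/m$ with $m<n$ lies between $y/\pi$ and $\kappa/n$, which forces the desired floor pattern. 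Conversely, on an interval not adjacent to any $\kappa/n$, the floor pattern must differ from every such model sequence. This produces a possible eigenvalue collision of $B$, i.e.\ the torus action fails at some point, which is type (ii). This converse direction, namely exhibiting the collision explicitly from the \cite{FK} condition, is where I expect the real work. I would first try to construct it using the interval's endpoint with denominator $b<n$.
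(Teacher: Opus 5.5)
Your treatment of the excluded values is correct and matches the paper: \eqref{I.6} says exactly that $y/\pi=k/m$ with $m\le n$, i.e.\ $y/\pi\in\fF_n$. The second assertion, however, is not proved. You never state the type (i)/(ii) criterion you are working with. The claim that it ``depends only on $\lfloor my/\pi\rfloor$, $m\le n$'' is a guess, so even the constancy of the type on each Farey interval is unjustified. The identification of the type (i) intervals, which is the entire content of the statement, is deferred as ``the main obstacle'', and the converse direction is not attempted.

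Your recast criterion is also off by one. From $b\kappa-an=1$ you get $na/b=\kappa-1/b$, so for $y/\pi\in(a/b,\kappa/n)$ one has $\lfloor ny/\pi\rfloor=\kappa-1\neq\lfloor n\kappa/n\rfloor$. Taken literally, your criterion would declare the left neighbouring interval of $\kappa/n$ to be type (ii). A correct version compares only $m\le n-1$, i.e.\ it asks that $y/\pi$ lie in the interval of $\fF_{n-1}$ that is split by $\kappa/n$.

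The missing idea is that no new geometry is needed, because \cite{FK} already contains the classification. There it is shown that the type (i) values are exactly the non-excluded $y/\pi$ in $(a/b,\kappa/n)\cup(\kappa/n,c/d)$, and all other non-excluded values are type (ii). Here $\gcd(\kappa,n)=1$ and $a,b,c,d$ are the unique integer solutions of $b\kappa-an=1$, $nc-\kappa d=1$ with $1\le b,d\le n-1$. The paper's proof combines this with the classical criterion that reduced fractions $p/q<r/s$ of $\fF_n$ are neighbours if and only if $rq-ps=1$ and $q+s>n$. Since $b+n>n$ and $n+d>n$, the fractions $a/b$ and $c/d$ are precisely the Farey neighbours of $\kappa/n$. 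Hence the \cite{FK} intervals contain no excluded points and are the two Farey intervals adjacent to $\kappa/n$, which is the proposition.

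If you want to avoid citing the \cite{FK} classification, you would have to prove directly from the inequalities \eqref{T24} that the polytope they define misses $\partial\cA$ exactly on those intervals. Those inequalities are valid throughout $k\pi/n<y<(k+1)\pi/n$. This is a substantial combinatorial argument that your plan does not address.
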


 \begin{rem}
For  $1 \leq \kappa \leq (n-1)$ coprime to $n$, the Farey neighbours of $\kappa/n$ are $a/b$ and $c/d$, where
$a,b,c,d$ with the conditions $1 \leq b,d \leq (n-1)$ are the unique integer solutions of the equations
\be
b \kappa - a n = 1
\quad \hbox{and}\quad n c - \kappa d = 1.
\label{HardyI1}\ee
For small $n$, the type (i) and type (ii) intervals,
are illustrated by Figure \ref{Fig2}.
There and below, we restrict to $0 \leq y \leq \pi/2$, since the systems associated to $y$ and $(\pi -y)$ are isomorphic \cite{FK}.
It can be shown  that the number of type (i) and type (ii) intervals grows as $O(n)$ and as $O(n^2)$, respectively,
so  type (ii) is rather the rule than the exception.
These statements follow easily from standard number theoretic results \cite{HW}.
See also Appendix \ref{newAppB}.
 \end{rem}

\begin{figure}
\includegraphics{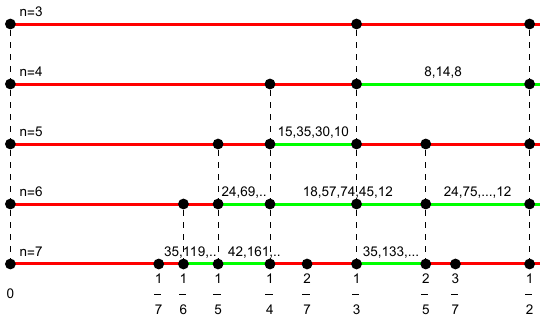}
\caption{Intervals in the parameter $y/\pi$ of type (i) (red) and type (ii) (green) for $n=3,\dots, 7$. The black dots mark
excluded values of $y/\pi$. The numbers over the  type (ii) intervals list the `face
vector' (\# vertices, \# edges, ..., \# facets) of the corresponding polytope $\cA_y$, with more details contained in Appendix \ref{AppC}.}
\label{Fig2}
\end{figure}

The image of the momentum map $\hat \beta$ \eqref{I.7} is a convex polytope  of dimension $(n-1)$ for every admissible parameter $y$.
This polytope, denoted $\cA_y$, is defined by $2n$ inequalities \eqref{T24} in an $(n-1)$-dimensional hyperplane  in
$\bR^n$ and is contained in the  simplex $\cA$  \eqref{T8}.
It is known \cite{FK} that the fiber of $\hat \beta$ over
any interior point
 of $\cA_y$ is a Liouville torus $\bT^{n-1}$, and a lower dimensional torus for those points of the boundary
 $\partial\cA_y$ that lie in the interior $\cA^\reg \subset \cA$ \eqref{T8}
(in particular, a fixed point for any vertex belonging to $\cA^\reg$).
In the type (i) cases the polytope $\cA_y$ is an $(n-1)$-dimensional simplex in $\cA^\reg$, which is
the simplest example of a Delzant polytope \cite{A,D}.
The differentiability of $\hat \beta$  is lost at $\hat \beta^{-1}(\partial \cA_y \cap \partial \cA)$,
 which gives the non-empty complement
of the domain of the torus action in every type (ii) case.

The goal of the present paper is to initiate the study of the
momentum polytopes $\cA_y$ for the type (ii) cases  and  to uncover the structure of the fibers
of the momentum map $\hat \beta$  over the `singular points' that form $\partial \cA_y \cap \partial \cA$.
 With full details contained in the text, our main result can be formulated as follows.

\begin{thm}\label{thmmain}
For every type (ii) parameter $y$,
each fiber of the momentum map $\hat \beta$ is a  smooth connected isotropic  submanifold of $P(\mu_0(y))$.
Specifically, for every $\xi \in \cA_y$, the fiber $\hat \beta^{-1}(\xi)$ is diffeomorphic to a smooth quotient
space $\SU(n)_{\delta(\xi)}/\SU(n)_\xi^{u_0}$,
where $\SU(n)_{\delta(\xi)}$  is the isotropy group of $\delta(\xi)$ \eqref{T9} with respect to conjugations, and
$\SU(n)_\xi^{u_0} < \SU(n)_{\delta(\xi)}$ is a subgroup
defined in Section \ref{Sec3} together with its action \eqref{act33} on $\SU(n)_{\delta(\xi)}$.
\end{thm}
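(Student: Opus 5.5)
The plan is to realize the fiber $\hat\beta^{-1}(\xi)$ concretely inside the reduced space. I will fix the eigenvalues of $B$ and then identify which pairs $(A,B)\in\mu^{-1}(\mu_0(y))$ remain modulo the isotropy group $\SU(n)_{\mu_0}$. Since $\hat\beta$ is built from the eigenvalues of $B$, the condition $\hat\beta(A,B)=\xi$ says precisely that $B$ is conjugate to $\delta(\xi)$ from \eqref{T9}. Using the $\SU(n)$ action, I first conjugate the pair so that $B=\delta(\xi)$ exactly. The residual freedom in $B$ is then the centralizer $\SU(n)_{\delta(\xi)}$, acting by conjugation. The moment map constraint becomes $A\,\delta(\xi)\,A^{-1}=\mu_0(y)\,\delta(\xi)$. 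In other words, $A$ conjugates $\delta(\xi)$ into the fixed matrix $\mu_0(y)\delta(\xi)$. This can happen only if the spectra agree, and that spectral matching is presumably exactly the condition cutting out $\cA_y$.

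\textbf{Step 1 (reduction to a coset).} Pick one solution $A=u_0$; its existence for $\xi\in\cA_y$ should come from the construction in \cite{FK}. The full solution set of the constraint is then the coset $u_0\,\SU(n)_{\delta(\xi)}$. So the fiber is the quotient of $\{(u_0 g,\ \eta\,\delta(\xi)\,\eta^{-1})\}$ by $\SU(n)_{\mu_0}$. Equivalently, it is the quotient of $\SU(n)_{\delta(\xi)}$ by the group of pairs $(\eta,\eta')$ that preserve this normal form. Here $\eta\in\SU(n)_{\mu_0}$, and $\eta'\in\SU(n)_{\delta(\xi)}$ is the compensating conjugation restoring $B=\delta(\xi)$.

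\textbf{Step 2 (identify the residual group).} Unwinding this gives a subgroup $\SU(n)^{u_0}_\xi<\SU(n)_{\delta(\xi)}$ acting by $g\mapsto u_0^{-1}\eta u_0\, g\, \eta^{-1}$. The admissible $\eta$ are those for which both $\eta$ and $u_0^{-1}\eta u_0$ commute appropriately with $\mu_0$ and $\delta(\xi)$. This is what I expect \eqref{act33} to encode. The fiber is then set-theoretically $\SU(n)_{\delta(\xi)}/\SU(n)^{u_0}_\xi$.

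\textbf{Step 3 (smoothness and topology).} Strong regularity of $\mu_0(y)$ gives freeness of the $\SU(n)_{\mu_0}/\bZ_n$ action. That freeness passes to the action of $\SU(n)^{u_0}_\xi$ modulo its intersection with the center. Since the acting group is compact, the quotient is a smooth manifold. The map to $P(\mu_0(y))$ is an injective immersion of a compact manifold, hence an embedding, and its image is exactly $\hat\beta^{-1}(\xi)$. Connectedness follows because $\SU(n)_{\delta(\xi)}$, the centralizer of an element in the simply connected group $\SU(n)$, is connected.

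\textbf{Step 4 (isotropy).} The reduced form $\omega_{\mu_0}$ comes from the quasi-Hamiltonian two-form on the double. I will evaluate it on tangent vectors of the form $(A\,X, [Z,B])$ with $B$ fixed at $\delta(\xi)$ and $X$ in the centralizer Lie algebra. Using $B^{-1}dB=0$ along the variation of $A$, the relevant terms of the AMM form $\tfrac12\langle A^{-1}dA\wedge dB B^{-1}\rangle+\dots$ should vanish. This is a computation I expect to be routine but tedious.

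\textbf{Main obstacle.} The hardest part is Step 3 at points of $\partial\cA_y\cap\partial\cA$. There $\delta(\xi)$ has repeated eigenvalues, the group $\SU(n)_{\delta(\xi)}$ is large and non-abelian, and $\hat\beta$ is not differentiable. I must verify that the quotient is genuinely free and that the embedding is onto the whole fiber. For the latter, the point is that points with degenerate spectrum still all reduce to the single normal form. I will handle this using the explicit description of $\mu_0$, which has a single non-repeated eigenvalue, so that $u_0$ can be chosen explicitly via a rank-one-perturbation argument.
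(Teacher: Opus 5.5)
There is a genuine gap at the very first step. The level set $\mu^{-1}(\mu_0(y))$ is preserved only by $\SU(n)_{\mu_0(y)}$, not by all of $\SU(n)$. If you conjugate $(A,B)$ by a unitary $g$ with $gBg^{-1}=\delta(\xi)$, the moment map value becomes $g\mu_0(y)g^{-1}=\hat\mu(y,u)$ from \eqref{G1}, where $u$ is the last column of $g$. So the correct constraint is $A\delta(\xi)A^{-1}=\hat\mu(y,u)\delta(\xi)$, with a $u$ that varies from point to point, not $A\delta(\xi)A^{-1}=\mu_0(y)\delta(\xi)$.

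Your normal form corresponds to the single choice $u=e_n$, and it is generally unattainable. By \eqref{E4}, for $\xi\in\cA_y\cap\cA^\reg$ every admissible $u$ satisfies $|u_\ell|^2=z_\ell(\xi,y)$, and these are strictly positive on $\cA_y^o$. Hence for every interior point (and most boundary points) your Step~1 equation has no solution $A$ at all, and the particular solution you hope to import from the earlier construction does not exist. Steps~1--2 (the coset description, and a residual group you describe via commutation with $\mu_0(y)$) therefore rest on an empty set. In particular, the surjectivity onto the whole fiber claimed in Step~3 has no foundation.

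The missing ingredient is the bookkeeping of $u$ and the lemma that controls it. First, you need that the unit vectors $u$ with $\delta(\xi)\sim\hat\mu(y,u)\delta(\xi)$ form a single $\U(n)_{\delta(\xi)}$-orbit (Lemma \ref{lemG3}). This is where the rank-one determinant argument genuinely enters: at repeated eigenvalues the residues \eqref{E7} fix only the block norms $\sum_{j\in\cM_k}|u_j|^2$, which is exactly enough for transitivity. Second, you need that the remaining freedom in $g$ with prescribed last column is right multiplication by block matrices $\eta$ commuting with $\mu_0(y)$, which can be rescaled into $\SU(n)_{\mu_0(y)}$ (Lemma \ref{lemG4}, Remark \ref{remG5}).

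Only then can every $\SU(n)_{\mu_0(y)}$-orbit in the fiber be moved into the slice $M(\xi)$ of \eqref{G20]}. On that slice $B=g_0^{-1}\delta(\xi)g_0$ is fixed but not diagonal, and $\mu=\mu_0(y)$. The residual group is $g_0^{-1}\SU(n)^{u_0}_{\delta(\xi)}g_0$, where $\SU(n)^{u_0}_{\delta(\xi)}$ consists of the elements of $\SU(n)_{\delta(\xi)}$ having the \emph{vector} $u_0$ as an eigenvector, i.e.\ commuting with $\hat\mu(y,u_0)$ rather than with $\mu_0(y)$.

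Your formula $g\mapsto u_0^{-1}\eta u_0\,g\,\eta^{-1}$ does have the shape of \eqref{act33} once your matrix $u_0$ is read as $\hat A_0$ and $\eta$ is restricted to this group. Your Steps~3--4 are sound in spirit once transplanted to $M(\xi)$: freeness modulo $\bZ_n$, compactness giving an embedding, connectedness of centralizers, and vanishing of $\omega$. The last holds because, with $B$ and $\mu$ constant, the third term of \eqref{T3} reduces to a wedge of $A^{-1}\mathrm{d}A$ with itself. For the immersion property you still need the infinitesimal statement of Lemma \ref{lem312}, which you do not address.
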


It will be shown that the theorem recovers known results for $\xi \in \cA_y \cap \cA^\reg$, i.e. that in these  cases
the fibers are tori of various dimensions,
and is genuinely new for $\xi \in
\partial \cA_y \cap \partial \cA$. The interesting structure of the fibers over the `singular boundary'
 $\partial \cA_y \cap \partial \cA$ will
be illustrated by examples.

\begin{prop}\label{propmain}
For $\SU(n)$ with any $n\geq 4$,  the smallest type (ii) parameters vary in the interval $ \pi/(n-1) <y < \pi/(n-2)$.
In these cases  $\partial \cA_y \cap \partial \cA$ contains only vertices, $n (n-3)$ of them, and
 the fiber $\hat \beta^{-1}(\xi)$ is diffeomorphic to $S^3 \simeq \SU(2)$ for each of these vertices.
\end{prop}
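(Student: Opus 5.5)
The plan is to use the explicit description of the polytope $\cA_y$ by the $2n$ inequalities \eqref{T24} inside the hyperplane containing the simplex $\cA$ \eqref{T8}, together with the fiber model of Theorem \ref{thmmain}.

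\emph{Step 1: the interval.} By Proposition \ref{thmI1}, the excluded values of $y/\pi$ are the elements of $\fF_n$. The elements of $\fF_n$ closest to $0$ are $0<1/n<1/(n-1)<1/(n-2)<\dots$. Since $\kappa=1$ is coprime to $n$, the intervals $(0,1/n)$ and $(1/n,1/(n-1))$ are of type (i). The interval $(1/(n-1),1/(n-2))$ lies between consecutive Farey fractions, neither of which has denominator $n$ (for $n\geq 4$ we have $1/(n-2)\neq \kappa/n$). It is therefore of type (ii), and it is the first such interval.

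\emph{Step 2: the vertices on $\partial\cA$.} For $\pi/(n-1)<y<\pi/(n-2)$, I would write the inequalities \eqref{T24} explicitly. Heuristically, they say that consecutive gaps $\xi_j$ lie in a window determined by $y$, and that the constraints $\xi_j\geq 0$ defining $\partial\cA$ become active exactly because $(n-1)y>\pi$. I would then show the following:
\begin{enumerate}[label=(\alph*)]
\item at most one coordinate of $\cA$ can vanish on $\cA_y$, and only at isolated points, so $\partial\cA_y\cap\partial\cA$ is finite;
\item each such point is a vertex, obtained by setting one $\xi_j=0$ and saturating a complementary set of $n-2$ of the remaining inequalities.
\end{enumerate}
For the count, I would choose the index $j$ of the vanishing coordinate ($n$ choices, up to the cyclic symmetry of the setup). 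For each $j$, I would show there are exactly $n-3$ admissible ways to saturate the remaining constraints compatibly with the window $\pi/(n-1)<y<\pi/(n-2)$. This gives $n(n-3)$ vertices. I would verify this count against the face vectors listed in Figure \ref{Fig2} for $n=4,5$.

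\emph{Step 3: the fibers.} For each such vertex $\xi$, I would apply Theorem \ref{thmmain}, which gives $\hat\beta^{-1}(\xi)\simeq \SU(n)_{\delta(\xi)}/\SU(n)_\xi^{u_0}$. At these vertices, $\delta(\xi)$ from \eqref{T9} should have exactly one repeated eigenvalue pair, coming from the vanishing coordinate, and otherwise distinct eigenvalues. In that case $\SU(n)_{\delta(\xi)}\cong S(\U(2)\times \U(1)^{n-2})$, which has dimension $n+2$.

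Since the vertex saturates $n-1$ independent constraints, the subgroup $\SU(n)_\xi^{u_0}$ should be a maximal torus, or more precisely a torus of dimension $n-1$, acting through the action \eqref{act33}. I would then show that this torus absorbs the full $\U(1)^{n-2}$ factor together with the central $\U(1)$ of $\U(2)$, and that it acts freely. The quotient is then $\SU(2)\cong S^3$. This is consistent with the fiber being isotropic of dimension $3\leq n-1$.

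\emph{Main obstacle.} I expect the hardest step to be identifying $\SU(n)_\xi^{u_0}$ and the action \eqref{act33} concretely enough to see that the quotient is exactly $\SU(2)$, rather than a lens space or $\SU(2)$ modulo a finite group. For this I would first compute the stabilizers explicitly for $n=4$, and then argue inductively, or by direct block-matrix computation, for general $n$.
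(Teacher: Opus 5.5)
Your overall architecture (Farey check, enumeration of vertices, then the quotient model $\SU(n)_{\delta(\xi)}/\SU(n)^{u_0}_\xi$) is the paper's. Step 1 is fine: for $n=4$ one does have $1/(n-2)=2/4$, but what matters is that neither endpoint has reduced denominator $n$. Steps 2 and 3, however, contain a false claim and a misidentified stabilizer.

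\textbf{Step 2.} The second half of (a) (vanishing only at isolated points, hence finiteness) and claim (b) are false for $n\ge 5$.

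\emph{Why they fail.} If $\xi\in\cA_y$ has $\xi_1=0$, then $\xi_2=\xi_n=y$. Write $\xi_j=y-\varepsilon_j$ for $3\le j\le n-1$. The remaining conditions reduce to $\varepsilon_j\ge 0$ and $\sum_j\varepsilon_j=(n-1)y-\pi$; the pair conditions hold automatically since $(n-1)y-\pi<y$. So $\cA_y\cap\{\xi_1=0\}$ is the $(n-4)$-simplex spanned by $I_1,\dots,I_{n-3}$ of \eqref{52}. For $n=5$, the whole edge $[I_1,I_2]$ lies in $\partial\cA$ (cf.\ Remark \ref{remgenI}). Finiteness therefore cannot be proved. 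The phrase ``contains only vertices'' has to be read as a count: the \emph{vertices} of $\cA_y$ lying in $\partial\cA$ are exactly the $n(n-3)$ rotations of the $I_s$, which is what Theorem \ref{thm51} establishes.

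\emph{What the count needs.} Your sketch gives no mechanism for it. The argument that works:
\begin{enumerate}[label=(\roman*)]
\item A vertex saturates at least $n-1$ of the conditions $\xi_\ell\le y$, $\xi_\ell+\xi_{\ell+1}\ge y$.
\item Two saturated pair conditions with four distinct indices give $\sum_j\xi_j\le(n-2)y<\pi$.
\item With at most one saturated pair condition, the saturated conditions $\xi_j=y$ force $\sum_j\xi_j\ge(n-1)y>\pi$.
\end{enumerate}
Hence up to rotation $\xi=(A,B,\xi_3,\dots,\xi_{n-1},B)$ with $A+B=y$. The case $A=0$ gives the $I_s$, and $A>0$ gives the regular vertex $R$.

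\textbf{Step 3.} The number of saturated constraints does not determine $\SU(n)^{u_0}_\xi$. That group consists of the elements of $\SU(n)_{\delta(\xi)}$ having $u_0$ as an eigenvector. So the missing step is to find the support of $u_0$ from \eqref{E1}.

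\emph{Support of $u_0$.} For $\xi=I_s$ one has $\delta(I_s)=\delta_1\diag(1,1,Y,\dots,Y^s,Y^{-m},\dots,Y^{-1})$ with $Y=e^{2\ri y}$ and $m=n-2-s$. For $k\notin\{1,2,2+s\}$, $\delta_k$ is simple and $Y\delta_k$ is again an eigenvalue. Evaluating \eqref{E1} at $\zeta=Y\delta_k$ then forces $u_k=0$. Evaluating at $\zeta=Y\delta_{2+s}$ gives $|u_{2+s}|^2=f_s(y)$, as in \eqref{511}.

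\emph{Why $0<f_s(y)<1$ is needed.} If $u_0$ had no component in the $\U(2)$ block, the whole block would fix $u_0$ and the quotient would be zero-dimensional. If $u_{2+s}=0$, the quotient would be two-dimensional. The condition does hold on the whole interval: $f_s>0$ by a sign count, and there $f_s<1$ is equivalent to $\sin((n-s-1)y)>0$.

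\emph{Correct groups.} With $u_0=re_2+\sqrt{1-r^2}\,e_{2+s}$ one gets $\SU(n)^{u_0}_\xi=\{\diag(\gamma_1,\dots,\gamma_n):\gamma_2=\gamma_{2+s}\}\simeq\bT^{n-2}$. This is not a maximal torus. It acts on $\SU(n)_{\delta(\xi)}$, which has dimension $n+1$, not $n+2$; your two off-by-one errors cancel.

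\emph{Ruling out a finite quotient of $\SU(2)$.} For this you must compute $\Theta_0$ \eqref{G23}, i.e.\ exhibit $\hat A_0$. Its columns are eigenvectors of $\hat\mu(y,u_0)\delta(\xi)$. They can be taken to be standard basis vectors, with $e_n$ as first column, except for a $2\times 2$ block in $\mathrm{span}(e_2,e_{2+s})$. Then \eqref{act33} acts as follows:
\begin{itemize}
\item on the $\U(2)$ block by $\Gamma\mapsto\diag(\gamma_n,\gamma_2)\,\Gamma\,\diag(\gamma_1^{-1},\gamma_2^{-1})$;
\item on the $\U(1)$ factors by chain-type maps $\Gamma_j\mapsto\gamma_{j'}\Gamma_j\gamma_j^{-1}$.
\end{itemize}
Gauging all $\Gamma_j$ ($j\ge 3$) to $1$ fixes $X$ up to the center $\bZ_n$, which acts trivially. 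This leaves $\{\diag(\Gamma,1,\dots,1):\Gamma\in\SU(2)\}$ as a global slice, so the fiber is exactly $S^3$.
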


\begin{conte*}
Section \ref{Sec2} is a summary of the necessary  prerequisites from \cite{FK}.
The heart of the paper is Section \ref{Sec31}, where we derive the above claimed properties of the fibers of the
momentum map.  The detailed form of the main result is given by Theorems \ref{thm314} and \ref{thm316}, together
with Definition \ref{defn36} and Lemma \ref{lem310}.
The construction and the results given in this section are valid for any $\xi \in \cA_y$.
It is explained in Section \ref{Sec32} how they reproduce the known structure of the fibers for $\xi$
belonging to the interior $\cA^\reg$  of the simplex $\cA$ \eqref{T8}.
Sections \ref{Sec4} and \ref{Sec5} are devoted to examples.
In Section \ref{Sec4}, we describe the momentum polytopes and the fibers corresponding to its singular vertices
(i.e., vertices in $\partial \cA$)
in the lowest dimension where type (ii) cases occur. Our new results concerning these examples are summarized by
Propositions \ref{prop42} and \ref{prop44}.
 Section \ref{Sec51} contains a generalization of results of the previous examples to the simplest type (ii) cases
 associated with $\SU(n)$ for any $n\geq 4$. In particular, we give the vertices of $\cA_y$ in Theorem \ref{thm51}
 for $y$ in the range \eqref{51}, and  describe the fibers over the $n(n-3)$ singular vertices in Theorem \ref{thm54}.
 Section \ref{Sec52}  contains miscellaneous further results about $\cA_y$ for other ranges of the parameter $y$ for various specific values of $n$.
 We conclude in Section \ref{Sec6} by giving a summary and an outlook towards open problems.
 Four appendices complement the main text.
 Appendix \ref{AppA} describes the connection to the compactified trigonometric Ruijsenaars--Schneider model.
  Appendix \ref{newAppB} elaborates some details regarding Proposition \ref{thmI1}.
 Appendix \ref{AppB} explains the irrelevant nature of various choices that appear in the construction in Section \ref{Sec31},
 and Appendix \ref{AppC} contains the outcome of numerical
 explorations of the polytope $\cA_y$. 
\end{conte*}

\section{Background material}\label{Sec2}

The construction of \cite{FK} was formulated in the setting of quasi-Hamiltonian geometry \cite{AMM}.
We below highlight the rudiments  of the general theory that are needed for our purpose,
focusing on the relevant example of the `internally fused double' of $\SU(n)$.
This prepares the ground  for presenting the description of the type (ii) momentum polytopes found in \cite{FK},
which will serve as the starting point of our investigation in this paper.

\subsection{Dynamics on the internally fused double}\label{Sec21}

A quasi-Hamiltonian structure for a compact Lie group $K$ is given by a $K$-manifold $P$
equipped with  a  $K$-invariant 2-form $\omega$
and an  equivariant momentum map $\mu: P \to K$, subject to a delicate set of axioms \cite{AMM}.
The equivariance refers to the conjugation action of $K$ on itself.
Here we do not detail the precise definition, just recall basic consequences \cite{AMM}
that make this concept useful from the perspective of Hamiltonian mechanics.

First, let $\mu_0$ be a regular value of the (quasi-Hamiltonian) momentum map $\mu$ and suppose that there
is only a single orbit type for the action of
 $K_{\mu_0}$ on $\mu^{-1}(\mu_0)$.
 Then $P(\mu_0) = \mu^{-1}(\mu_0)/K_{\mu_0}$ is a symplectic manifold with the symplectic form $\omega_{\mu_0}$
 verifying the equality
 \be
 \iota_0^* \omega = \pi_0^* \omega_{\mu_0},
 \label{T1}\ee
 where $\iota_0$ and $\pi_0$ are the canonical inclusion and projection maps defined on $\mu^{-1}(\mu_0)\subset P$.
 This coincides with the familiar relation of Marsden--Weinstein-Meyer symplectic reduction \cite{OR}, although
 $\omega$ is neither closed nor non-degenerate in general (but its kernel and exterior derivative
 are controlled by the axioms \cite{AMM}).
Second, one can associate a unique vector field $V_H$ on $P$
to any $K$-invariant function $H\in C^\infty(P)^K$ by requiring the usual relation
$\iota_{V_H} \omega = dH$ complemented by the condition that $D\mu(V_H)=0$ holds everywhere.
 The resulting `quasi-Hamiltonian vector field' $V_H$ is $K$-invariant and $\mu$ is constant along its integral curves.
Moreover, the projection of $V_H$ on $P(\mu_0)$  is the Hamiltonian
vector field of the `reduced Hamiltonian' $\cH \in C^\infty(P(\mu_0))$ that enjoys the relation
\be
H \circ \iota_0 =\cH \circ \pi_0.
\label{I.4}\ee
Third,  $C^\infty(P)^K$ is a genuine Poisson algebra with the Poisson bracket of two invariant functions
defined by $\{F,H\}:= dF(V_H)$.
In short, Hamiltonian mechanics and reduction works for invariant functions on quasi-Hamiltonian manifolds
 in the same
way as for all functions on symplectic manifolds.

Turning to the relevant example,
let us introduce the following inner product on the Lie algebra $\su(n)$,
\be
\langle X, Y\rangle := - \frac{1}{2} \tr(XY),\quad
\forall X,Y\in \su(n),
\label{T2}\ee
and define the 2-form $\omega$ on $P$ \eqref{I.1} by the formula
\be
\omega :=  \langle A^{-1} \mathrm{d}A \stackrel{\wedge}{,} \mathrm{d}B B^{-1}\rangle
+\langle  \mathrm{d}A A^{-1} \stackrel{\wedge}{,} B^{-1} \mathrm{d}B \rangle
- \langle (AB)^{-1} \mathrm{d} (A B) \stackrel{\wedge}{,} (BA)^{-1} \mathrm{d} (BA)
\rangle.
\label{T3}\ee
With this 2-form and the momentum map $\mu$ \eqref{I.2}, $P= \SU(n) \times \SU(n)$ becomes
a quasi-Hamiltonian manifold, known as the internally fused double of
$\SU(n)$.
It represents an analogue of the cotangent bundle
$T^*\SU(n)$ in the quasi-Hamiltonian world \cite{AMM}.

For any real function $h\in C^\infty(\SU(n))$,  define its $\su(n)$-valued derivative $\nabla h$  by
the relation
\be
\langle \nabla h(g), X \rangle = \dt h(e^{tX} g), \qquad \forall X \in \su(n).
\label{T4}\ee
Then, take an invariant function $h \in C^\infty(\SU(n))^{\SU(n)}$ and consider the functions $h_a, h_b \in C^\infty(P)^{\SU(n)}$ for which
\be
h_a(A,B):= h(A), \qquad
h_b(A,B) = h(B).
\label{T5}\ee
The integral curve of the quasi-Hamiltonian vector field of the function $h_a$ with initial value $(A_0, B_0)$ can be written as
\be
(A(t), B(t)) = (A_0, B_0 e^{-t \nabla h(A_0)})
\label{T6}\ee
and for $h_b$  the integral curve can be written as
\be
(A(t), B(t)) = (A_0 e^{t \nabla h(B_0)}, B_0).
\label{T7}\ee
Letting $\fH_a$ and $\fH_b$ denote the sets of functions $\{h_a\}$ and $\{h_b\}$, we see that these
are two Abelian Poisson algebras of $C^\infty(P)^{\SU(n)}$, and both have functional dimension $(n-1)$.
Upon reduction, they give rise to the Abelian Poisson algebras  $\hat \fH_a$ and $\hat \fH_b$
on any quotient space $P(\mu_0)=\mu^{-1}(\mu_0)/\SU(n)_{\mu_0}$, which is in general a stratified symplectic space.
It is clear that these two Abelian Poisson algebras are equivalent.
In fact, they can be converted into each other by a certain `duality automorphism' \cite{AMM,FKlim} of the double $P$ \eqref{I.1}.

\subsection{Action variables before reduction}\label{Sec22}

Below, we recall a  parametrization of the conjugacy classes of $\SU(n)$ and utilize it for defining
$(n-1)$ invariant functions on $P$ \eqref{I.1}
that are smooth on a dense open subset and
generate  mutually commuting  \emph{$2\pi$-periodic flows}.

It will be convenient for us to formulate the parametrization using the  $(n-1)$-dimensional  simplex $\cA \subset \bR^n$ given by
\be
\cA = \{ \xi = (\xi_1, \dots, \xi_n) \in \bR^n  \mid  \xi_1+\cdots + \xi_n = \pi,\,\, \xi_i\geq 0,\,\forall i=1,\dots, n\}.
\label{T8}\ee
It is well known that  any $g\in \SU(n)$ is conjugate to a unique diagonal matrix of the form
\be
\delta(\xi):= \exp\left(-2 \ri \sum_{j=1}^{n-1} \xi_j \lambda_j\right),  \quad \xi\in \cA,
\label{T9}\ee
where the $\lambda_j$ ($j=1,\dots, n-1$) are the $n\times n$ matrices
\be
\lambda_j:= \sum_{k=1}^{j} E_{k,k} - \frac{j}{n} \1_n
\label{T10}\ee
with the elementary diagonal matrices $E_{k,k}$ and the unit matrix $\1_n$.
The matrix $\delta(\xi)$ can also be  displayed as
\be
\delta(\xi) = \diag(\delta_1(\xi), \ldots, \delta_n(\xi)),
\quad
\delta_1(\xi):=e^{\frac{2\ri}{n}\sum_{j=1}^n j\xi_j}
\quad
\delta_{k+1}(\xi):= e^{2\ri \xi_k} \delta_k(\xi),
\label{T11}\ee
where $\delta_{n+1} := \delta_1$.
This is a special case of
the parametrization of the conjugacy classes of any connected and simply connected, compact  Lie group with simple Lie algebra by a Weyl alcove \cite{DK}.
(In the general case, the matrices $\lambda_j$ are replaced by the fundamental coweights of the corresponding Lie algebra.)

Now, writing $g_1 \sim g_2$ if the two matrices $g_1$ and $g_2$ are similar,
we define the surjective $\SU(n)$ invariant map $\Xi: \SU(n) \to \cA$ by setting
\be
\Xi(g) := \xi \quad \hbox{if} \quad g \sim \delta(\xi).
\label{T12}\ee
The dense subset $\SU(n)_\reg$ of regular elements, matrices with $n$ distinct eigenvalues, is mapped onto the
interior of $\cA$, denoted  $\cA^\reg$.
The map $\Xi$ is globally continuous and its restriction on
$\SU(n)_\reg$ is $C^\infty$ (even real-analytic).  At any point of $\Xi^{-1}(\partial \cA)$, some component functions
of $\Xi$ lose their differentiability.
In fact \cite{DK}, the map $\Xi$ descends to a homeomorphism from $\SU(n)/\Ad_{\SU(n)}$ onto $\cA$
and its restriction to regular elements  yields a diffeomorphism from $\SU(n)_\reg/ \Ad_{\SU(n)}$ onto $\cA^\reg$.

Next, we introduce the map $\beta: P \to \cA$ by
\be
\beta(A,B):= \Xi(B).
\label{T13}\ee
The independent component functions are $\beta_1,\dots, \beta_{n-1}$;  $\beta_n$ is included  for notational convenience.
These functions are invariant under the conjugation action of $\SU(n)$, and their quasi-Hamiltonian vector fields
generate mutually commuting flows on $\SU(n) \times \SU(n)_\reg$.
To describe the flows, for $t = (t_1,\dots, t_{n-1}) \in \bR^{n-1}$ we let $\varrho(t)$ denote the following element
of the standard maximal torus of $\SU(n)$:
\be
\varrho(t):= \exp \left(\ri \sum_{j=1}^{n-1}t_j ( E_{j+1,j+1} - E_{j, j})\right),
\label{T14}\ee
and for every $B \in \SU(n)_\reg$ choose a `diagonalizer' $g(B) \in \U(n)$  for which
\be
g(B) B g(B)^{-1} = \delta(\Xi(B)).
\label{T15}\ee
Notice that $\varrho$ gives an isomorphism from $\bT^{n-1} = \bR^{n-1}/ (2\pi \bZ)^{n-1}$ onto
the maximal torus of $\SU(n)$.
Then,  we find that the joint flow of the commuting  quasi-Hamiltonian vector fields $V_{\beta_j}$
($j=1,\dots, n-1$)
 operates on the initial value $(A,B)$  according to the map
\be
\Psi^b_t:  (A, B) \mapsto ( A g(B)^{-1} \varrho(t) g(B), B).
\label{T16}\ee
This represents a free action of
$\bT^{n-1}$ on $\SU(n) \times \SU(n)_\reg$.

Since $\beta_j = \Xi_j \circ \pi_2$, the formula \eqref{T16} follows from \eqref{T7} and the equality
\be
\nabla \Xi_j(B) = g(B)^{-1} (\ri (E_{j+1, j+1} - E_{j,j})) g(B),
\qquad
\forall j=1,\dots, n-1,\,\,  \forall B\in \SU(n)_\reg,
\label{T17}\ee
which is easily verified.
Although $g(B)$ is well-defined only up to left-multiplication by an arbitrary diagonal unitary matrix, the ambiguity drops out from the \eqref{T17}.
If desired, one may take $g(B)$ from $\SU(n)$, and locally it can be chosen as a smooth function of $B\in \SU(n)_\reg$.

We can view the functions  $\beta_1,\dots, \beta_{n-1}$ as  \emph{action variables} associated
with the Abelian Poisson algebra $\fH_b = \pi_2^* C^\infty(\SU(n))^{\SU(n)}$ consisting
of globally smooth invariant Hamiltonians.
Indeed, it is immediate from the definitions that any $H \in \fH_b$ can be expressed
in terms of these functions,  as $H = \chi \circ \beta$ with a uniquely defined function $\chi$ on $\cA$,
and the  $\beta_j$ generate mutually commuting $2\pi$-periodic flows on $\beta^{-1}(\cA^\reg)$.
The only non-standard feature is that $P$ is not a symplectic, but a quasi-Hamiltonian manifold.

Of course, one can similarly define the map $\alpha(A,B):= \Xi(A)$, which can be used to generate
a free $\bT^{n-1}$ action on $\SU(n)_\reg \times \SU(n)$ and serves as a collection of action
variables behind the Abelian Poisson algebra $\fH_a =\pi_1^* C^\infty(\SU(n))^{\SU(n)}$.

 \subsection{The type (ii) momentum polytopes}\label{Sec23}

 Let us suppose that $0<y < \pi$ is an admissible value, so that
 \be
 P(\mu_0(y)) = \mu^{-1}(\mu_0(y))/ \SU(n)_{\mu_0(y)}
 \label{T18}\ee
 is a connected compact symplectic manifold of dimension $2(n-1)$.
 In all these cases, the Abelian Poisson algebras $\fH_a$ and $\fH_b$ reduce to Abelian
 Poisson algebras of functional dimension $(n-1)$ on $P(\mu_0(y))$, denoted $\hat \fH_a$ and $\hat \fH_b$,
 which represent two  Liouville integrable systems.
 These two systems are equivalent and we focus on $\hat \fH_b$.

 Let us define the map $\hat \beta: P(\mu_0(y)) \to \cA$ by the prescription
 \be
 \beta \circ \iota_0 = \hat \beta \circ \pi_0,
 \label{T19}\ee
 where the notation is the same as in \eqref{I.4}.
 Then, the map
 \be
 (\hat \beta_1, \dots, \hat \beta_{n-1}): P(\mu_0(y)) \to \bR^{n-1}
 \label{T20}\ee
serves as an \emph{action map} for the integrable system
 having the globally smooth commuting Hamiltonians that form $\hat \fH_b$.
This means that all elements of $\hat \fH_b$ can be expressed in terms of the
functions $\hat \beta_1,\dots, \hat \beta_{n-1}$,  and they generate a Hamiltonian action of $\bT^{n-1}$ on
the dense\footnote{It is not entirely obvious that  $\hat \beta^{-1}(\cA^\reg) \subset P(\mu_0(y))$ is
always dense, but this is proved in \cite{FK}.}  open subset $\hat \beta^{-1}(\cA^\reg)$ of $P(\mu_0(y))$.
Instead of the map \eqref{T20}, we also call the equivalent map $\hat \beta$ \eqref{T19} \emph{action map}.
 Moreover, the terms `action map' and `momentum map' are used interchangeably throughout the paper.
These issues should not cause any confusion.

 Recall that an admissible $y$ is called type (ii) value if $\hat \beta^{-1}(\cA^\reg) \subset P(\mu_0(y))$  is a
 \emph{proper subset}.
 The complementary type (i) cases are characterized by Theorem \ref{thmI1}, and are completely understood.
 Considering $\SU(n)$ and a type (ii) parameter $y$, we have to assume that $n\geq 4$ and
 \be
 k \frac{\pi}{n} <  y < (k+1) \frac{\pi}{n}  \quad \hbox{for some} \quad k \in \{ 1, 2,\dots, n-2\}.
 \label{T21}\ee
 The precise range of type (ii) parameters, including for which intervals \eqref{T21}  they occur, is
 given (somewhat implicitly) by Theorem \ref{thmI1}.
Next, we recall  a characterization of the image
\be
\cA_y:= \hat \beta(P(\mu_0(y))).
\label{T22}\ee

\begin{thm}[\cite{FK}]\label{thmT1}
Consider $P(\mu_0(y))$ for $n\geq 4$ with a type (ii) admissible value of $y$ satisfying \eqref{T21}.
Introduce the $(n-1)$-dimensional hyperplane $E \subset \bR^n$ as
\be
E:= \{ \xi \in \bR^n \mid \xi_1 + \dots + \xi_n = \pi \}
\label{T23}\ee
and  extend the definition of the coordinates $\xi_j$ $n$-periodically, i.e., set $\xi_j = \xi_{j+n}$ for all $j\in \bZ$.
Then, the image $\cA_y$ of the action map $\hat \beta$ is the set of points of $E$ for which the
following $2n$ inequalities hold:
\be
\xi_\ell + \xi_{\ell +1} + \cdots + \xi_{\ell + k - 1} \leq y
\quad \hbox{and} \quad
\xi_\ell + \xi_{\ell +1} + \cdots + \xi_{\ell + k } \geq y, \quad \forall \ell=1,\dots,n.
\label{T24}\ee
\end{thm}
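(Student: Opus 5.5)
The plan is to reduce the description of $\cA_y$ to a rank-one spectral problem on the unit circle, and then to turn the resulting interlacing condition into the inequalities \eqref{T24}.

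\emph{Step 1: reduction to a spectral problem.} The moment map equation $ABA^{-1}B^{-1}=\mu_0(y)$ is equivalent to $ABA^{-1}=\mu_0(y)B$. So a point $(A,B)$ exists with $\Xi(B)=\xi$ if and only if some $B\sim\delta(\xi)$ satisfies $\mu_0(y)B\sim\delta(\xi)$. Then $A$ is the conjugating element, which can be normalised into $\SU(n)$ since the determinants agree. Hence
\[
\cA_y=\{\xi\in\cA \mid \exists\, B\sim\delta(\xi)\ \hbox{with}\ \mu_0(y)B\sim \delta(\xi)\}.
\]
Next write $\mu_0(y)=e^{2\ri y}\bigl(\1_n+(e^{-2n\ri y}-1)E_{n,n}\bigr)$. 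This is a scalar times a unitary rank-one perturbation $R=\1_n+(c-1)vv^*$ of the identity, with $v=e_n$ and $c=e^{-2n\ri y}$. By conjugating $B$, the vector $v$ may be taken to be an arbitrary unit vector. The condition becomes: the spectrum of $RB$ equals $e^{-2\ri y}\cdot\mathrm{spec}(B)$.

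\emph{Step 2: multiplicative interlacing.} For unitary $B$ and $R$ as above, I would invoke or reprove the multiplicative analogue of Weyl/Cauchy interlacing, using the secular equation $\det(\1_n-zB^{-1}R^{-1})$ in the resolvent variable. It says that the eigenvalues of $RB$ interlace those of $B$ on the circle: each closed arc between consecutive eigenvalues of $B$ contains exactly one eigenvalue of $RB$, all shifted in the rotational direction fixed by $\arg c$. The constraint $\det(RB)=c\det B$ fixes the total displacement.

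Conversely, any pair of spectra that interlaces in this sense and satisfies the determinant constraint is realised by a suitable unit vector $v$. In the strictly interlacing case this follows by solving the secular equation for the weights $|\langle v,u_j\rangle|^2$, which turn out to be positive. The boundary cases follow by continuity and compactness of $\SU(n)\times S^{2n-1}$.

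\emph{Step 3: translation into \eqref{T24}.} Let the eigenvalues of $\delta(\xi)$ have angles $\theta_1<\dots<\theta_n$ with consecutive gaps $2\xi_j$. The rotated spectrum has angles $\theta_m-2y$. Under \eqref{T21}, $2y$ lies between $2\pi k/n$ and $2\pi(k+1)/n$, so the interlacing forces each rotated point $\theta_\ell-2y$ to fall into the arc obtained by going back $k$ or $k+1$ gaps. That is,
\[
\xi_{\ell}+\dots+\xi_{\ell+k-1}\le y\le \xi_\ell+\dots+\xi_{\ell+k}
\]
after the appropriate cyclic relabelling. The determinant condition is consistent with $\sum\xi_j=\pi$. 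Conversely, these inequalities give an interlacing configuration, so Step 2 supplies $B$ and $v$.

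\emph{Main obstacle.} The delicate point is the bookkeeping in Step 3. One has to check that the index shift is uniformly $k$ for all $\ell$, rather than $k$ for some indices and a different shift for others. This is where the type (ii) range \eqref{T21} and the sum constraint must be used, via a counting or winding-number argument on the circle. One must also handle the degenerate cases where eigenvalues of $B$ coincide ($\xi\in\partial\cA$), in which the interlacing has to be stated with multiplicities. I would treat that case by approximation from the regular case using continuity of $\Xi$.
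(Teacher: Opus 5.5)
Your route is sound, and it differs from the one the paper relies on. The paper does not reprove Theorem~\ref{thmT1}; it imports it from \cite{FK}. The tool it recalls for this purpose is the residue computation \eqref{E3}--\eqref{E5}, which gives the weights $\vert u_\ell\vert^2=z_\ell(\xi,y)$ explicitly. For regular $\xi$, membership in $\cA_y$ then amounts to $z_\ell(\xi,y)\geq 0$ for all $\ell$. The linear inequalities \eqref{T24} must then be extracted from the signs of the factors $\sin(\sum\xi_m-y)$ in \eqref{E5}, compared with $\mathrm{sign}(\sin ny)=(-1)^k$, and non-regular $\xi$ needs separate treatment. Your monotone spectral flow $t\mapsto e^{\ri t vv^\dagger}B$ instead produces the linear inequalities directly and uniformly in $\xi$, including $\xi\in\partial\cA$. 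The explicit weights enter only in your converse, and there you only need their positivity, which is the positivity of $z_\ell$ on the interior. What your route does not deliver by itself is the explicit formula \eqref{E5}, which the paper reuses later (Corollary~\ref{corG12}, Lemma~\ref{lemfs}).

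The step you flag as the main obstacle closes in a few lines, and the uniformity of the shift is automatic.
\begin{itemize}
\item Lift the spectrum of $B$ to a non-decreasing sequence $(\theta_j)_{j\in\bZ}$ with $\theta_{j+n}=\theta_j+2\pi$ and $\theta_{j+1}-\theta_j=2\xi_j$. Write the eigenvalues of $RB$ as $\theta'_j=\theta_j+d_j$ with $0\leq d_j\leq 2\xi_j$, extended $n$-periodically.
\item Here $c=e^{\ri\phi}$ with $\phi:=2(k+1)\pi-2ny\in(0,2\pi)$. The determinant gives $\sum_{j=1}^n d_j\equiv\phi$ mod $2\pi$. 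Since $\sum_{j=1}^n d_j\in[0,2\pi]$, this is in fact an exact equality.
\item Both $(\theta'_j)$ and $(\theta_m-2y)$ are non-decreasing enumerations of the same discrete multiset in $\bR$. Hence $\theta'_j=\theta_{j+q}-2y$ for a single $q\in\bZ$ and all $j$.
\item Summing over $j$ gives $2\pi q-2ny=\phi$, so $q=k+1$. The condition $0\leq\theta_{j+k+1}-\theta_j-2y\leq 2\xi_j$ is then exactly \eqref{T24}.
\end{itemize}

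Some further remarks:
\begin{itemize}
\item The argument works verbatim with multiplicities. For $\xi\in\partial\cA$ you therefore only need the labelled interlacing, which is a closed condition, for arbitrary unitary $B$ with $R$ fixed; you get it by perturbing $B$ alone. Do not approximate inside $\mu^{-1}(\mu_0(y))$, since that would require the density result of \cite{FK}.
\item Nothing uses the type (ii) hypothesis, only $k\pi/n<y<(k+1)\pi/n$. This is consistent with the paper's remark that \eqref{T24} also describes $\cA_y$ in the type (i) cases.
\item In the converse, say explicitly that strict inequalities in \eqref{T24} force all $\xi_j>0$ and strict interlacing. Also note that the strict solution set is non-empty, since it contains $\xi^*$. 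Its closure is therefore the whole polytope, and closedness of $\cA_y$ finishes the proof.
\end{itemize}
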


\begin{rem}\label{remT2}
The two inequalities for fixed $\ell$ imply hat $\xi_{\ell + k} \geq 0$, i.e., the convex polyhedron in $E$ defined by the inequalities
\eqref{T24}
is contained in $\cA$ \eqref{T8}, as it must be the case.
 In particular, $\cA_y$ is a polytope, since it is bounded.
The interior of $\cA_y$ contains the point $\xi^* \in \cA_\reg$ whose components are
\be
\xi^*_j = \frac{\pi}{n} \quad\hbox{ for all}\quad  j=1,\dots,n.
\label{xistar}\ee
Indeed, this point satisfies \eqref{T24} with strict inequalities, and thus $\cA_y$ also contains a neighbourhood of $\xi^*$ in $E$ \eqref{T23}.
By the definition of $y$ being type (ii), the $(n-1)$-dimensional polytope $\cA_y$ intersects the boundary $\partial \cA$.
 Define the action of the cyclic permutation $\sigma \in {\mathrm{S}}_n$ on $E$ by
\be
\sigma(\xi)_j := \xi_{j+1},
\label{T25}\ee
and observe that $\sigma$ preserves the inequalities \eqref{T24}. As a result, $\sigma$ maps $\cA_y$ to $\cA_y$,
and it maps vertices to vertices, edges to edges and so on. This cyclic permutation symmetry of the polytope $\cA_y$  goes
back to the identity
\be
\Xi (g e^{-2\ri \pi/n} ) = \sigma(\Xi(g)), \quad \forall g\in \SU(n),
\label{T26+}\ee
enjoyed by the map $\Xi$ \eqref{T12}, and the  fact that if $(A,B) \in \mu^{-1}(\mu_0(y))$ then so is $(A, B e^{-2\ri \pi/n})$.
\end{rem}

For completeness, let us recall from \cite{FK} what happens
in the type (i) cases, when $y$ varies between $k \frac{\pi}{n}$ with $\gcd(k, n)=1$
and one of its neighbours in  the Farey sequence $\fF_n$.
In fact, then the  conditions \eqref{T24} are still valid, but those $n$ inequalities
that contain sums of $k$ terms imply
the other $n$ inequalities, and  $\cA_y$ becomes a simplex in the interior of $\cA$ \eqref{T8}.

\section{The structure of the fibers of the momentum map }\label{Sec3}

Let us decompose the polytope $\cA_y = \hat \beta(P(\mu_0(y)))$ as
\be
\cA_y = \cA_y^o \cup \partial \cA_y.
\label{G30}\ee
The interior $\cA_y^o$ lies in $\cA^\reg$, while we  further decompose
the boundary $\partial \cA_y$ as
\be
\partial \cA_y = (\partial \cA_y)^\reg \cup (\partial \cA_y)^{\sing}
\quad \hbox{with}\quad (\partial \cA_y)^\sing := \partial \cA_y \cap \partial \cA.
\label{G31}\ee
For $\xi \in \cA_y^o$, the fiber $\hat \beta^{-1}(\xi)$ is a standard Liouville torus $\bT^{n-1}$, and
for $\xi \in (\partial \cA_y)^\reg$ it is an isotropic  torus of smaller dimension.
This follows directly from the results of \cite{FK},
noting that the $\bT^{n-1}$ action \eqref{T16} descends to $\hat \beta^{-1}(\cA_y \cap \cA^\reg)$.
Being orbits of a torus action, these fibers are embedded submanifolds.

\subsection{The singular fibers are smooth isotropic submanifolds}\label{Sec31}

In this section, we will prove that also \emph{ the `singular fibers'
\be
\hat \beta^{-1}(\xi) = \left(\beta^{-1}(\xi) \cap \mu^{-1}(\mu_0(y))\right)/\SU(n)_{\mu_0(y)}
\quad \hbox{for}\quad
\xi \in (\partial \cA_y)^\sing
\label{B1}\ee
are embedded isotropic submanifolds of the phase space $P(\mu_0(y))$. }
More specifically, we shall see that
every fiber $\hat \beta^{-1}(\xi)$ can be identified as a quotient space of
the isotropy group $\SU(n)_{\delta(\xi)}$ by an action of  a subgroup.
Our construction will involve certain choices, but those will turn out irrelevant,
since a subset can only have a single embedded submanifold structure.

Below,  it is assumed that $0<y<\pi$ is a type (ii) admissible parameter.
We shall first consider the corresponding fiber $\beta^{-1}(\xi)$ in $\mu^{-1}(\mu_0(y))$.
The final outcome of our analysis is given by Theorems \ref{thm314} and \ref{thm316} together with Lemma \ref{lem310}, using the
notations introduced in Definition \ref{defn36}.
 The reader will notice that the subsequent analysis is valid  for any  $\xi \in \cA_y$.
It is a good check that for  $\xi \in  (\cA_y \cap \cA^\reg)$  we shall recover the known result.

For any column vector $u\in \bC^n$ of unit length, we introduce
\be
\hat \mu(y,u):= e^{2\ri y}\1_n + (e^{-2(n-1) \ri y} - e^{2\ri y}) u u^\dagger,
\label{G1}\ee
which is an element of $\SU(n)$.
Incidentally, for $e^{2n \ri y}\neq 1$,  $\hat \mu(y,u)$ as given by \eqref{G1}  has unit determinant precisely if $u^\dagger u =1$.

\begin{lem}\label{lemG1}
For any $(A,B) \in \mu^{-1}(\mu_0(y)) \cap \beta^{-1}(\xi)$ there exist elements $u\in \bC^n$ and $g\in \U(n)$ with $g_{jn}=u_j$ for $j=1,\dots, n$ such that
$\delta(\xi) \sim \hat \mu(y,u) \delta(\xi)$ and the pair $(A,B)$ has the form
\be
(A,B) =( g^{-1} \hat A g , g^{-1} \delta(\xi) g),
\label{G2}\ee
where $\hat A$ is an element of $\SU(n)$ satisfying the condition
\be
\hat A \delta(\xi) \hat A^{-1} = \hat \mu(y,u) \delta(\xi).
\label{G3}\ee
\end{lem}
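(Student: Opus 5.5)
Since $\beta(A,B)=\Xi(B)=\xi$, the matrix $B$ is conjugate to $\delta(\xi)$. So there is $g_1\in\SU(n)$ with $B=g_1^{-1}\delta(\xi)g_1$. Set $\hat A_1:=g_1Ag_1^{-1}\in\SU(n)$. Conjugating the moment map condition $ABA^{-1}B^{-1}=\mu_0(y)$ by $g_1$ gives
\[
\hat A_1\delta(\xi)\hat A_1^{-1}\delta(\xi)^{-1}=g_1\mu_0(y)g_1^{-1}.
\]
By \eqref{I.5} we have $\mu_0(y)=e^{2\ri y}\1_n+(e^{-2(n-1)\ri y}-e^{2\ri y})e_ne_n^\dagger$, where $e_n$ is the last standard basis vector. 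Hence $g_1\mu_0(y)g_1^{-1}=\hat\mu(y,g_1e_n)$ by \eqref{G1}. The vector $g_1e_n$ is the last column of $g_1$, so it has unit length and its components are $(g_1)_{jn}$.

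We then take $g:=g_1$ and $u:=g e_n$, so $u_j=g_{jn}$, and $\hat A:=\hat A_1$. This gives \eqref{G2} directly. Multiplying the displayed identity on the right by $\delta(\xi)$ yields \eqref{G3}, namely $\hat A\delta(\xi)\hat A^{-1}=\hat\mu(y,u)\delta(\xi)$. The left-hand side is conjugate to $\delta(\xi)$, so $\delta(\xi)\sim\hat\mu(y,u)\delta(\xi)$.

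The lemma allows $g\in\U(n)$, but no extra freedom is needed: any unitary diagonalizer can be rescaled by a phase to lie in $\SU(n)$. The statement $g^{-1}\delta(\xi)g=B$ and the relation between $u$ and the last column are unaffected by such a rescaling, because a scalar phase cancels in $uu^\dagger$ and in the conjugations. The only mildly delicate point is checking that conjugating $\mu_0(y)$ by $g$ produces exactly $\hat\mu(y,u)$ with $u$ the $n$-th column of $g$. This follows from $g E_{n,n}g^{-1}=(ge_n)(ge_n)^\dagger$ for unitary $g$. There is no real obstacle; the argument is essentially a change of variables.
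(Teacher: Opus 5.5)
Your proposal is correct and follows essentially the same route as the paper: diagonalize $B$ by a unitary $g$ with $gBg^{-1}=\delta(\xi)$, set $\hat A = gAg^{-1}$, and use $g\mu_0(y)g^{-1}=\hat\mu(y,u)$, with $u$ the last column of $g$, to turn the moment map condition into \eqref{G3}. Your remark that the similarity $\delta(\xi)\sim\hat\mu(y,u)\delta(\xi)$ follows directly from \eqref{G3} makes explicit a step the paper leaves implicit.
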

\begin{proof}
Suppose that
\be
AB A^{-1} B^{-1} = \mu_0(y)
\label{G4}\ee
and
 take a unitary matrix for which $g B g^{-1} = \delta(\xi)$. Defining $\hat A := g A g^{-1}$ and letting $u$ denote the last column of $g$,
 the condition \eqref{G4} is readily seen to be equivalent to \eqref{G3}, because one has due to \eqref{I.5}:
 \be
 g \mu_0(y) g^{-1}  = \hat \mu(y,u).
 \label{G5}\ee
 \end{proof}

\begin{lem}\label{lemG2}
The intersection  $\mu^{-1}(\mu_0(y))\cap \beta^{-1}(\xi)$ is the set of elements of the form
\be
(A,B) = (g^{-1} \hat A_u T g, g^{-1} \delta(\xi) g),
\label{G6}\ee
where $g\in \U(n)$ is an arbitrary unitary matrix whose last column $u$ satisfies the similarity relation
\be
\delta(\xi) \sim \hat \mu(y,u) \delta(\xi),
\label{G7}\ee
$\hat A_u \in \SU(n)$ is any fixed solution of \eqref{G3},
and $T\in \SU(n)_{\delta(\xi)}$ is arbitrary.
\end{lem}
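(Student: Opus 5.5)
We prove the two inclusions separately, relying on Lemma \ref{lemG1} and on the fact that the solutions of \eqref{G3} form a coset of $\SU(n)_{\delta(\xi)}$.

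\emph{Every point of the intersection has the form \eqref{G6}.} Let $(A,B) \in \mu^{-1}(\mu_0(y))\cap\beta^{-1}(\xi)$. Lemma \ref{lemG1} gives $g\in\U(n)$ with last column $u$ such that \eqref{G7} holds, together with some $\hat A\in\SU(n)$ solving \eqref{G3}, with $(A,B)=(g^{-1}\hat A g, g^{-1}\delta(\xi) g)$. Since \eqref{G7} holds, \eqref{G3} is solvable for this $u$, and we fix one solution $\hat A_u$. Put $T:=\hat A_u^{-1}\hat A$. Both $\hat A$ and $\hat A_u$ conjugate $\delta(\xi)$ to the same element $\hat\mu(y,u)\delta(\xi)$, so $T\delta(\xi)T^{-1}=\delta(\xi)$. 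Hence $T\in\SU(n)_{\delta(\xi)}$, and $\hat A=\hat A_u T$, which is \eqref{G6}.

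\emph{Every element of the form \eqref{G6} lies in the intersection.} Take $g\in\U(n)$ whose last column $u$ satisfies \eqref{G7}, any solution $\hat A_u$ of \eqref{G3}, and any $T\in\SU(n)_{\delta(\xi)}$.
\begin{itemize}[leftmargin=*]
\item The pair lies in $\SU(n)\times\SU(n)$. Conjugation by a unitary $g$ preserves determinant and unitarity, and $\hat A_u T\in \SU(n)$.
\item It lies in $\beta^{-1}(\xi)$. We have $B=g^{-1}\delta(\xi)g\sim\delta(\xi)$, so $\Xi(B)=\xi$.
\item It lies in $\mu^{-1}(\mu_0(y))$. Compute
\[
AB A^{-1}B^{-1} = g^{-1}\hat A_u T\delta(\xi)T^{-1}\hat A_u^{-1}\delta(\xi)^{-1} g = g^{-1}\hat\mu(y,u) g .
\]
By \eqref{G5}, read backwards, this equals $\mu_0(y)$.
\end{itemize}

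The only point needing care is \eqref{G5} itself. The identity $g\mu_0(y)g^{-1}=\hat\mu(y,u)$ follows from writing $\mu_0(y)=e^{2\ri y}\1_n+(e^{-2(n-1)\ri y}-e^{2\ri y})E_{n,n}$ and using $gE_{n,n}g^{-1}=uu^\dagger$ for unitary $g$ with last column $u$. I will also add a remark that the choice of the fixed solution $\hat A_u$ is immaterial, since any two solutions differ by right multiplication by an element of $\SU(n)_{\delta(\xi)}$. The argument is routine, and no step should present a real obstacle.
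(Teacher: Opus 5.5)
Your proof is correct and follows essentially the same route as the paper. The forward inclusion uses Lemma \ref{lemG1} together with the fact that the solutions of \eqref{G3} form the coset $\hat A_u\,\SU(n)_{\delta(\xi)}$. The converse conjugates \eqref{G3} by $g^{-1}$ and uses $\hat\mu(y,u)=g\mu_0(y)g^{-1}$. Your explicit check of \eqref{G5} via $gE_{n,n}g^{-1}=uu^\dagger$, and of membership in $\SU(n)\times\SU(n)$ and in $\beta^{-1}(\xi)$, only spells out steps the paper leaves implicit.
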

\begin{proof}
By Lemma \ref{lemG1}, any $(A,B)\in \mu^{-1}(\mu_0(y))\cap \beta^{-1}(\xi)$ has the form \eqref{G6} since the solutions
of \eqref{G3} are the elements $\hat A = \hat A_u T$ with a particular solution $\hat A_u$ and any $T\in \SU(n)_{\delta(\xi)}$.

To see the converse, consider $(A,B)$ given \eqref{G6} and notice that
\be
 (\hat A_u T) \delta(\xi) (\hat A_u T)^{-1} = \hat \mu(y,u) \delta(\xi)
 \label{G8}\ee
 holds. Conjugating this by $g^{-1}$, and defining $(A,B)$ by \eqref{G6},  gives
 \be
 A B A^{-1} = (g^{-1} \hat \mu(y,u) g) B,
 \label{G9}\ee
 which is just the condition \eqref{G4} since $\hat \mu(y,u) = g \mu_0(y) g^{-1}$ for any $g\in \U(n)$ having $u$ as
 its last column. Therefore, any $(A,B)$ given by \eqref{G6} belongs to $\mu^{-1}(\mu_0(y))\cap \beta^{-1}(\xi)$.
 \end{proof}

It is worth stressing that  in Lemma \ref{lemG2}  $u$ is only constrained by \eqref{G7}.
For fixed $u$, $\hat A_u$ is chosen, while $g\in \U(n)$ still varies subject to $g_{jn}=u_j$, $\forall j$.
To control the fiber, we have to pin down  the possible vectors $u$  and the possible matrices $g$.

\begin{lem}\label{lemG3}
For fixed $\xi \in \cA_y$,  let us choose a vector
$u_0\in \bC^n$ that verifies the relation
\be
\delta(\xi) \sim \hat \mu(y,u_0) \delta(\xi).
\label{G10}\ee
Then,  $u\in \bC^n$ satisfies \eqref{G7} if and only if it can be written in the form
\be
u = X u_0
\quad \hbox{with}\quad X \in \U(n)_{\delta(\xi)}.
\label{G11}\ee
\end{lem}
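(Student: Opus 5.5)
The plan is to handle the two directions separately. The "if" direction is a direct conjugation argument. For the "only if" direction I will reduce the similarity relation \eqref{G7} to an equality of characteristic polynomials, and then read off, via the matrix determinant lemma, the norms of the projections of $u$ onto the eigenspaces of $\delta(\xi)$.

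\emph{If.} Let $u = Xu_0$ with $X\in \U(n)_{\delta(\xi)}$. Since $X$ is unitary, $X u_0 u_0^\dagger X^{-1} = uu^\dagger$, so $\hat\mu(y,u) = X\hat\mu(y,u_0)X^{-1}$ by \eqref{G1}. Because $X$ commutes with $\delta(\xi)$, we get
\[
\hat\mu(y,u)\delta(\xi) = X\,\hat\mu(y,u_0)\delta(\xi)\,X^{-1}.
\]
The right-hand side is similar to $\hat\mu(y,u_0)\delta(\xi)$, which by \eqref{G10} is similar to $\delta(\xi)$. Note also that $u$ has unit length, as required.

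\emph{Only if.} Two unitary matrices are similar exactly when they have the same characteristic polynomial, since both are diagonalizable. So \eqref{G7} is equivalent to equality of the characteristic polynomials of $\hat\mu(y,u)\delta(\xi)$ and $\hat\mu(y,u_0)\delta(\xi)$. Write
\[
\hat\mu(y,u)\delta(\xi) = e^{2\ri y}\delta(\xi) + e^{2\ri y} c\, uu^\dagger \delta(\xi), \qquad c := e^{-2n\ri y}-1.
\]
For $y$ admissible we have $c\neq 0$ by \eqref{I.6}. The matrix determinant lemma then gives
\[
\det\bigl(\lambda\1_n - \hat\mu(y,u)\delta(\xi)\bigr) = \det\bigl(\lambda\1_n - e^{2\ri y}\delta(\xi)\bigr)\Bigl(1 - c\,e^{2\ri y}f_u(\lambda)\Bigr),
\]
where
\[
f_u(\lambda) = \sum_{j=1}^n \frac{\delta_j(\xi)\,|u_j|^2}{\lambda - e^{2\ri y}\delta_j(\xi)} = \sum_{d} \frac{d\, w_d(u)}{\lambda - e^{2\ri y} d}.
\]
In the second sum, $d$ runs over the distinct eigenvalues of $\delta(\xi)$, and $w_d(u) := \sum_{j:\delta_j(\xi)=d}|u_j|^2$ is the squared norm of the orthogonal projection of $u$ onto the $d$-eigenspace.

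Equality of the characteristic polynomials for $u$ and $u_0$ therefore forces $f_u = f_{u_0}$ as rational functions, after dividing by the common nonzero factor $\det(\lambda\1_n - e^{2\ri y}\delta(\xi))$ and by $c\,e^{2\ri y}\neq 0$. The poles $e^{2\ri y}d$ are pairwise distinct, and the residues $d\,w_d$ have $|d|=1$. Uniqueness of partial fraction decompositions thus gives $w_d(u)=w_d(u_0)$ for every $d$.

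It remains to build $X$. The group $\U(n)_{\delta(\xi)}$ is exactly the group of block-diagonal unitaries acting independently on each eigenspace of $\delta(\xi)$, namely a product of groups $\U(m_d)$. Since $\U(m_d)$ acts transitively on each sphere in $\bC^{m_d}$, we can choose in each block a unitary carrying the projection of $u_0$ to that of $u$. This uses that the two projections have equal norm, and it includes the case where both vanish. The resulting $X$ satisfies $u = Xu_0$.

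I expect the only delicate point to be the role of admissibility. One needs $c\neq 0$; otherwise the characteristic polynomial carries no information about $u$. The distinctness of the poles is automatic once we group the coordinates of $u$ by eigenvalue, so no further hypothesis is needed there.
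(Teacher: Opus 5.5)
Your proof is correct and follows essentially the same route as the paper. You reduce the similarity relation to equality of characteristic polynomials, use the rank-one determinant identity to get a partial-fraction identity whose residues at the distinct poles $e^{2\ri y}d$ fix the eigenspace weights $\sum_{j:\delta_j(\xi)=d}|u_j|^2$, and then use transitivity of $\U(m_d)$ on spheres. The only cosmetic difference is that you compare $u$ directly with $u_0$ instead of computing the residues explicitly as in \eqref{E4} and \eqref{E7}, which spares you the case split into regular and non-regular $\xi$.
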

\begin{proof}
The relations \eqref{G10} and \eqref{G11}  imply immediately that $u=X u_0$ satisfies \eqref{G7}.
 We have to prove the converse, i.e., that any $u$ subject to \eqref{G7} has the form
in \eqref{G11}. For this purpose, note that the equality of the characteristic polynomials of $\delta(\xi)$
and $\hat \mu(y,u) \delta(\xi)$ is equivalent to the similarity relation \eqref{G7}, and it
 can be  spelled out as
\be
\prod_{j=1}^n (\delta_j(\xi) -\zeta )=\prod_{j=1}^n (\delta_j(\xi) e^{2\ri y}- \zeta)
+ (e^{2\ri(1-n) y}-e^{2\ri y}) \sum_{k=1}^n \Bigl(\vert u_k \vert^2 \delta_{k}(\xi)
\prod_{\substack{j=1\\j \neq k}}^n (\delta_j (\xi)e^{2\ri  y} -\zeta)\Bigr).
\label{E1}\ee
This results by calculating $\det(\hat \mu(y,u) - \zeta \1_n)$
 using the well known identity
 $\det(\1_n + v w^T )= 1 + w^T v$, which holds for any column vectors $v, w \in \bC^n$.
Alternatively, \eqref{E1} equivalent to the following equality of rational functions on the complex plane:
\be
\prod_{j=1}^n \frac{\delta_j(\xi) -\zeta }{\delta_j(\xi)e^{2\ri  y} -\zeta}=
1 +  (e^{2\ri(1-n) y}-e^{2\ri y}) \sum_{k=1}^n \frac{\vert u_k \vert^2 \delta_{k}(\xi)}{\delta_k (\xi)e^{2\ri  y} -\zeta}.
\label{E2}\ee
In turn, this can be rewritten as
\be
\frac{\prod_{j=1}^n (\delta_j(\xi) -\zeta) - \prod_{j=1}^n (\delta_j(\xi)e^{2\ri  y} -\zeta)}{ \prod_{j=1}^n (\delta_j(\xi)e^{2\ri  y} -\zeta)}
= (e^{2\ri(1-n) y}-e^{2\ri y}) \sum_{k=1}^n \frac{\vert u_k \vert^2 \delta_{k}(\xi)}{\delta_k (\xi)e^{2\ri  y} -\zeta}.
\label{E3}\ee
Consequently, $\xi$ must be such that the left-hand side has only first order poles,   at the distinct values of $\delta_j(x)^{2\ri y}$,
and $u$ is constrained by the equality of the residues at those poles.

If $\xi \in \cA_y \cap \cA^\reg$, then $n$ distinct poles occur, and by comparing the residues one obtains
\be
\vert u_\ell \vert^2 = z_\ell(\xi,y), \qquad \forall \ell=1,\dots, n,
\label{E4}\ee
with the functions
\be
z_\ell(\xi,y) :=
      \frac{\sin(y)}{\sin(ny)}
      \prod^n_{\substack{j=1\\j \neq \ell}} \frac{e^{\ri y}\delta_\ell(\xi) - e^{-\ri y}\delta_j(\xi)}
                           {\delta_\ell(\xi) - \delta_j(\xi)}= \frac{\sin(y)}{\sin(ny)}
\prod_{j=\ell+1}^{\ell+ n-1}
\left[  \frac{\sin(\sum_{m=\ell}^{j-1}\xi_m - y)}{\sin(\sum_{m=\ell}^{j-1} \xi_m)} \right].
\label{E5}\ee
Observe that these functions are strictly positive in the interior of the polytope $\cA_y$, where all the inequalities \eqref{T24}
hold in the strict sense. (In fact \cite{FK}, the property $z_\ell(\xi,y)>0$, $\forall \ell$ characterizes the interior of the polytope $\cA_y$.)

Now, suppose that $\xi \in \cA_y \cap \partial \cA$, i.e., $\delta(\xi)$ is not a regular element.
Then, $\delta(\xi)$ has $1\leq r<n$ distinct eigenvalues, which we denote
as $\Delta_1(\xi),\dots, \Delta_r(\xi)$.
Furthermore, suppose that $\Delta_k(\xi)$ occurs with multiplicity $m_k$, $m_1+\cdots + m_r=n$,
 and let $\cM_k \subset \{1,\dots, n\}$ be the subset of $m_k$ indices satisfying
\be
\delta_j(\xi) = \Delta_k(\xi), \quad \forall j\in \cM_k.
\label{E6}\ee
In this case, the equality of the residues gives
 \be
(e^{2\ri(1-n) y}-e^{2\ri y})\Delta_k(\xi)  \sum_{j \in \cM_k}\vert u_j \vert^2 =
-\mathrm{Res}_{\zeta=\zeta_k} F(\zeta, \xi, y), \quad \hbox{for}\quad \zeta_k :=  e^{2\ri y} \Delta_k(\xi),
\label{E7}\ee
where $F(\zeta,\xi,y)$ is the function that appears on the left-hand side of equation \eqref{E3}.

Notice that $\U(n)_{\delta(\xi)} = \bT^n$, the diagonal subgroup, if $\delta(\xi)$ is regular.
If $\delta(\xi)$ is not regular, then (possibly up to an internal automorphism of $\U(n)$ that
brings all equal eigenvalues of $\delta(\xi)$  into neighbouring positions\footnote{This is needed if
$\xi_1+ \cdots + \xi_j =\pi$ for some $j\leq (n-1)$, since in such cases $\delta_1(\xi) = \delta_n(\xi)$.}) one has
\be
\U(n)_{\delta(\xi)} \simeq \U(m_1) \times \U(m_2) \times \cdots \times \U(m_r).
\label{E8}\ee
In either case, the form of the constraints \eqref{E4}  and \eqref{E7} implies immediately that any two vectors
$u\in \bC^n$ that solve the similarity relation \eqref{G7}
can be transformed into each other by multiplication by an element of $\U(n)_{\delta(\xi)}$.
Indeed, this follows since any two $\bC^m$ vectors of the same length can be transformed into each other
by an element of $\U(m)$.
\end{proof}

It is worth noting that the equality $\sum_{\ell =1}^n  \vert u_\ell \vert^2 =1$ follows by evaluation of \eqref{E1}
at $\zeta=0$ for any $\xi \in \cA_y$, and the sums $\sum_{j \in \cM_k}\vert u_j \vert^2$ can be computed
 from \eqref{E3}, \eqref{E7} for any $\xi \in \cA_y$.
We shall also use that, in consequence of \eqref{E4},  $\sum_{\ell=1}^n z_\ell(\xi,y) =1$ holds for
 $\xi\in \cA_y \cap \cA^\reg$.

\begin{rem}
The formula \eqref{E2} is a particular case of a so-called Weinstein-Aronszajn determinant, which also works for certain operators in Banach spaces, see \cite{Kato}.
If $T = T_0 + V$ is a finite rank perturbation of $T_0$ then the resolvent $R(\zeta,T)$ of $T$ can
be written in terms of the resolvent $R(\zeta, T_0)$ of the unperturbed operator $T_0$ as
\be
   R(\zeta,T) = R(\zeta,T_0) - R(\zeta, T_0) V ( I + VR(\zeta,T_0))^{-1} R(\zeta, T_0)
\ee
and $\omega(\zeta) = \det( I + V R(\zeta, T_0))$ is the Weinstein-Aronszajn determinant.
It is a meromorphic function of $\zeta$ and its residues determine the spectral projections. The Laurent expansion of
 $\omega(\zeta)$ has poles of higher order when the unperturbed operator has eigenvalues with multiplicity.
 The simplest case of a rank one perturbation leads to the present formula \eqref{E2}. The Weinstein-Aronszajn determinant has been used to great effect in
 the work of Moser \cite{Moser79ChernSymposium} on integrable systems.
\end{rem}

\begin{lem}\label{lemG4}
Keeping $u_0$ from Lemma \ref{lemG3}, let $g_0$ be a unitary matrix having $u_0$ as its last column.
Then, the unitary matrices with last column $u = X u_0$ \eqref{G11} are given by the expression
\be
g = X g_0 \eta^{-1},
\label{G12}\ee
where $\eta \in \U(n)$ verifies $\eta_{j,n} = \eta_{n,j} = \delta_{j,n}$ for all $j=1,\dots, n$.
\end{lem}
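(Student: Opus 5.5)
The plan is to prove the two inclusions directly from the structure of $\U(n)$ as a group acting on its last column.

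First, any matrix of the form \eqref{G12} has last column $u$. Take $\eta$ with $\eta_{j,n}=\eta_{n,j}=\delta_{j,n}$. Then $\eta^{-1}$ has the same block form, so $\eta^{-1} e_n = e_n$, where $e_n$ is the $n$-th standard basis vector. Hence
\[
g e_n = X g_0 \eta^{-1} e_n = X g_0 e_n = X u_0 = u,
\]
and $g$ is unitary as a product of unitary matrices.

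Conversely, let $g\in \U(n)$ have last column $u = X u_0$. Set $h := X g_0$. This matrix is unitary and also has last column $X u_0 = u$. Consider
\[
\eta := g^{-1} h = g^{-1} X g_0 .
\]
We have $\eta e_n = g^{-1} h e_n = g^{-1} u = e_n$, since $g e_n = u$. So the last column of $\eta$ is $e_n$. Because $\eta$ is unitary, its columns are orthonormal, so every other column is orthogonal to $e_n$ and has vanishing $n$-th entry. Thus $\eta_{n,j} = \delta_{j,n}$, and together with $\eta_{j,n}=\delta_{j,n}$ this gives the required block form $\eta \in \U(n-1)\times \U(1)$ with trivial $\U(1)$ factor. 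Rearranging $\eta = g^{-1} X g_0$ yields $g = X g_0 \eta^{-1}$, which is \eqref{G12}.

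There is no real obstacle here. The only point needing care is that fixing the last column forces the last row via unitarity, and that the set of such $\eta$ is a subgroup isomorphic to $\U(n-1)$, so $\eta$ versus $\eta^{-1}$ is immaterial. For the later use, it is worth remarking that the parametrization is not unique: the pair $(X,\eta)$ can be replaced by $(X Y, \eta')$ whenever $Y\in \U(n)_{\delta(\xi)}$ fixes $u_0$, with a compensating change of $\eta$. This redundancy is presumably what produces the stabilizer subgroup $\SU(n)_\xi^{u_0}$ appearing in Theorem \ref{thmmain}.
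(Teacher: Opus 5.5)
Your proof is correct and is essentially the paper's argument. The paper likewise sets $\eta = g^\dagger X g_0$, uses that $g$ and $Xg_0$ share the last column (together with unitarity of $g$) to get $\eta_{j,n}=\delta_{j,n}$, deduces $\eta_{n,j}=\delta_{n,j}$ from unitarity of $\eta$, and notes that the converse is immediate.
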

\begin{proof}
Any unitary matrix $g$ can be written as in \eqref{G12} with
$\eta = g^\dagger Xg_0$. If $g$ and $X g_0$ have the same last column, then we have
\be
\eta_{j,n} = \sum_{k=1}^n \bar g_{k,j} (Xg_0)_{k,n} = \sum_{k=1}^n \bar g_{k,j} g_{k,n} = \delta_{j,n},
\label{G15}\ee
because $g$ is unitary.  Then, the relation $\eta_{n,j} = \delta_{n,j}$ follows from the unitarity of $\eta$.
It is also obvious that $g$ and $Xg_0$ have the same last column if $\eta_{j,n} = \delta_{j,n}$ is valid.
\end{proof}

\begin{rem}\label{remG5}
Note that $\eta$ in \eqref{G12} satisfies $\eta \mu_0(y) \eta^{-1} = \mu_0(y)$ and conjugation by $\eta$
is the same as conjugation by $\tilde \eta \in \SU(n)_{\mu_0(y)}$ given by
$\tilde \eta =  c \eta$ with any $c\in \bC$ verifying $c^n = (\det \eta)^{-1}$.
\end{rem}

\begin{defn}\label{defn36}
For any fixed $\xi \in \cA_y$, choose a unitary matrix $g_0$ whose last column $u_0$ satisfies the similarity relation
$\delta(\xi) \sim \hat \mu(y,u_0) \delta(\xi)$ and also choose an element $\hat A_0\equiv \hat A_{u_0}\in \SU(n)$ verifying
 $\hat A_0 \delta(\xi) \hat A_0^{-1}= \hat \mu(y,u_0) \delta(\xi)$.
 Then, define  the subset $M(\xi)  \subset \beta^{-1}(\xi)\cap \mu^{-1}(\mu_0(y))$ by
 \be
M(\xi):= \{ (g_0^{-1} \hat A_0 T g_0, g_0^{-1} \delta(\xi) g_0) \mid T \in \SU(n)_{\delta(\xi)} \},
\label{G20}\ee
and  introduce the closed Lie subgroups
\be
\SU(n)^{u_0}_{\delta(\xi)}:= \{ \eta \in \SU(n)\mid \eta \delta(\xi) \eta^{-1} = \delta(\xi),\, \eta \hat\mu(y, u_0) \eta^{-1} = \hat \mu(y, u_0)\}
< \SU(n)
\label{SUnu0}\ee
and
\be
K^{g_0}_\xi := g_0^{-1} \SU(n)^{u_0}_{\delta(\xi)} g_0 < \SU(n)_ {\mu_0(y)}.
\label{Ku0}\ee
\end{defn}

\begin{rem}
 We suppressed the dependence of $M(\xi)$ on the choices of $g_0$ and $\hat A_0$, which will turn out to be irrelevant,
 as is further discussed in Appendix \ref{AppB}.
The notations introduced in \eqref{SUnu0} and \eqref{Ku0} are just convenient shorthands introduced
in order to economize on writing out the intersections of isotropy groups:
\be
\SU(n)^{u_0}_{\delta(\xi)} \equiv \SU(n)_{\delta(\xi)} \cap \SU(n)_{\hat \mu(y,u_0)}
\quad \hbox{and}\quad
K_\xi^{g_0} \equiv \SU(n)_{g_0^{-1} \delta(\xi) g_0} \cap \SU(n)_{\mu_0(y)}.
\ee
Later we shall use that $\SU(n)^{u_0}_{\delta(\xi)}$ consists of the elements of
$\SU(n)_{\delta(\xi)}$ that have $u_0$ as an eigenvector. This holds since
$u_0 u_0^\dagger = \eta u_0 (\eta u_0)^\dagger$ for $\eta \in \U(n)$ is equivalent to $\eta u_0 =  \Gamma u_0$ for some $\Gamma \in \U(1)$.
\end{rem}

\begin{lem}\label{lemG8}
The action of the subgroup $K^{g_0}_\xi < \SU(n)_{\mu_0(y)}$ \eqref{Ku0} preserves $M(\xi)$ \eqref{G20}.
Even more, if $(A,B)\in M(\xi)$ and $\eta \in \SU(n)_{\mu_0(y)}$  are such that
$(\eta A \eta^{-1}, \eta B \eta^{-1})$ belongs to $M(\xi)$, then $\eta\in  K^{g_0}_\xi$.
\end{lem}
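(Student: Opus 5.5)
The plan is to compute directly, conjugating everything by $g_0$ so that the question is moved to the "diagonal frame". Write an element of $\SU(n)_{\mu_0(y)}$ as $\eta$, and put $\tilde\eta := g_0 \eta g_0^{-1}$. Since $g_0 \mu_0(y) g_0^{-1} = \hat\mu(y,u_0)$ (as in \eqref{G5}), $\eta$ commutes with $\mu_0(y)$ if and only if $\tilde\eta$ commutes with $\hat\mu(y,u_0)$. By the definition \eqref{Ku0}, $\eta \in K^{g_0}_\xi$ is then equivalent to $\tilde\eta \in \SU(n)^{u_0}_{\delta(\xi)}$. For this equivalence one should note that $\tilde\eta$ has unit determinant even though $g_0$ is only in $\U(n)$, because conjugation preserves the determinant.

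\textbf{First claim (invariance).} Take $(A,B) = (g_0^{-1}\hat A_0 T g_0,\, g_0^{-1}\delta(\xi) g_0)$ with $T\in \SU(n)_{\delta(\xi)}$, and $\eta\in K^{g_0}_\xi$. Then
\[
\eta B\eta^{-1} = g_0^{-1}\tilde\eta\,\delta(\xi)\,\tilde\eta^{-1} g_0 = B,
\qquad
\eta A \eta^{-1} = g_0^{-1}\,\tilde\eta \hat A_0 T \tilde\eta^{-1}\, g_0 .
\]
It therefore suffices to show that $\tilde\eta \hat A_0 T\tilde\eta^{-1} = \hat A_0 T'$ for some $T'\in\SU(n)_{\delta(\xi)}$, i.e., that $\hat A_0^{-1}\tilde\eta\hat A_0\, T\tilde\eta^{-1}$ commutes with $\delta(\xi)$. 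Since $T$ and $\tilde\eta^{-1}$ already commute with $\delta(\xi)$, the key point is that $\hat A_0^{-1}\tilde\eta \hat A_0$ commutes with $\delta(\xi)$. Using \eqref{G3}, this is equivalent to $\tilde\eta$ commuting with $\hat A_0\delta(\xi)\hat A_0^{-1} = \hat\mu(y,u_0)\delta(\xi)$, which holds because $\tilde\eta$ commutes with both factors. This completes the first claim.

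\textbf{Converse.} Suppose $(A,B)\in M(\xi)$, $\eta\in\SU(n)_{\mu_0(y)}$, and $(\eta A\eta^{-1},\eta B\eta^{-1})\in M(\xi)$. Comparing second components gives $\eta B\eta^{-1} = B$, since every element of $M(\xi)$ has the same second component $g_0^{-1}\delta(\xi)g_0$. Hence $\tilde\eta$ commutes with $\delta(\xi)$. Combined with $\eta\in\SU(n)_{\mu_0(y)}$, which gives $[\tilde\eta,\hat\mu(y,u_0)]=0$, this yields $\tilde\eta\in\SU(n)^{u_0}_{\delta(\xi)}$, so $\eta\in K^{g_0}_\xi$.

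The only step that needs any care is the invariance of the first component, namely checking that $\tilde\eta$ normalizes the coset $\hat A_0\,\SU(n)_{\delta(\xi)}$. That step reduces to the single commutation $[\tilde\eta, \hat\mu(y,u_0)\delta(\xi)]=0$ via \eqref{G3}. Everything else is bookkeeping of conjugations by $g_0$.
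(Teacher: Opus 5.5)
Your proof is correct. The converse is exactly the paper's argument: fixing the second component forces $g_0\eta g_0^{-1}$ into $\SU(n)_{\delta(\xi)}$, and $\eta\in\SU(n)_{\mu_0(y)}$ supplies commutation with $\hat\mu(y,u_0)$.

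For the invariance claim you take a different, more hands-on route. The paper argues structurally. It uses that $M(\xi)$ is the set of \emph{all} points of $\mu^{-1}(\mu_0(y))\cap\beta^{-1}(\xi)$ whose second component equals $g_0^{-1}\delta(\xi)g_0$. The paper cites Lemma \ref{lemG2} for this; the quickest justification is that $g_0Ag_0^{-1}$ then solves \eqref{G3} with $u=u_0$, and hence lies in $\hat A_0\,\SU(n)_{\delta(\xi)}$. Since $K^{g_0}_\xi$ preserves the level set by equivariance of $\mu$ and fixes that second component, invariance follows without ever touching $\hat A_0$.

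You instead verify directly that $\tilde\eta$ normalizes the coset $\hat A_0\,\SU(n)_{\delta(\xi)}$, via the single commutation $[\tilde\eta,\hat\mu(y,u_0)\delta(\xi)]=0$. Your route is self-contained. As a by-product it proves that $\Theta_0(X)=\hat A_0^{-1}X\hat A_0\in\SU(n)_{\delta(\xi)}$, which the paper only asserts as readily checked in Lemma \ref{lem310}. The paper's route is shorter, and it makes clear that $M(\xi)$ is intrinsically the slice of the level set at fixed $B$, so as a set it does not depend on the choice of $\hat A_0$.
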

\begin{proof}
According to Lemma \ref{lemG2}, $M(\xi)$ contains all elements $(A,B) \in \mu^{-1} (\mu_0(\xi)) \cap \beta^{-1}(\xi)$
for which $B = g_0^{-1} \delta(\xi) g_0$, which immediately implies the first statement.
Notice that if $(A,B)$ and $ (\eta A\eta^{-1} ,\eta  B \eta^{-1})$ are in  $M(\xi)$,
then $\eta \in g_0^{-1} \SU(n)_{\delta(\xi)} g_0$.
Since $K_\xi^{g_0} = g_0^{-1} \SU(n)_{\delta (\xi)} g_0 \cap  \SU(n)_{\mu_0(y)}$, the second statement follows.
\end{proof}

\begin{lem}
Let us make $M(\xi)$ \eqref{G20} into a smooth manifold via its parametrization by the elements of $\SU(n)_{\delta(\xi)}$.
Then, the mapping
\be
j_\xi: M(\xi) \to \mu^{-1}(\mu_0(y))
\label{B4}\ee
given by the tautological injection is a smooth embedding. The restriction of the action of $\SU(n)_{\mu_0(y)}$ on $M(\xi)$
yields  a smooth action
\be
K_\xi^{g_0} \times M(\xi) \to M(\xi),
\label{B7}\ee
and the quotient space
$M(\xi)/K_\xi^{g_0}$ admits a unique smooth manifold structure for which the canonical projection
$\pi_\xi: M(\xi) \to M(\xi)/K_\xi^{g_0}$ is a smooth submersion.
\end{lem}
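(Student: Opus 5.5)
The plan is to realize $M(\xi)$ as the image of the map
\[
\phi: \SU(n)_{\delta(\xi)} \to \SU(n)\times\SU(n), \qquad \phi(T) := \bigl(g_0^{-1}\hat A_0 T g_0,\; g_0^{-1}\delta(\xi) g_0\bigr).
\]
This map is the restriction to the closed (hence embedded) Lie subgroup $\SU(n)_{\delta(\xi)}$ of the map $T\mapsto (g_0^{-1}\hat A_0 T g_0, B_0)$ on $\SU(n)$. That larger map is a composition of left and right translations and a conjugation, so it is a diffeomorphism of $\SU(n)$ onto $\SU(n)\times\{B_0\}$, where $B_0:=g_0^{-1}\delta(\xi)g_0$. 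Consequently $\phi$ is an injective immersion, and since $\SU(n)_{\delta(\xi)}$ is compact it is a homeomorphism onto its image. This gives a smooth embedding into $P$.

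By Lemma \ref{lemG2}, the image lies in $\mu^{-1}(\mu_0(y))$. That set is an embedded submanifold, since $y$ is admissible and hence $\mu_0(y)$ is strongly regular. A smooth map into $P$ whose image lies in an embedded submanifold is smooth into that submanifold, and it remains an immersion there. So $j_\xi$ is a smooth embedding, and the manifold structure transported from $\SU(n)_{\delta(\xi)}$ agrees with the subspace structure.

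For the action, Lemma \ref{lemG8} shows that $K_\xi^{g_0}$ preserves $M(\xi)$. The action map $K_\xi^{g_0}\times M(\xi)\to M(\xi)$ is the restriction of the smooth conjugation action $\SU(n)_{\mu_0(y)}\times \mu^{-1}(\mu_0(y))\to \mu^{-1}(\mu_0(y))$ to the embedded submanifold $K_\xi^{g_0}\times M(\xi)$. Its image lies in the embedded submanifold $M(\xi)$, so it is smooth. (Here $K_\xi^{g_0}$ is a closed subgroup, hence an embedded Lie subgroup.)

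In $T$-coordinates, writing $\eta=g_0^{-1}\tilde\eta g_0$ with $\tilde\eta\in\SU(n)^{u_0}_{\delta(\xi)}$, the action becomes
\[
T\mapsto \hat A_0^{-1}\tilde\eta\hat A_0\, T\,\tilde\eta^{-1}.
\]
This uses that $\tilde\eta$ commutes with $\delta(\xi)$, and that $\hat A_0^{-1}\tilde\eta \hat A_0$ lies in $\SU(n)_{\delta(\xi)}$ because $\tilde\eta$ commutes with $\hat\mu(y,u_0)\delta(\xi)=\hat A_0\delta(\xi)\hat A_0^{-1}$. This explicit form is a useful check.

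For the quotient, the key input is freeness. If $\eta\in K_\xi^{g_0}$ fixes a point of $M(\xi)$, then $\eta\in\SU(n)_{\mu_0(y)}$ fixes a point of $\mu^{-1}(\mu_0(y))$. By strong regularity $\SU(n)_{\mu_0(y)}/\bZ_n$ acts freely, so $\eta\in\bZ_n$. Thus the effectively acting group $K_\xi^{g_0}/(K_\xi^{g_0}\cap\bZ_n)$ is a compact Lie group acting freely and smoothly on $M(\xi)$, and $\bZ_n\subset K_\xi^{g_0}$ acts trivially. The standard quotient manifold theorem for free proper (here compact) actions then gives a unique smooth structure on $M(\xi)/K_\xi^{g_0}$ making $\pi_\xi$ a smooth submersion; in fact $\pi_\xi$ is a principal bundle. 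Uniqueness is the usual fact that a surjective submersion determines the smooth structure on its target.

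The only mildly delicate step is the freeness argument: one must check that the $\bZ_n$ center really sits inside $K_\xi^{g_0}$ and that nothing beyond it has fixed points. Both follow at once from strong regularity together with Lemma \ref{lemG8}.
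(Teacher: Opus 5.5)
Your proof is correct and follows essentially the same route as the paper. Both arguments embed $M(\xi)$ in $P$, restrict the codomain to the embedded submanifold $\mu^{-1}(\mu_0(y))$, use compactness to upgrade the injective immersion to an embedding, obtain the $K_\xi^{g_0}$-action as a restriction of the smooth $\SU(n)_{\mu_0(y)}$-action, and get the quotient structure from the fact that every isotropy group is the center $\bZ_n$. Beyond the paper, you justify why $\iota_0\circ j_\xi$ is an embedding (the paper only asserts it) and write out the $T$-coordinate form of the action that the paper records in Lemma~\ref{lem310}.
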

\begin{proof}
Consider the chain of maps
\be
M(\xi) \overset{j_\xi}{\longrightarrow} \mu^{-1}(\mu_0(y)) \overset{\iota_0}{\longrightarrow} P,
\label{B5}\ee
and notice that the composed map $\iota_0 \circ j_\xi$ is an embedding.
This implies that the map $j_\xi$ is smooth.
One may show this directly, or by applying the universal property of initial submanifolds to $\mu^{-1}(\mu_0(y)) \subset P$,
 since every embedded submanifold is also an initial  submanifold \cite{OR}.

The smooth map $D j_\xi$ has full rank at every point of $M(\xi)$, since the map $D (\iota_0 \circ j_\xi)$ has full rank
at every point of $M(\xi)$,  because $\iota_0 \circ j_\xi$ is an embedding.
The map $j_\xi$ is obviously injective, and thus it is an injective immersion.
It is then an embedding, since every injective immersion of a compact manifold is an embedding \cite{L}.

The action map $K_\xi^{g_0} \times M(\xi) \to \mu^{-1}(\mu_0(y))$ is smooth, since it is the
restriction of the action map $\SU(n)_{\mu_0(y)} \times \mu^{-1}(\mu_0(y)) \to \mu^{-1}(\mu_0(y))$ on an embedded submanifold.

The claim about the smooth manifold structure of the quotient space $M(\xi)/K_\xi^{g_0}$ follows from standard results \cite{L,OR},
since the isotropy group of every point of $M(\xi)$ is the center $\bZ_n < \SU(n)$.
\end{proof}

We note in passing that for $\xi \in \cA_y \cap \cA^\reg$ the manifold $M(\xi)$ is an orbit of the free $\bT^{n-1}$ action
as given in \eqref{T16}.

\begin{lem}\label{lem310}
The parametrization of $M(\xi)$ \eqref{G20}  by $\SU(n)_{\delta(\xi)}$ leads to an identification
\be
M(\xi)/ K_\xi^{g_0} \equiv \SU(n)_{\delta(\xi)}/ \SU(n)_{\delta(\xi)}^{u_0}.
\label{id310}\ee
The quotient space on the right-hand side refers to the group action
\be
\SU(n)_{\delta(\xi)}^{u_0} \times \SU(n)_{\delta(\xi)} \to \SU(n)_{\delta(\xi)}
\label{act31}\ee
given by the map
\be
(X, T) \mapsto \Theta_0(X) T X^{-1}, \qquad \forall (X,T) \in  \SU(n)^{u_0}_{\delta(\xi)} \times \SU(n)_{\delta(\xi)},
\label{act33}\ee
where
\be
\Theta_0(X) := \hat A_0^{-1} X \hat A_0 \in \SU(n)_{\delta(\xi)}.
\label{G23}\ee
The manifold \eqref{id310} is compact and connected.
\end{lem}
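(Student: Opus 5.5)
The plan is to transport the action of $K_\xi^{g_0}$ on $M(\xi)$ through the parametrization $T \mapsto (g_0^{-1}\hat A_0 T g_0,\, g_0^{-1}\delta(\xi) g_0)$. We will then check that it becomes the action \eqref{act33}, and finally deduce compactness and connectedness.

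\emph{First step.} Take $\eta \in K_\xi^{g_0}$ and write $\eta = g_0^{-1} X g_0$ with $X \in \SU(n)^{u_0}_{\delta(\xi)}$. Conjugating the pair gives
\[
(g_0^{-1} X \hat A_0 T X^{-1} g_0,\; g_0^{-1} X\delta(\xi) X^{-1} g_0) = (g_0^{-1}\hat A_0 (\hat A_0^{-1} X \hat A_0) T X^{-1} g_0,\; g_0^{-1}\delta(\xi) g_0).
\]
So the induced map on parameters is $T \mapsto \Theta_0(X) T X^{-1}$.

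\emph{Second step.} We must verify that $\Theta_0(X) \in \SU(n)_{\delta(\xi)}$. Since $\hat A_0 \delta(\xi)\hat A_0^{-1} = \hat\mu(y,u_0)\delta(\xi)$ and $X$ commutes with both $\delta(\xi)$ and $\hat\mu(y,u_0)$, the element $X$ commutes with $\hat A_0\delta(\xi)\hat A_0^{-1}$. Hence $\hat A_0^{-1}X\hat A_0$ commutes with $\delta(\xi)$.

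Since $X \mapsto g_0^{-1}Xg_0$ is a Lie group isomorphism $\SU(n)^{u_0}_{\delta(\xi)} \to K_\xi^{g_0}$, and the parametrization is a diffeomorphism $\SU(n)_{\delta(\xi)} \to M(\xi)$ by the preceding lemma's definition of the smooth structure, the parametrization intertwines the two actions. That the assignment is an action (a homomorphism property in $X$) is immediate since $\Theta_0$ is a homomorphism. The actions are therefore equivalent, and the orbit spaces are identified, diffeomorphically for the unique quotient structures. Freeness modulo $\bZ_n$ on the right side is inherited from the left, so the quotient on the right is a smooth manifold as well.

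\emph{Third step: compactness and connectedness.} Compactness is clear, because $\SU(n)_{\delta(\xi)}$ is closed in $\SU(n)$. For connectedness it suffices that $\SU(n)_{\delta(\xi)}$ is connected, since continuous images of connected spaces are connected. By \eqref{E8}, $\U(n)_{\delta(\xi)}$ is conjugate to $\U(m_1)\times\cdots\times\U(m_r)$, which is connected. The group $\SU(n)_{\delta(\xi)}$ is the kernel of $\det$ on it.

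We expect this last point to be the only mild obstacle. Connectedness of a determinant-one subgroup of a product of unitary groups needs an argument. We plan to write it as $\{(U_1,\dots,U_r): \prod \det U_k = 1\}$, which fibers over the connected torus $\{(c_1,\dots,c_r)\in \U(1)^r : \prod c_k = 1\} \cong \bT^{r-1}$ with fiber $\SU(m_1)\times\cdots\times\SU(m_r)$. Both base and fiber are connected, so the total space is connected. Alternatively, any element can be diagonalized within the block group and joined to $\1_n$ by a path of diagonal matrices inside the block-diagonal $\SU(n)$-torus. Either argument shows that $\SU(n)_{\delta(\xi)}$, and hence the quotient \eqref{id310}, is connected.
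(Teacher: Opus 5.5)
Your argument is correct and follows the paper's proof: the same conjugation identity $(g_0^{-1}Xg_0)(g_0^{-1}\hat A_0 T g_0)(g_0^{-1}Xg_0)^{-1} = g_0^{-1}\hat A_0\,\Theta_0(X)\, T X^{-1} g_0$ intertwines the two actions, and your commutation check that $\Theta_0(X)\in\SU(n)_{\delta(\xi)}$ spells out what the paper calls ``readily checked from the definitions.'' The only difference is in the connectedness of $\SU(n)_{\delta(\xi)}$: the paper cites Bourbaki's theorem that centralizers in a connected, simply connected compact Lie group are connected, whereas you prove it directly for the determinant-one subgroup of $\U(m_1)\times\cdots\times\U(m_r)$, using the map to the block determinants, whose image is $\bT^{r-1}$ and whose kernel is connected; this is more elementary but specific to $\SU(n)$.
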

\begin{proof}
This is a straightforward  consequence of the equality
\be
(g_0^{-1} X g_0) (g_0^{-1} \hat A_0 T g_0) (g_0^{-1} X g_0)^{-1} = g_0^{-1} (\hat A_0 \Theta_0(X) T X^{-1}) g_0,
\ee
which ensures that the group actions \eqref{B7}  and \eqref{act31} are intertwined by the equivariant diffeomorphism
\be
K_\xi^{g_0} \times M(\xi) \ni (g_0^{-1} X g_0, (g_0^{-1} \hat A_0 T g_0, g_0^{-1} \delta(\xi) g_0 ))
\mapsto (X, T) \in \SU(n)_{\delta(\xi)}^{u_0} \times \SU(n)_{\delta(\xi)}.
\ee
The fact that $\Theta_0(X)$ belongs to  $\SU(n)_{\delta(\xi)}$ is readily checked from the definitions.
 The quotient space \eqref{id310} is compact and connected since
$\SU(n)_{\delta(\xi)}$ has these properties.   Indeed, it is well known (\cite{Bourbaki}, Theorem 1 in Section 9.5.3)
that every isotropy subgroup for the adjoint action
of a  connected and simply connected compact Lie group on itself is connected.
\end{proof}

\begin{rem}\label{remG9}
The group $\SU(n)_{\delta(\xi)}^{u_0}$ contains the center $\bZ_n < \SU(n)$, and $\bZ_n$ gives the isotropy group
of every point of $\SU(n)_{\delta(\xi)}$ with respect to the action of $\SU(n)_{\delta(\xi)}^{u_0}$ given by
the formula \eqref{act33}. Indeed, this is a consequence of the fact that the action of
$\SU(n)_{\mu_0(y)}$ on $\mu^{-1}(\mu_0(y))$ is free except for the trivial action of the center $\bZ_n$
\end{rem}

Before turning to the characterization of the fiber $\hat \beta^{-1}(\xi)$, we need one more piece of preparation.
Let
$\su(n)_{\mu_0(y)}$ and $\k_\xi^{g_0}$ denote the Lie algebras of the groups $\SU(n)_{\mu_0(y)}$ and $K_\xi^{g_0}$, respectively.
For any $Z \in \su(n)_{\mu_0(y)}$ and $p = (A,B) \in \mu^{-1}(\mu_0(y))$ denote
\be
Z_p := ([Z,A], [Z,B]) \in T_p \mu^{-1}(\mu_0(y)).
\label{B12}\ee
This expression describes the vector fields induced by the action of $\SU(n)_{\mu_0(y)}$.

\begin{lem}\label{lem312}
If $Z_p$ belongs to $T_p M(\xi)\subset T_p \mu^{-1}(\mu_0(y))$ for
some $Z\in \su(n)_{\mu_0(y)}$ and $p\in M(\xi)$, then $Z \in \k_\xi^{g_0}$.
\end{lem}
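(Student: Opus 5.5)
The plan is to use only the second component of the vector $Z_p$, since along $M(\xi)$ the matrix $B$ never moves. By definition \eqref{G20}, every element of $M(\xi)$ has second component equal to the fixed matrix $B_0 := g_0^{-1}\delta(\xi) g_0$. Since $M(\xi)$ is parametrized by $T\in \SU(n)_{\delta(\xi)}$ via $T\mapsto (g_0^{-1}\hat A_0 T g_0, B_0)$, every smooth curve in $M(\xi)$ has constant second component. Hence every tangent vector in $T_pM(\xi)\subset T_pP$ has the form $(\dot A, 0)$. More formally, consider the composition of the embedding $\iota_0\circ j_\xi$ with the projection $\pi_2: P\to \SU(n)$. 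This composition is constant on $M(\xi)$, so $D\pi_2$ annihilates $T_pM(\xi)$.

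Now let $p=(A,B_0)\in M(\xi)$ and suppose $Z_p=([Z,A],[Z,B_0])\in T_pM(\xi)$ for some $Z\in\su(n)_{\mu_0(y)}$. By the previous paragraph the second component vanishes, so $[Z,B_0]=0$. This gives $e^{tZ}B_0e^{-tZ}=B_0$ for all $t\in\bR$, so the one-parameter subgroup $e^{tZ}$ lies in $\SU(n)_{B_0}$. By assumption $Z\in\su(n)_{\mu_0(y)}$, so $e^{tZ}$ also lies in $\SU(n)_{\mu_0(y)}$ for all $t$.

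Therefore $e^{tZ}\in \SU(n)_{B_0}\cap\SU(n)_{\mu_0(y)}$ for all $t$. By the remark following Definition \ref{defn36}, this intersection is exactly $K_\xi^{g_0}$. Since $K^{g_0}_\xi$ is a closed subgroup of $\SU(n)$, its Lie algebra is $\{W\in\su(n)\mid e^{tW}\in K^{g_0}_\xi\ \forall t\}$. We conclude that $Z\in\k_\xi^{g_0}$.

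There is essentially no obstacle. The only point needing care is the justification that tangent vectors to $M(\xi)$ have vanishing $B$-component. This uses the smooth structure of $M(\xi)$ coming from its parametrization by $\SU(n)_{\delta(\xi)}$, together with the fact, established in the preceding lemma, that $j_\xi$ and hence $\iota_0\circ j_\xi$ are embeddings. Because of the embedding, $T_pM(\xi)$ is identified with its image in $T_pP$, on which the differential of the constant map $\pi_2\circ\iota_0\circ j_\xi$ is zero.
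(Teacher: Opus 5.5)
Your proof is correct and follows the paper's argument: tangent vectors to $M(\xi)$ have vanishing $B$-component, so $[Z, g_0^{-1}\delta(\xi) g_0]=0$. Together with $[Z,\mu_0(y)]=0$, this places $Z$ in the Lie algebra of $K_\xi^{g_0} = \SU(n)_{g_0^{-1}\delta(\xi)g_0}\cap \SU(n)_{\mu_0(y)}$. The only difference is cosmetic: the paper conjugates by $g_0$ to an element $\Upsilon\in\su(n)_{\delta(\xi)}$ commuting with $\hat\mu(y,u_0)$, whereas you argue through one-parameter subgroups and the closed-subgroup description of the Lie algebra.
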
    \begin{proof}
If $Z_p$ belongs to $T_p M(\xi)$, then we must have $[Z,g_0^{-1} \delta(\xi) g_0]=0$, which means that
$Z = g_0^{-1} \Upsilon g_0$ for some element $\Upsilon \in \su(n)_{\delta(\xi)}$. But $\Upsilon$ must
also satisfy $[\Upsilon, \hat \mu(y,u_0)]=0$,
since $[Z,   \mu_0(y)] =0$. This means that $Z \in \k_\xi^{g_0}$.
\end{proof}

\begin{thm}\label{thm313}
Let us define the map
\be
F: M(\xi)/K_\xi^{g_0} \to  \mu^{-1}(\mu_0(y))/\SU(n)_{\mu_0(y)}
\label{B8}\ee
by the prescription
\be
F: K_\xi^{g_0} \cdot (A,g_0^{-1} \delta(\xi) g_0) \mapsto \SU(n)_{\mu_0(y)} \cdot (A,g_0^{-1}\delta(\xi) g_0),
\quad \forall (A,g_0^{-1} \delta(\xi) g_0)\in M(\xi).
\label{B9}\ee
This map is a smooth injective immersion whose image is $\hat \beta^{-1}(\xi)$.
\end{thm}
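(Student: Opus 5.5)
We need to establish four properties of $F$ in turn: it is well defined, its image is $\hat\beta^{-1}(\xi)$, it is injective, and it is a smooth immersion.

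\emph{Well-definedness.} Since $K_\xi^{g_0} < \SU(n)_{\mu_0(y)}$ and $M(\xi)\subset \mu^{-1}(\mu_0(y))$, a $K_\xi^{g_0}$-orbit in $M(\xi)$ lies in a single $\SU(n)_{\mu_0(y)}$-orbit. Hence $F$ is well defined.

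\emph{Image.} Take any $(A,B)\in \mu^{-1}(\mu_0(y))\cap\beta^{-1}(\xi)$. By Lemma~\ref{lemG2} it has the form $(g^{-1}\hat A_u T g, g^{-1}\delta(\xi) g)$, where the last column $u$ of $g$ satisfies \eqref{G7}. By Lemma~\ref{lemG3} we can write $u=Xu_0$ with $X\in \U(n)_{\delta(\xi)}$, and by Lemma~\ref{lemG4} we then have $g = X g_0\eta^{-1}$ with $\eta$ of the block form stated there.
\begin{itemize}
\item By Remark~\ref{remG5}, conjugation by $\eta$ coincides with conjugation by some $\tilde\eta\in\SU(n)_{\mu_0(y)}$. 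Conjugating $(A,B)$ by $\tilde\eta$ gives $B' = g_0^{-1}X^{-1}\delta(\xi)Xg_0 = g_0^{-1}\delta(\xi) g_0$.
\item The conjugated pair still lies in $\mu^{-1}(\mu_0(y))\cap\beta^{-1}(\xi)$, because this set is invariant under $\SU(n)_{\mu_0(y)}$.
\item By the first part of the proof of Lemma~\ref{lemG8}, every element of that set whose second component is $g_0^{-1}\delta(\xi)g_0$ belongs to $M(\xi)$.
\end{itemize}
So every point of $\hat\beta^{-1}(\xi)$ (described by \eqref{B1}) has a representative in $M(\xi)$. Conversely, $M(\xi)\subset\beta^{-1}(\xi)$, so the image of $F$ is exactly $\hat\beta^{-1}(\xi)$.

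\emph{Injectivity.} This is the second statement of Lemma~\ref{lemG8}. If two points of $M(\xi)$ are related by some $\eta\in\SU(n)_{\mu_0(y)}$, then $\eta\in K_\xi^{g_0}$, so the two points already define the same $K_\xi^{g_0}$-orbit.

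\emph{Smoothness.} Consider the commutative square
\[
\pi_0\circ j_\xi = F\circ\pi_\xi .
\]
The map $\pi_0\circ j_\xi$ is smooth. Since $\pi_\xi$ is a surjective submersion, it admits smooth local sections, so $F$ is smooth.

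\emph{Immersion.} I expect this to be the only real obstacle. Let $v\in T_{\pi_\xi(p)}(M(\xi)/K_\xi^{g_0})$ with $DF(v)=0$, and lift it to $w\in T_pM(\xi)$ with $D\pi_\xi(w)=v$. Then $D\pi_0(Dj_\xi w)=0$. Since $\pi_0$ is the quotient submersion of a free action of $\SU(n)_{\mu_0(y)}/\bZ_n$, its kernel at $p$ consists of the fundamental vectors, so $Dj_\xi w = Z_p$ for some $Z\in\su(n)_{\mu_0(y)}$. Because $Z_p = Dj_\xi w$ lies in $T_pM(\xi)$, Lemma~\ref{lem312} gives $Z\in \k_\xi^{g_0}$. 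Then $Z_p$ is tangent to the $K_\xi^{g_0}$-orbit through $p$ and lies in $\ker D\pi_\xi$. By injectivity of $Dj_\xi$ (it is an embedding), $w$ is this same orbit-tangent vector in $M(\xi)$, so $v=D\pi_\xi(w)=0$.

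The subtle point in this step is that the kernel of $D\pi_0$ is exactly the span of the fundamental vectors. This holds because $\mu^{-1}(\mu_0(y))\to P(\mu_0(y))$ is a principal bundle for $\SU(n)_{\mu_0(y)}/\bZ_n$, which is guaranteed by strong regularity.
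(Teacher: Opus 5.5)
Your proof is correct and follows the paper's argument essentially step by step: the image via Lemmas~\ref{lemG3}, \ref{lemG4} and Remark~\ref{remG5}, injectivity via Lemma~\ref{lemG8}, smoothness via local sections of $\pi_\xi$, and the immersion property via Lemma~\ref{lem312} together with $\ker D\pi_0(p)$ being the orbit tangent space; where the paper pushes the tangent vector forward with a local section, you lift it through $\pi_\xi$, which is equivalent. The only real difference is in the image step: you land in $M(\xi)$ by using that it contains every point of $\mu^{-1}(\mu_0(y))\cap\beta^{-1}(\xi)$ with second component $g_0^{-1}\delta(\xi)g_0$, instead of the paper's explicit rewriting with $X'=\hat A_0^{-1}X^{-1}\hat A_{Xu_0}$; also, the conjugating element there should be $\tilde\eta^{-1}$, not $\tilde\eta$.
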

\begin{proof}
We begin by showing that every orbit of $\SU(n)_{\mu_0(y)}$ in $\mu^{-1}(\mu_0(y)) \cap \beta^{-1}(\xi)$ admits representatives
in $M(\xi)$, which means that the image of the map $F$ is $\hat \beta^{-1}(\xi)$.
We know that all elements  $(A,B)\in \beta^{-1}(\xi) \cap \mu^{-1}(\mu_0(y))$ are given by the formula
\eqref{G6}.
In this formula, according to Lemmas \ref{lemG3} and \ref{lemG4}, we can write
$u = X u_0$ with $X\in \U(n)_{\delta(\xi)}$ and substitute \eqref{G12} for $g$, which  gives
\be
(A,B) = (\eta g_0^{-1} X^{-1} \hat A_{Xu_0} T X g_0 \eta^{-1}, \eta g_0^{-1} \delta(\xi) g_0 \eta^{-1}).
\label{G18}\ee
One can check directly from the definitions that for any $X\in \U(n)_{\delta(\xi)}$ the element
\be
X':= \hat A_0^{-1}  X^{-1} \hat A_{Xu_0}
\ee
belongs to $\U(n)_{\delta(\xi)}$.
By using this, we can rewrite \eqref{G18} as
\be
(A,B) = (\eta g_0^{-1} \hat A_0 T' g_0 \eta^{-1},  \eta g_0^{-1} \delta(\xi) g_0 \eta^{-1})
\quad\hbox{with}\quad
T'= X' T X \in \SU(n)_{\delta(\xi)}.
\ee
As observed in  Remark \ref{remG5}, here $\eta$ can be replaced by $\tilde \eta \in \SU(n)_{\mu_0(y)}$.
Since $(\eta^{-1} A \eta, \eta^{-1} B \eta) \in M(\xi)$,
every $\SU(n)_{\mu_0(y)}$ orbit in $\mu^{-1}(\mu_0(y))\cap \beta^{-1}(\xi)$  has representatives  in $M(\xi)$.

To see that $F$ \eqref{B9} is injective, suppose that
\be
\SU(n)_{\mu_0(y)} \cdot (A_1,   g_0^{-1}\delta(\xi) g_0)= \SU(n)_{\mu_0(y)} \cdot (A_2,   g_0^{-1}\delta(\xi) g_0)
\label{B10}\ee
for two elements of $M(\xi)$. This means that there exists some $\eta \in \SU(n)_{\mu_0(y)}$ that transforms
$(A_1,   g_0^{-1}\delta(\xi) g_0)$ into $(A_2,   g_0^{-1}\delta(\xi) g_0)$. We have shown in Lemma \ref{lemG8}  that any such $\eta$ necessarily
belongs to $K_\xi^{g_0}$, whereby the $K_\xi^{g_0}$ orbits of $(A_1,   g_0^{-1}\delta(\xi) g_0)$ and $(A_2,   g_0^{-1}\delta(\xi) g_0)$ are the same.

To verify the smoothness of $F$ \eqref{B9}, notice that $M(\xi)$ is a locally trivial (left) principal fiber bundle
over $M(\xi)/K_\xi^{g_0}$, with structure group $K_\xi^{g_0}/\bZ_n$.
Thus, given any point $x_0 \in M(\xi)/K_\xi^{g_0}$, there exists a smooth local section
$\sigma: U_{x_0} \to M(\xi)$ over a coordinate neighbourhood  $U_{x_0}$ of $x_0$.
It is clear that $j_\xi \circ \sigma$ (see \eqref{B4}) is also smooth, and
the restricted map $F_{\vert U_{x_0}}$  can be written as
\be
F_{\vert U_{x_0}} = \pi_0 \circ j_\xi \circ \sigma,
\label{B11}\ee
where $\pi_0: \mu^{-1}(\mu_0(y)) \to \mu^{-1}(\mu_0(y))/\SU(n)_{\mu_0(y)}$ is the canonical projection.
Smoothness is a local property, and
$F_{\vert U_{x_0}}$ is obviously smooth.

It remains to show
that $F$ \eqref{B9} is an immersion, i.e., its derivative $DF(x_0)$ is injective at every point of $x_0\in M(\xi)/K_\xi^{g_0}$.
 Take an arbitrary non-zero tangent vector $X\in T_{x_0} (M(\xi)/K_\xi^{g_0})$. We have to show  that
$(DF(x_0)) (X)\neq 0$.
Choosing a local section $\sigma: U_{x_0} \to M(\xi)$ as in the previous step of the proof,
let us put $p:= \sigma(x_0)$.  Since $D j_\xi(p)$ is the trivial inclusion of $T_p M(\xi)$ into $T_p \mu^{-1}(\mu_0(y))$, by using \eqref{B11}
we can write
\be
(DF(x_0))(X)  = (D\pi_0(p)) ( (D\sigma(x_0))(X)).
\ee
The kernel of $D\pi_0(p)$ is the tangent space of the orbit $\SU(n)_{\mu_0(y)}\cdot p$ at $p$.
Therefore,  if $(DF(x_0))(X)=0$, then
\be
(D\sigma(x_0))(X) = Z_p
\quad\hbox{for some} \quad
Z \in \su(n)_{\mu_0(y)}.
\ee
Here, we used the notation \eqref{B12}.
But $(D\sigma(x_0))(X)$ belongs to $T_p M(\xi) < T_p \mu^{-1}(\mu_0(y))$.  Consequently,  by Lemma \ref{lem312},  $Z\in \k_\xi^{g_0}$ must hold,
which entails that $Z_p$ is tangent $K_\xi^{g_0} \cdot p$.
However, because $\sigma$ is a section, $(D\sigma(x_0))(X)$ is not tangent to the
$K_\xi^{g_0}$ orbit through $p$ for any non-zero $X$.
(Indeed, the section induces a local trivialization
$\pi_\xi^{-1}(U_{x_0}) \simeq U_{x_0} \times K_\xi^{g_0}/\bZ_n$,
and in this trivialization $\sigma$ becomes the map $U_{x_0} \ni x \mapsto (x, e)$, where $e$ is the unit element of the structure group.
Then, $D\sigma(x_0)(X)$ can be written as $(X, 0)$, while the $K_\xi^{g_0}$ orbit reads $(x_0, K_\xi^{g_0}/\bZ_n$.)
This proves that $(DF(x_0))(X) \neq 0$ for any non-zero tangent vector, i.e., $F$ is an immersion.
 \end{proof}

\begin{thm}\label{thm314}
The map $F$ \eqref{B8} is an embedding of the manifold $M(\xi)/K_\xi^{g_0}$ into $P(\mu_0(y))$.
Its image is the fiber $\hat \beta^{-1}(\xi)$,  which is therefore a connected embedded submanifold of $P(\mu_0(y))$.
\end{thm}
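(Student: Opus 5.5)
The plan is to upgrade the injective immersion $F$ of Theorem \ref{thm313} to an embedding by a compactness argument. The domain $M(\xi)/K_\xi^{g_0}$ is compact, because by Lemma \ref{lem310} it is identified with $\SU(n)_{\delta(\xi)}/\SU(n)^{u_0}_{\delta(\xi)}$, the continuous image of the compact group $\SU(n)_{\delta(\xi)}$. The target $P(\mu_0(y))$ is a Hausdorff manifold. We use the standard fact \cite{L} that an injective immersion of a compact manifold into a Hausdorff manifold is a closed map, hence a homeomorphism onto its image, hence a smooth embedding. This is exactly the argument already used for $j_\xi$ above, now applied to $F$.

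The key steps, in order, are as follows.
\begin{enumerate}
\item Recall from Theorem \ref{thm313} that $F$ is smooth, injective, an immersion, and has image $\hat\beta^{-1}(\xi)$.
\item Check that $M(\xi)/K_\xi^{g_0}$ is a compact smooth manifold. It is the smooth quotient of the compact manifold $M(\xi)$, which is diffeomorphic to $\SU(n)_{\delta(\xi)}$.
\item Show that $F$ is a closed map. A closed subset $C$ of the compact domain is compact, so $F(C)$ is compact in the Hausdorff space $P(\mu_0(y))$ and therefore closed. Consequently $F$ is a continuous bijection onto $\hat\beta^{-1}(\xi)$ with continuous inverse, i.e.\ a topological embedding.
\item Combine this with the immersion property to conclude that $F$ is a smooth embedding. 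Its image $\hat\beta^{-1}(\xi)$ is then an embedded submanifold of $P(\mu_0(y))$, carrying the smooth structure transported by $F$.
\end{enumerate}

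Connectedness of the fiber follows from the last statement of Lemma \ref{lem310}. There, connectedness of $\SU(n)_{\delta(\xi)}$ (the cited result of \cite{Bourbaki}) gives connectedness of the quotient, and $F$ is continuous and surjective onto $\hat\beta^{-1}(\xi)$. One may also note that the embedded submanifold structure on the subset $\hat\beta^{-1}(\xi)$ is unique. Hence the choices of $g_0$ and $\hat A_0$ in Definition \ref{defn36} do not affect the conclusion.

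There is no serious obstacle. The only point needing care is that $P(\mu_0(y))$ is Hausdorff, which holds because it is the quotient of the compact space $\mu^{-1}(\mu_0(y))$ by a compact group acting with a single orbit type. It is in any case part of the manifold structure recalled in Section \ref{Sec2}.
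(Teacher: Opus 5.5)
Your proposal is correct and takes the same route as the paper. Both combine Theorem \ref{thm313} (an injective immersion with image $\hat\beta^{-1}(\xi)$) with compactness of $M(\xi)/K_\xi^{g_0}$ to obtain an embedding, and both take connectedness from Lemma \ref{lem310}; your closed-map and Hausdorff details just spell out the standard fact that the paper invokes in one line.
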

\begin{proof}
We have shown that $F$ is an injective immersion and its image is $\hat \beta^{-1}(\xi)$.
Hence, it is an embedding as $M(\xi)/K_\xi^{g_0}$ is compact.
 The connectedness follows from Lemma \ref{lem310}.
\end{proof}

We finish by showing that $\hat \beta^{-1}(\xi)$ is an \emph{isotropic} submanifold
of the symplectic manifold $(P(\mu_0(y)), \omega_{\mu_0(y)})$.
This will follow from the following lemma.

\begin{lem}\label{lem315}
The pull-back of the quasi-Hamiltonian 2-form $\omega$ \eqref{T3} vanishes on $M(\xi)$.
That is, with the tautological embedding $\iota_\xi: M(\xi) \to P$, we have
\be
\iota_\xi^* (\omega) =0.
\ee
\end{lem}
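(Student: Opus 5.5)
The plan is a direct computation of $\iota_\xi^*\omega$ in the parametrization of $M(\xi)$ by $T\in \SU(n)_{\delta(\xi)}$ from \eqref{G20}. On $M(\xi)$ the matrix $B = g_0^{-1}\delta(\xi) g_0$ is constant, so $\iota_\xi^*(\mathrm{d}B)=0$. Hence the first two terms of \eqref{T3} pull back to zero, because each contains $\mathrm{d}B B^{-1}$ or $B^{-1}\mathrm{d}B$ as a factor. It remains to handle the third term, $-\langle (AB)^{-1}\mathrm{d}(AB) \stackrel{\wedge}{,} (BA)^{-1}\mathrm{d}(BA)\rangle$.

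With $\mathrm{d}B=0$ we have $\mathrm{d}(AB)=\mathrm{d}A\, B$ and $\mathrm{d}(BA)=B\,\mathrm{d}A$. Therefore
$(AB)^{-1}\mathrm{d}(AB) = B^{-1}(A^{-1}\mathrm{d}A)B$ and $(BA)^{-1}\mathrm{d}(BA)=A^{-1}\mathrm{d}A$.
Next, I substitute $A = g_0^{-1}\hat A_0 T g_0$, where $g_0$ and $\hat A_0$ are constant. This gives $A^{-1}\mathrm{d}A = g_0^{-1}(T^{-1}\mathrm{d}T) g_0$. Here $\theta:=T^{-1}\mathrm{d}T$ is the Maurer--Cartan form of $\SU(n)_{\delta(\xi)}$, so it takes values in $\su(n)_{\delta(\xi)}$. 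Consequently $\theta$ commutes with $\delta(\xi)$, and
\be
B^{-1}(A^{-1}\mathrm{d}A)B = g_0^{-1}\delta(\xi)^{-1}\theta\,\delta(\xi) g_0 = g_0^{-1}\theta g_0 .
\ee
This is the key point, and it is where the special structure of $M(\xi)$ (fixed $B$ and $T$ in the centralizer) is used.

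The third term therefore reduces to $-\langle g_0^{-1}\theta g_0 \stackrel{\wedge}{,} g_0^{-1}\theta g_0\rangle = -\langle \theta \stackrel{\wedge}{,}\theta\rangle$, using the $\Ad$-invariance of the trace form \eqref{T2} under the unitary $g_0$. For any $\su(n)$-valued 1-form $\theta$ and symmetric bilinear form we have $\langle\theta\stackrel{\wedge}{,}\theta\rangle(v,w)=\langle\theta(v),\theta(w)\rangle-\langle\theta(w),\theta(v)\rangle=0$, so this term vanishes and $\iota_\xi^*\omega=0$. I do not expect a genuine obstacle. The only step needing care is confirming that $\theta$ is valued in the centralizer Lie algebra $\su(n)_{\delta(\xi)}$, which holds because $T$ ranges over the Lie subgroup $\SU(n)_{\delta(\xi)}$. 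One should also keep track of the convention for $\stackrel{\wedge}{,}$, but the symmetry argument works for either normalization.
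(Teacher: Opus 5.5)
Your proof is correct and is essentially the paper's argument: constancy of $B$ kills the first two terms of \eqref{T3}, and the third term collapses to $-\langle \theta \stackrel{\wedge}{,} \theta\rangle = 0$. The only difference is how you show that $(AB)^{-1}\mathrm{d}(AB)$ and $(BA)^{-1}\mathrm{d}(BA)$ agree on $M(\xi)$. The paper uses the constraint $AB = \mu_0(y) BA$ with $\mu_0(y)$ constant, so it needs no parametrization, while you use that $T^{-1}\mathrm{d}T$ takes values in $\su(n)_{\delta(\xi)}$ and hence commutes with $\delta(\xi)$, which is an equivalent fact on $M(\xi)$.
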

\begin{proof}
In practice, $\iota_\xi^*(\omega)$ is obtained by substituting $(A,B) \in M(\xi)$ into the
formula \eqref{T3}. Since in this case $B$ is constant, the first two terms vanish. By using that
$AB = \mu_0(y) BA$, we get
\be
\iota_\xi^*(\omega) = - \langle (\mu_0(y) BA)^{-1} \mathrm{d} (\mu_0(y) BA) \stackrel{\wedge}{,} (BA)^{-1} \mathrm{d} (BA)
\rangle = - \langle A^{-1} \mathrm{d} A \stackrel{\wedge}{,} A^{-1} \mathrm{d} A \rangle,
\label{G29}\ee
which vanishes obviously.
\end{proof}

\begin{thm}\label{thm316}
 The fiber $\hat \beta^{-1}(\xi)$ is an isotropic submanifold.
\end{thm}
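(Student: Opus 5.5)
We will deduce the isotropy of the fiber from Lemma \ref{lem315} combined with the reduction identity \eqref{T1}, $\iota_0^*\omega = \pi_0^*\omega_{\mu_0(y)}$. By Theorem \ref{thm314}, $\hat\beta^{-1}(\xi)$ is the image of the embedding $F: M(\xi)/K_\xi^{g_0}\to P(\mu_0(y))$. It therefore suffices to show that $F^*\omega_{\mu_0(y)}=0$ on $M(\xi)/K_\xi^{g_0}$: the tangent spaces of the embedded submanifold are exactly the images of $DF$, so the restriction of $\omega_{\mu_0(y)}$ to them vanishes if and only if $F^*\omega_{\mu_0(y)}$ does.

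We first relate $F$ to the maps on $M(\xi)$. By the definition \eqref{B9}, $F\circ\pi_\xi = \pi_0\circ j_\xi$ as maps $M(\xi)\to P(\mu_0(y))$, where $\pi_\xi: M(\xi)\to M(\xi)/K_\xi^{g_0}$ is the smooth submersion from the lemma preceding Lemma \ref{lem310}. Pulling back $\omega_{\mu_0(y)}$ along both sides gives
\be
\pi_\xi^* F^*\omega_{\mu_0(y)} = j_\xi^*\pi_0^*\omega_{\mu_0(y)} = j_\xi^*\iota_0^*\omega = (\iota_0\circ j_\xi)^*\omega = \iota_\xi^*\omega = 0 .
\ee
Here the second equality uses \eqref{T1}, and the last uses Lemma \ref{lem315}, with $\iota_\xi=\iota_0\circ j_\xi$.

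Next, since $\pi_\xi$ is a surjective submersion, the pullback $\pi_\xi^*$ is injective on differential forms. At any point $x$, every tangent vector $X\in T_x(M(\xi)/K_\xi^{g_0})$ lifts to some $\tilde X\in T_pM(\xi)$ with $\pi_\xi(p)=x$. Hence $(F^*\omega_{\mu_0(y)})(X,Y)=(\pi_\xi^*F^*\omega_{\mu_0(y)})(\tilde X,\tilde Y)=0$ for all $X,Y$. This gives $F^*\omega_{\mu_0(y)}=0$, and therefore $\hat\beta^{-1}(\xi)$ is isotropic.

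The one point that needs care is the use of \eqref{T1}. That identity was stated for the reduced space $P(\mu_0(y))$ under the hypotheses (regular value, single orbit type), and these hold here because $y$ is admissible, so $\mu_0(y)$ is strongly regular. We also need that $\iota_0\circ j_\xi$ coincides with the tautological embedding $\iota_\xi$ used in Lemma \ref{lem315}; this is immediate from \eqref{B5}. Everything else is formal bookkeeping of pullbacks, so no real obstacle is expected.
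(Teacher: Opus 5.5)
Your proof is correct and is essentially the paper's argument: the paper lifts a point and two tangent vectors of the fiber to $M(\xi)$ and combines \eqref{T1} with Lemma \ref{lem315}, which is exactly your identity $\pi_\xi^*F^*\omega_{\mu_0(y)}=\iota_\xi^*\omega=0$ written pointwise. Your pullback formulation via $F\circ\pi_\xi=\pi_0\circ j_\xi$ makes explicit the lifting step that the paper only says is ``easily seen'' from the proof that $F$ is an immersion.
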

\begin{proof}
First, we need to recall the key property of the symplectic form $\omega_{\mu_0(y)}$.
To do so,  let us fix an arbitrary point $z\in P(\mu_0(y))$ and two arbitrary tangent vectors $V, W \in T_z P(\mu_0(y))$.
Then, pick a point $\tilde z \in \mu^{-1}(\mu_0(y))$ and two vectors $\tilde V, \tilde W \in T_{\tilde z} \mu^{-1}(\mu_0(y))$
for which $z= \pi_0(\tilde z)$ and
\be
D\pi_0(\tilde z)(\tilde V) = V
\quad \hbox{and}\quad
D\pi_0(\tilde z)(\tilde W) = W.
\label{B16}\ee
The equality \eqref{T1} means that
\be
\omega_{\mu_0(y)} (V, W) = \omega(\tilde V, \tilde W).
\label{B17}\ee

Now, suppose that $z$, $V,W$ are such that $z\in \hat \beta^{-1}(\xi)$ and $V,W$ are tangent to $\hat \beta^{-1}(\xi)$.
In this case, we can choose $\tilde z\in M(\xi)$ and $\tilde V, \tilde W \in T_{\tilde z} M(\xi) \subset T_{\tilde z} \mu^{-1}(\mu_0(y))$
so that \eqref{B16} holds.
This is easily seen from the proof of the claim that $F$ \eqref{B8} is an immersion with its image being $\hat \beta^{-1}(\xi)$.
But then the $\omega(\tilde V, \tilde W)=0$ by Lemma \ref{lem315},  and hence the equality \eqref{B17} implies the claim.
\end{proof}

Theorems \ref{thm313}, \ref{thm314} and \ref{thm316} represent the main results of the present paper.
Their message is that the Liouville integrable systems of \cite{FK} exhibit `spherical singularities'
in all  type (ii) cases,
enriching the set of known integrable systems for which similar singularities were found previously \cite{B,BMZ,CK,CKO,K, NU-pol, NU}.
The rather concrete description
of the fibers, as given by Lemma \ref{lem310}, is also an important result.

\subsection{Recovering the torus structure of $\hat \beta^{-1}(\xi)$ for $\xi\in \cA_y \cap \cA^\reg$}\label{Sec32}

Although the structure of the fibers for $\xi \notin (\partial \cA_y)^\sing$ is known \cite{FK}, it appears worth
explaining how it arises as an easy consequence of Theorem \ref{thm314}, which is valid for any $\xi \in \cA_y$.
To do so, we shall apply the identification $\hat \beta^{-1}(\xi) = \SU(n)_{\delta(\xi)}/\SU(n)_{\delta(\xi)}^{u_0}$ provided
by Lemma \ref{lem310} and Theorem \ref{thm314}.

\begin{cor}\label{corG12}
Let us suppose that $\xi \in \cA_y \cap \cA^\reg$ and from the $n$ functions  $z_\ell$ ($\ell=1,\dots, n$)  in \eqref{E5}
$z_\ell(y,\xi)=0$  holds for a unique $d$-element subset, for some $0\leq d \leq (n-1)$.
Then, the fiber is a torus
\be
\hat \beta^{-1}(\xi) \simeq \bT^{n-1-d}.
\label{G32}\ee
\end{cor}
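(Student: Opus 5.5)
The plan is to specialise the identification $\hat\beta^{-1}(\xi)\simeq \SU(n)_{\delta(\xi)}/\SU(n)^{u_0}_{\delta(\xi)}$ of Lemma \ref{lem310} and Theorem \ref{thm314} to regular $\xi$, where every group involved is abelian.

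First, since $\xi\in\cA^\reg$, $\delta(\xi)$ is regular. Hence $\SU(n)_{\delta(\xi)}=\bT$ is the standard diagonal maximal torus, of dimension $n-1$. Next, by \eqref{E4} the vector $u_0$ satisfies $\vert (u_0)_\ell\vert^2=z_\ell(\xi,y)$. Its support $S:=\{\ell \mid z_\ell(\xi,y)\neq 0\}$ therefore has $n-d$ elements, and $S\neq\emptyset$ because $\sum_\ell z_\ell=1$. By the remark after Definition \ref{defn36}, $\SU(n)^{u_0}_{\delta(\xi)}$ consists of the diagonal $X\in\SU(n)$ having $u_0$ as an eigenvector. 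This means exactly that the diagonal entries $X_{\ell\ell}$ coincide for all $\ell\in S$. So
\[
H:=\SU(n)^{u_0}_{\delta(\xi)}\cong\Bigl\{(c,x_1,\dots,x_d)\in \U(1)^{d+1} \;\Big\vert\; c^{\,n-d}\prod_{i=1}^{d} x_i=1\Bigr\},
\]
which is a closed subgroup of $\bT$ of dimension $d$.

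Second, I would rewrite the action \eqref{act33}. Since $\Theta_0(X)$ lies in $\SU(n)_{\delta(\xi)}=\bT$ and $\bT$ is abelian, the action becomes $T\mapsto \phi(X)\,T$ with $\phi(X):=\Theta_0(X)X^{-1}$. The map $\phi:H\to\bT$ is a smooth homomorphism. This uses that $\Theta_0$ is conjugation by $\hat A_0$, hence a homomorphism, together with commutativity of $\bT$. Consequently the $H$-orbits are the cosets of the compact subgroup $\phi(H)<\bT$, and
\[
\hat\beta^{-1}(\xi)\simeq \bT/\phi(H).
\]

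Third, I would compute $\dim\phi(H)$. By Remark \ref{remG9}, the isotropy group of every point of $\bT$ under this action is $\bZ_n$, so $\ker\phi=\bZ_n$ is finite. Hence $\phi(H)$ is a closed subgroup of dimension $d$. The quotient of a torus by a closed subgroup is a compact connected abelian Lie group, here of dimension $n-1-d$, and therefore it is isomorphic to $\bT^{n-1-d}$. This gives \eqref{G32}.

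The only delicate point is the description of $H$ as the diagonal matrices that are constant on $S$. This relies on $u_0$ being an eigenvector of a diagonal matrix: $Xu_0=\Gamma u_0$ forces $X_{\ell\ell}=\Gamma$ exactly on the support of $u_0$. Everything else is formal, because the finiteness of the kernel is supplied by Remark \ref{remG9}.
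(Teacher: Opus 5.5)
Your proposal is correct and follows the paper's proof essentially step by step: you identify $\SU(n)_{\delta(\xi)}$ with the maximal torus, read off $\SU(n)^{u_0}_{\delta(\xi)}$ from the zero pattern of $u_0$ via \eqref{E4}, rewrite the action \eqref{act33} as translation by the homomorphism $X\mapsto\Theta_0(X)X^{-1}$, use Remark \ref{remG9} to get kernel $\bZ_n$, and conclude that the fiber is $\bT^{n-1}$ modulo a $d$-dimensional closed subgroup. The differences are only cosmetic: you treat $d=0$ and $d\geq 1$ uniformly, whereas the paper splits off the interior case, and you conclude directly that the quotient is a compact connected abelian Lie group, whereas the paper first identifies the image of the homomorphism as a $d$-dimensional subtorus.
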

\begin{proof}
After choosing $\xi$, fix also a corresponding vector $u_0$.
It follows from \eqref{E4} that the number of zero components of $u_0$ is $d$.
Since $\xi\in \cA^\reg$, we have
\be
\SU(n)_{\delta(\xi)} = \bT^{n-1}.
\label{G33}\ee
As noted before, if in addition $\xi\in \cA_y^o$, then   $z_\ell(y,\xi)>0$ for all $\ell$, and thus $d=0$.
In this case, the action \eqref{act33} of
$\SU(n)_{\delta(\xi)}^{u_0} = \bZ_n$ is trivial, and thus
\be
\hat \beta^{-1}(\xi) \equiv \SU(n)_{\delta(\xi)}/ \SU(n)_{\delta(\xi)}^{u_0} = \bT^{n-1}, \qquad \forall \xi \in \cA_y^o,
\label{G34}\ee
which is a standard Liouville torus.

If $\xi \in (\partial \cA_y)^\reg$, then the vector $u_0$ has $1\leq d\leq (n-1)$ zero components.
This holds because  $z_\ell(y,\xi)$ vanishes precisely if one of the sums in \eqref{T24} is equal to $y$,
for the respective index $\ell$, and such equalities characterize the boundary $\partial \cA_y$.
In this case,
$\SU(n)_{\delta(\xi)}^{u_0}$ is a $d$-dimensional torus. For example, if the first $d$ components of $u_0$ are zero, then
$X\in  \SU(n)_{\delta(\xi)}^{u_0}$ has the form
\be
X = \diag(\tau_1, \dots, \tau_d, \tau,\dots, \tau) \quad \hbox{with}\quad  \tau_d = (\tau_1\cdots \tau_{d-1} \tau^{n-d})^{-1},
\label{G35}\ee
and thus $X\in \SU(n)_{u_0}$ is parametrized by $(\tau_1,\dots, \tau_{d-1}, \tau) \in \bT^d$.
Now, since $\SU(n)_{\delta(\xi)}$ is Abelian,
we can define the \emph{homomorphism}
\be
\varphi: \SU(n)_{\delta(\xi)}^{u_0} \to \SU(n)_{\delta(\xi)} = \bT^{n-1},
\quad
\varphi(X):= \Theta_0(X) X^{-1} = \hat A_0^{-1} X \hat A_0 X^{-1}.
 \label{G36}\ee
Then, the action (see \eqref{act33}) of $\SU(n)_{\delta(\xi)}^{u_0}$ on $\SU(n)_{\delta(\xi)}$ works according to the rule
 \be
 (X,T) \mapsto T \varphi(X),
 \label{G37}\ee
 and we know that this action is free, except for the trivial action of the center $\bZ_n < \SU(n)_{\delta(\xi)}^{u_0}$.
 By using this, the formula \eqref{G37} of the action  implies that the kernel of the homomorphism
 $\varphi$ is  $\bZ_n$ (which is not apparent from the formula \eqref{G36} of $\varphi$).
 It follows that the image
 \be
 \varphi(\SU(n)_{\delta(\xi)}^{u_0}) < \bT^{n-1}
 \label{G38}\ee
  is  a $d$-dimensional sub-torus (since all connected, compact, Abelian Lie groups are tori).
 As a result, the fiber gets identified with the coset space
 \be
 \hat\beta^{-1}(\xi) = \bT^{n-1}/  \varphi(\SU(n)_{\delta(\xi)}^{u_0}) \simeq \bT^{n-1-d}.
 \label{G39}\ee
 Here, we used that the quotient space (in fact, factor group) of a torus by a sub-torus is again a torus.
 \end{proof}

Recall that the momentum map $\hat \beta$ generates a Hamiltonian $\bT^{n-1}$ action on
\be
P(\mu_0(y))^\reg:= \hat \beta^{-1}(\cA_y \cap \cA^\reg),
\ee
which is a dense open submanifold of $P(\mu_0(y))$.
This action descends from the free action  \eqref{T16} on $ \beta^{-1}(\cA^\reg)$ and it remains
free  on $\hat \beta^{-1}(\cA_y^o)$, since $\SU(n)_{\delta(\xi)}^{u_0}=\bZ_n$ for $\xi\in \cA_y^o$ \eqref{G30}.

The restricted system  on $P(\mu_0(y))^\reg$ is quite similar to compact toric systems  \cite{A,D}.
For example, let us suppose that $\xi$ is a `regular vertex' of $\cA_y$, i.e., a vertex that belongs to $\cA_y\cap \cA^\reg$.
It follows from standard results on polytopes \cite{PV,Z} that $\xi$ must satisfy at least $(n-1)$ of the $2n$ defining relations  \eqref{T24}
as equalities.
Notice $\xi_\ell + \cdots + \xi_{\ell + k-1} =y$ and
$\xi_\ell + \cdots + \xi_k = y$ cannot hold together
since $\xi \in \cA^\reg$.
By \eqref{E5},  either of these equalities imply that $z_\ell(\xi,y) =0$.
Taking also into account that $\sum_{\ell =1}^n z_\ell(\xi,y) =1$,
we conclude that any regular vertex \emph{satisfies precisely $(n-1)$ of the defining relations as equalities}.
Since our polytope is of dimension $(n-1)$, it also follows that the pertinent $(n-1)$ equalities  correspond
to faces of co-dimension one.
This means that the regular vertices behave in the same way as the vertices of the so-called
\emph{simple polytopes} \cite{PV},  which contain the Delzant polytopes \cite{A,D} of toric systems as  a  subclass.

We see from the above discussion and Corollary \ref{G12} that $\hat \beta^{-1}(\xi)$ is a fixed point
of the torus action for any regular vertex $\xi$.
We remark that there is also a correspondence between different isotropy types of the Hamiltonian $\bT^{n-1}$
action and the intersections between  $\cA^\reg$ and faces of $\cA_y$ of different dimensions,
like in toric systems.
In particular, one can show that
 if  $\xi\in (\partial \cA_y)^\reg$ belongs to the relative interior of a face of co-dimension $d$, then
 the points of  $\hat \beta^{-1}(\xi)$ possess $d$-dimensional isotropy groups and are
 tori of dimension $(n-1-d)$ \eqref{G32}.

The characteristic property of the type (ii) cases is that $\hat \beta$ does not generate
a global torus action, since
$(\partial \cA_y)^\sing \neq \emptyset$.
The simplest examples of `singular fibers', i.e., fibers associated with $\xi\in (\partial \cA_y)^\sing$,
are presented in the next section.

\section{The type $\mathrm{(ii)}$ examples of lowest dimension}\label{Sec4}

We wish to apply the combination of  Lemma \ref{lem310} and Theorem \ref{thm314} for describing the
fiber $\hat \beta^{-1}(\xi)$ for $\xi \in (\partial \cA_y)^\sing$ \eqref{G31}.
Such  values of $\xi$  first occur for $n=4$, for any
\be
\frac{\pi}{3} < y < \frac{\pi}{2}.
\label{F1}\ee
The corresponding momentum polytope was already calculated\footnote{The polytope was  given in \cite{FK} without describing the calculation.
The result is confirmed by Theorem \ref{thm51} and Remark \ref{rem52} below, which include it as the $n=4$ special case.}
in \cite{FK}.
We next present it in detail, and then describe the singular fibers.

To avoid confusion, let us stress that the true momentum map is the map \eqref{T20}, but the $n$-th component
is often included for notational convenience. In our case, we write the points of the polytope as 4-component
vectors, but  Figure \ref{Fig1}  refers to their projection onto the 3-dimensional space of the first 3 components.

\begin{figure}
\includegraphics{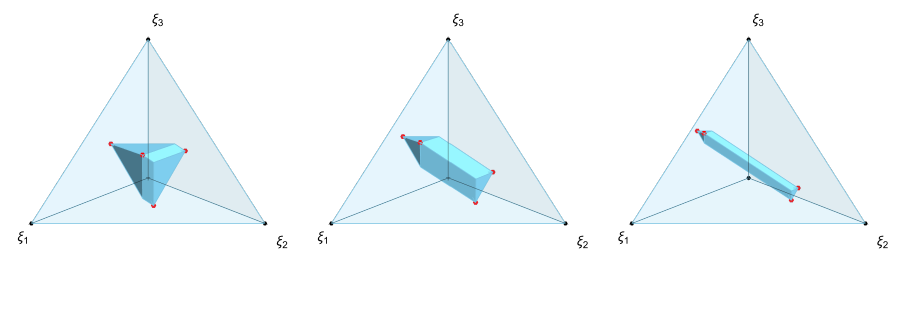}
\caption{The type (ii) polytope for $n=4$ at
 $y/\pi =  7/20, 5/12$ and $7/15$, from left to right. The red dots denote the `irregular' vertices.
 At the excluded values $y/\pi = 1/3$ and $1/2$ the polytope becomes a simplex and a line segment, respectively, with all vertices on $\partial \cA$,
 as seen from \eqref{Rvert} and \eqref{Ivert}.}
 \label{Fig1}
\end{figure}

\subsection{The $n=4$ momentum polytope}\label{Sec41}

Our 3-dimensional  momentum polytopes     have 8 vertices, 4
lying in $(\partial \cA_y)^\reg$ and another 4 in $(\partial \cA_y)^\sing$.
To display them, we employ
the cyclic permutation $\nu:= \sigma^{-1}$ \eqref{T25}.
The four regular vertices are
\be
R(1)= (y, \pi -2 y, 3y - \pi, \pi - 2y)
\quad \hbox{and}\quad  R(k) = \nu^{k-1}(R(1))
\quad
\hbox{for}\quad k=2,3,4,
\label{Rvert}\ee
and the four `singular' (irregular) vertices are
\be
I(1)= (y, \pi -2 y, y, 0)
\quad \hbox{and} \quad I(k) = \nu^{k-1}(I(1))
\quad
\hbox{for}\quad k=2,3,4.
\label{Ivert}\ee
It follows that
\be
(\partial \cA_y)^\sing = \{ I(1), I(2), I(3), I(4)\}.
\label{F3}\ee

The polytope also has 8 faces,  four are quadrilateral and another four are triangular
The quadrilateral face $\cQ(k)$ is defined by the equation $\xi_k + \xi_{k+1} = y$ (modulo 4).
With the above  labelling of the vertices we have
\be
\cQ(1) = \langle I(2), R(3), I(3), R(4) \rangle,
\ee
meaning that $\cQ(1)$ is the convex hull of the 4 listed vertices, which are listed in the order
of positive orientation as seen from the outside. From this, one also sees the edges that are incident with $\cQ(1)$.
The other quadrilateral faces are $\cQ(k) = \nu^{k-1}(\cQ(1))$ for $k=2,3,4$.

The triangular face $\cT(k)$ is defined by the equality $\xi_k =y$.  For example, we have
\be
\cT(1) = \langle I(1), R(1), I(3) \rangle
\quad \hbox{and} \quad \cT(k) = \nu^{k-1}(\cT(1))
\quad
\hbox{for}\quad k=2,3,4,
\ee
with the vertices listed anti-clockwise.
It should be noted that the application of the cyclic permutation changes the orientation.
For example,  the vertices of $\cT(2) = \langle I(2), R(2), I(4) \rangle $
are listed clockwise, and similarly for $\cQ(2)$.

Every irregular vertex is incident with 4 faces, 2 triangular and 2 quadrilateral, and with 4 edges.
For example, we have
\be
I(1) =  \cT(1) \cap \cT(3) \cap \cQ(3) \cap \cQ(4)
\ee
and
\be
I(1) = \langle I(1), R(1) \rangle \cap \langle I(1), R(2) \rangle \cap \langle I(1), R(3) \rangle \cap \langle I(1), I(3) \rangle.
\ee

Every regular vertex is incident with 3 faces and with 3 edges.
For example, we have
\be
R(1) =  \cT(1) \cap \cQ(2) \cap \cQ(3)
\ee
and
\be
R(1) = \langle R(1), I(1) \rangle   \cap \langle R(1), I(3) \rangle \cap \langle R(1), I(4) \rangle .
\ee
The other incidence relations follow by the application of the cyclic permutation $\nu$.

\begin{rem}\label{rem41}
One can check that the 3 edges emanating from any regular vertex satisfy the conditions
of a Delzant polytope. Namely, they are parallel to 3 vectors that have integer components
and form a basis of the lattice $\bZ^3$. Equivalently, the normal vectors of the 3 faces incident with a regular
vertex can be  normalized so that they form a basis of $\bZ^3$.
\end{rem}

\begin{rem}\label{rem42}
The above remark refers to the projection to the space of the first 3 components,
but actually the full symmetry of the geometric object $\cA_y$ is better seen from
the perspective of the Euclidean hyperlane $E \subset \bR^4$ \eqref{T23}.
There, the following `reconstruction' is available.
First, replace the 4 vertices $R(k)$ by new vertices $R(k)'$ given by
\be
R(1)' := \diag( \frac{\pi}{2}, \frac{\pi}{2}-y, 2y - \frac{\pi}{2}, \frac{\pi}{2}-y),
\ee
and its cyclic permutations $R(k)' = \nu^{k-1}(R(1))$ for $k=2,3,4$.
With respect to the distance in $E$, this yields a rectangular parallelepiped, whose faces
\be
\langle I(1), R(1)', I(3), R(3)'\rangle
\quad\hbox{and}\quad
 \langle R(2)', I(4), R(4)', I(2) \rangle
 \ee
are squares of edge length $(\pi - 2y)$ and the other two edges of
$ Q(1)' =  \langle I(2), R(3)', I(3), R(4)' \rangle$ and its permutations
have length $(4 y - \pi )$. Then, $\cA_y$ is reconstructed by chopping off the new vertices along
opposite diagonals of the squares, in such a way that results in equilateral triangles $\cT(k)$ in $E$.
\end{rem}

\subsection{The $n=4$ singular fibers}\label{Sec42}

We below describe the structure of $\hat \beta^{-1}(\xi)$  for $\xi= I(2)$ given by
\be
\xi = (0,y, \pi -2y, y)
\quad \hbox{for which} \quad
 \delta(\xi)  = - \ri\, \diag( 1,1, e^{2\ri y},  e^{-2\ri y}).
\label{F4}\ee
The other 3 singular fibers have the same structure.

Now we have
\be
\SU(4)_{\delta(\xi)} = \{T= \diag (\Gamma, \Gamma_3, \Gamma_4) \mid \Gamma \in \U(2),\, \Gamma_3,
\Gamma_4\in \U(1),\, \Gamma_3 \Gamma_4 \det(\Gamma) =1 \} \simeq \U(2) \times \U(1).
\label{F5}\ee
By inspecting the equality \eqref{E3}, we find that the corresponding  unit vectors $u\in \bC^4$ are constrained  by
\be
 \quad \vert u_1 \vert^2 + \vert u_2 \vert^2 = - 1/(2 \cos(2y)), \,\,
\vert u_3\vert^2 = 1 + 1/(2 \cos(2y)), \,\,  u_4 =0.
\label{F6}\ee
For the application of Lemma \ref{lem310}, we  choose the vector $u_0$ as
\be
u_0 = (0, r, \sqrt{1-r^2}, 0)^T \quad \hbox{with}\quad r: =  \sqrt{- 1/(2\cos(2y))}.
\label{F7}\ee
Note that $r$ and $\sqrt{1-r^2}$ are positive.
Then, the group $\SU(u)_\xi^{u_0} < \SU(4)_{\delta(\xi)}$  can be displayed as
\be
\SU(4)_\xi^{u_0} =\{ X= \diag(a, b, b, c) \mid a,b,c \in \U(1),\,  ab^2c=1\} \simeq \U(1) \times \U(1).
\label{F8}\ee
Next, we have to fix  $\hat A_0 \in \SU(4)$ that satisfies $\hat A_0 \delta(\xi)  =  \hat \mu(y, u_0) \hat A_0$.
With $u_0$ \eqref{F7}, we can write
\be
\hat \mu(y, u_0) \delta(\xi)  = -\ri \,\diag( e^{2\ri y}, e^{2\ri y}, e^{4\ri y}, 1) - \diag(0,Q, 0),
\label{F9}\ee
with a  2 by 2 matrix  $Q$ that we do not need to spell out.
A brief inspection of the equation
\be
  \hat A_0  \diag(\ri,\ri, \ri e^{2\ri y}, \ri e^{-2\ri y})   = \left( \diag(\ri e^{2\ri y},
  \ri e^{2\ri y}, \ri e^{4\ri y}, \ri )   + \diag(0, Q, 0)\right)  \hat A_0
 \label{F10}\ee
  shows that we can a find a solution of the  following form:
\be
\hat A_0 = \begin{bmatrix}
0&0&1&0\\
0& x_1 & 0& x_3 \\
0& x_2 & 0& x_4 \\
1 & 0&0&0
\end{bmatrix} .
\label{F11}\ee
For $X\in \SU(4)_\xi^{u_0}$ \eqref{F8}, this leads to
\be
\Theta_0(X) = \hat A_0^{-1} X \hat A_0 = \diag(c,b,a,b).
\label{F12}\ee
According to Theorem \ref{thm314} and Lemma \ref{lem310}, the fiber can be identified with the orbit space for the
action of $\SU(4)_\xi^{u_0}$ on $\SU(4)_{\delta(\xi)}$   defined by
\be
(X, T) \mapsto \Theta_0(X) T X^{-1}.
\label{F13}\ee
Taking $T$ from \eqref{F5} and $X$ from \eqref{F8}, the right-hand side reads
\be
\Theta_0(X) T X^{-1}  = \diag(c,b, a,b) \diag(\Gamma, \Gamma_3, \Gamma_4) \diag(a^{-1}, b^{-1}, b^{-1}, c^{-1}).
\label{F14} \ee
If $\Theta_0(X) T X^{-1} = T$, then $a=b=c$, which confirms that the isotropy group of every point is the center $\bZ_4< \SU(4)$ with respect to this action.

\begin{prop}\label{prop42}
Every orbit of $\SU(4)_\xi^{u_0}$ on $\SU(4)_{\delta(\xi)}$ has a unique representative of the form
$\diag(Z,1,1)$ with $Z\in \SU(2)$, and therefore, as a manifold, the singular fiber over $\xi$ \eqref{F4}  admits the identification
\be
\hat \beta^{-1}(\xi) \simeq S^3.
\ee
\end{prop}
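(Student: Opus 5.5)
Write $T=\diag(\Gamma,\Gamma_3,\Gamma_4)$ as in \eqref{F5} and $X=\diag(a,b,b,c)$ as in \eqref{F8}. By \eqref{F14}, the action \eqref{F13} reads
\be
(X,T)\mapsto \diag\bigl(\diag(c,b)\,\Gamma\,\diag(a^{-1},b^{-1}),\ a b^{-1}\Gamma_3,\ b c^{-1}\Gamma_4\bigr).
\ee
The plan has three steps: use the two $\U(1)$ entries to normalize $\Gamma_3=\Gamma_4=1$, show that the residual freedom acts trivially on the $2\times 2$ block, and then identify the quotient with $\SU(2)$.

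\emph{Existence.} We want $ab^{-1}\Gamma_3=1$ and $bc^{-1}\Gamma_4=1$, together with $ab^2c=1$. Set $a=b\Gamma_3^{-1}$ and $c=b\Gamma_4$. The constraint then becomes $b^4\Gamma_3^{-1}\Gamma_4=1$, which always has solutions $b\in\U(1)$ (four of them, differing by fourth roots of unity). After this transformation the new block $\Gamma'$ satisfies $\det\Gamma'\cdot 1\cdot 1=1$, because the transformed element still lies in $\SU(4)_{\delta(\xi)}$. Hence $\Gamma'=Z\in\SU(2)$, and every orbit meets $\{\diag(Z,1,1)\mid Z\in\SU(2)\}$.

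\emph{Uniqueness.} Suppose $X$ maps $\diag(Z,1,1)$ to $\diag(Z',1,1)$. Then $ab^{-1}=1$ and $bc^{-1}=1$, so $a=b=c$. With $ab^2c=1$ this gives $a^4=1$, so $X$ is central. The block transformation becomes $\diag(a,a)\,Z\,\diag(a^{-1},a^{-1})=Z$, hence $Z'=Z$. So the representative is unique.

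\emph{Diffeomorphism.} The map $\SU(2)\to\SU(4)_{\delta(\xi)}/\SU(4)^{u_0}_\xi$, $Z\mapsto[\diag(Z,1,1)]$, is smooth, being the composition of a smooth inclusion with the submersion $\pi_\xi$. By the two steps above it is bijective. Its domain is compact and, by Lemma \ref{lem310}, the target is a smooth Hausdorff manifold, so the map is a homeomorphism. To upgrade it to a diffeomorphism, I will check that it is an immersion. The orbit directions at $\diag(Z,1,1)$ have nonzero $\U(1)\times\U(1)$-components unless the Lie algebra element is central, i.e.\ zero. Hence the tangent space of $\SU(2)$, which has zero $\Gamma_3,\Gamma_4$ components, meets the tangent space of the orbit trivially. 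A bijective immersion between manifolds of equal dimension ($3=4-1$) is a diffeomorphism. Combining this with Theorem \ref{thm314} gives $\hat\beta^{-1}(\xi)\simeq\SU(2)\simeq S^3$.

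I expect the only delicate point to be the transversality claim in the last step. Concretely, one must show that an orbit-generating vector $(\alpha,\beta,\gamma)$ with $\alpha+2\beta+\gamma=0$ has vanishing $\Gamma_3,\Gamma_4$ components only if $\alpha=\beta=\gamma$, and therefore only if all three vanish. This is the infinitesimal version of the uniqueness computation, so it should go through the same way.
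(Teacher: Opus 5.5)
Your proposal is correct and follows essentially the same route as the paper. You normalize $\Gamma_3=\Gamma_4=1$ via $b^4=\Gamma_3/\Gamma_4$, $a=b\Gamma_3^{-1}$, $c=b\Gamma_4$, and show the residual freedom is the center $\bZ_4$, so $\{\diag(Z,1,1)\mid Z\in\SU(2)\}$ is a global cross-section of the orbits. Your immersion check additionally fills in a step the paper only asserts, namely that this cross-section is diffeomorphic to the quotient. The check shows that orbit directions with vanishing $\Gamma_3,\Gamma_4$ components are zero. The dimension count it relies on should read $\dim \SU(4)_{\delta(\xi)}-\dim\SU(4)^{u_0}_\xi=5-2=3$, not $4-1$.
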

\begin{proof}
By choosing $X=\diag(a,b,b,c)\in \SU(4)_\xi^{u_0}$ so that
\be
\Gamma_3 a/b= \Gamma_4 b/c = 1
\ee
we  can transform any $T=(\Gamma, \Gamma_3, \Gamma_4)$ into an element of the form $\diag(Z,1,1)$.
This requires
\be
b^4 = \Gamma_3/\Gamma_4,\,\, a = b/\Gamma_3,\,\, c = b\Gamma_4.
\ee
Thus, for a given $T$,  $X$ is unique up to multiplication by an element of $\bZ_4$ and $Z$ is unique.
Consequently,  the smooth manifold $\SU(4)_{\delta(\xi)}/\SU(4)_\xi^{u_0}$ is
 diffeomorphic to the global cross-section of the orbits given by
 $\{ \diag(Z,1,1)\mid Z\in \SU(2)\}$.
We better think of the fiber as $S^3$, since no natural group structure arises on the quotient space $\SU(4)_{\delta(\xi)}/\SU(4)_\xi^{u_0}$.
\end{proof}

\begin{rem}\label{remSeid}
Combining Theorems \ref{thm314} and \ref{thm316} with Proposition \ref{prop42} we see that the six dimensional compact
symplectic manifold $P(\mu_0(y))$  \eqref{T18} associated to our type (ii) examples for $n=4$ contains embedded Lagrangian $S^3$ spheres.
This implies by Theorem 3.1a of \cite{S} that these manifolds cannot be symplectomorphic to $\bC P^3$ with
a multiple of the Fubini--Study symplectic
form that appears \cite{FKlim,FK} in the type (i) cases for $\SU(4)$.
\end{rem}

\subsection{Dynamics on the $S^3$ fibers }\label{Sec43}

The phase space $P(\mu_0(y))$ carries the Abelian Poisson algebra $\hat \fH_b$ that descends from
$\pi_2^*C^\infty(\SU(n))^{\SU(n)}$. The joint level surfaces of $\hat \fH_b$ are the same as the
fibers of the momentum map $\hat \beta$. Below, we describe the dynamics generated by  these smooth Hamiltonians
on the singular $S^3$ fiber.

For $g\in \SU(n)$, let $\fc_g$ denote the span of the derivatives $\nabla h(g)$ of the conjugation invariant smooth functions.
It is well known that $\fc_g$ is  the center of the Lie algebra of the isotropy group  $\SU(n)_g$.
For the singular vertex $\xi$ \eqref{F4}, $\fc_{\delta(\xi)}$ is two dimensional.
A basis of  $\fc_{\delta(\xi)}$ is provided by
\be
X_1:= -\ri\, \diag(1,1,-1,-1) \quad\hbox{and}\quad X_2 := -\ri\, \diag(0,0,1,-1).
\label{425}\ee

Consider the following invariant functions on $\SU(n)$,
\be
\phi_k(g):= \frac{1}{2k} \tr(g^k + g^{-k})
\quad\hbox{and}\quad
 \phi_{-k}(g):= \frac{1}{2\ri k} \tr(g^k - g^{-k}),
 \quad \forall k\in \bZ_+,
 \ee
 whose derivatives are
 \be
 \nabla \phi_k(g) = (g^{-k} - g^k) + \frac{\1_n}{n} \tr (g^k - g^{-k}),
 \qquad
\nabla \phi_{-k}(g) = \ri (g^{k} + g^{-k}) - \frac{\ri \1_n}{n} \tr (g^k + g^{-k}).
\ee
By using this, one can easily find smooth invariant functions on $\SU(4)$ whose
derivatives at $\delta(\xi)$ \eqref{F4}  are $X_1$ and $X_2$ \eqref{425}.

\begin{prop}\label{prop44}
Consider two functions $h_j \in C^\infty(\SU(4))^{\SU(4)}$ for which
$\nabla h_j(\delta(\xi)) = X_j$ \eqref{425} with $\xi$ \eqref{F4}, and let $\cH_j\in \hat \fH_b$
denote the corresponding reduced Hamiltonians defined by
\be
h_j \circ \pi_2 \circ \iota_0 = \cH_j \circ \pi_0, \quad j=1,2.
\ee
Following the proof of Proposition \ref{prop42},
identify the fiber $\hat \beta^{-1}(\xi)$ with the submanifold $M_0(\xi)$ of $M(\xi)$ given by
\be
M_0(\xi): = \{ (g_0^{-1} \hat A_0 \,\diag(Z,1,1) g_0, g_0^{-1} \delta(\xi) g_0) \mid Z\in \SU(2)\}.
\ee
Then,  the integral curves of $\cH_1$ that belong to the fiber can be written as
\be
Z_1(t) = \diag( e^{\ri t}, 1)\, Z_0\,
  \diag( 1, e^{-\ri t})
\label{Z1t}\ee
and the integral curves generated by $\cH_2$ can be written as
\be
Z_2(t) = \diag( e^{\ri t}, 1)\, Z_0\, \diag( e^{-\ri t}, 1),
\label{Z2t}\ee
where $Z_0 \in \SU(2)$ is an arbitrary initial value.
Parametrizing $Z \in \SU(2)$ as
\be
Z= \begin{bmatrix} (q_1 + \ri q_2) & (q_3 + \ri q_4) \\
(-q_3 + \ri q_4) & (q_1- \ri q_2)
\end{bmatrix}\,,
\ee
these integral curves describe rotations in the $q_1, q_2$ and in the $q_3, q_4$ planes, respectively.
\end{prop}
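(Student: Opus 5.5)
The strategy is to lift the reduced flows to the quasi-Hamiltonian flows on $P$, which are given explicitly by \eqref{T7}, and then push them back to the slice $M_0(\xi)$ using the $K_\xi^{g_0}$ action, i.e.\ the gauge freedom \eqref{act33}. By \eqref{I.4} and the projection property of quasi-Hamiltonian vector fields recalled in Section \ref{Sec21}, the Hamiltonian flow of $\cH_j$ through $\pi_0(p)$ is $\pi_0$ of the integral curve of $V_{h_j\circ\pi_2}$ through $p$. Take $p=(g_0^{-1}\hat A_0\,\diag(Z_0,1,1)g_0,\, g_0^{-1}\delta(\xi)g_0)\in M_0(\xi)$. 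By \eqref{T7}, $B$ stays fixed and
\[
A(t)=A\,e^{t\nabla h_j(B)},\qquad \nabla h_j(B)=g_0^{-1}\nabla h_j(\delta(\xi))g_0=g_0^{-1}X_jg_0 ,
\]
where we use the equivariance of $\nabla h_j$ under conjugation. Hence the curve stays in $M(\xi)$, with parameter
\[
T(t)=\diag(Z_0,1,1)\,e^{tX_j}.
\]
Note that $e^{tX_j}\in\SU(4)_{\delta(\xi)}$ because $X_j$ lies in the center of $\su(4)_{\delta(\xi)}$.

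Explicitly,
\[
e^{tX_1}=\diag(e^{-\ri t},e^{-\ri t},e^{\ri t},e^{\ri t}),\qquad e^{tX_2}=\diag(1,1,e^{-\ri t},e^{\ri t}).
\]
Thus $T(t)=\diag(\Gamma(t),\Gamma_3(t),\Gamma_4(t))$ with $\Gamma(t)=Z_0e^{-\ri t}$ and $\Gamma_3=\Gamma_4=e^{\ri t}$ for $j=1$, and with $\Gamma=Z_0$, $\Gamma_3=e^{-\ri t}$, $\Gamma_4=e^{\ri t}$ for $j=2$.

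Next apply the normalization from the proof of Proposition \ref{prop42}. Choose $X=\diag(a,b,b,c)$ with $b^4=\Gamma_3/\Gamma_4$, $a=b/\Gamma_3$ and $c=b\Gamma_4$. Then \eqref{F14} shows that the representative in the slice is
\[
Z(t)=\diag(c,b)\,\Gamma(t)\,\diag(a^{-1},b^{-1}).
\]
\begin{itemize}
\item For $j=1$ we have $b^4=1$. Take $b=1$, so $a=e^{-\ri t}$ and $c=e^{\ri t}$. Then $Z(t)=\diag(e^{\ri t},1)\,Z_0e^{-\ri t}\,\diag(e^{\ri t},1)=\diag(e^{\ri t},1)\,Z_0\,\diag(1,e^{-\ri t})$, which is \eqref{Z1t}.
\item For $j=2$ we have $b^4=e^{-2\ri t}$. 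Take $b=e^{-\ri t/2}$, so $a=e^{\ri t/2}$ and $c=e^{\ri t/2}$. This gives $Z(t)=e^{\ri t/2}\diag(e^{\ri t},1)^{1/2}\cdots$. Rather than guessing, I would compute directly: $Z(t)=\diag(e^{\ri t/2},e^{-\ri t/2})\,Z_0\,\diag(e^{-\ri t/2},e^{\ri t/2})$, which equals $\diag(e^{\ri t},1)\,Z_0\,\diag(e^{-\ri t},1)$ since the scalar factors cancel. This is \eqref{Z2t}.
\end{itemize}
The residual $\bZ_4$ ambiguity in $b$ does not affect $Z$, by uniqueness in Proposition \ref{prop42}.

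Finally, insert the quaternion parametrization. Left multiplication by $\diag(e^{\ri t},1)$ combined with right multiplication by $\diag(1,e^{-\ri t})$ multiplies $q_1+\ri q_2$ by $e^{\ri t}$ and leaves the entry $q_3+\ri q_4$ fixed. The conjugation in $Z_2(t)$ fixes the diagonal and multiplies $q_3+\ri q_4$ by $e^{\ri t}$. These are the claimed rotations.

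The main obstacle is the sign and branch bookkeeping, namely the precise form of $\Theta_0$ from \eqref{F12} and the fourth root $b$. I would double-check this by confirming that the resulting curve in the slice is a one-parameter subgroup action, so that it is independent of the chosen root.
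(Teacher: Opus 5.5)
Your proposal is correct and is essentially the paper's own argument: you lift the reduced flow via \eqref{T7} to the curve $T(t)=\diag(Z_0,1,1)\,e^{tX_j}$ in $M(\xi)$, then return to $M_0(\xi)$ with the gauge action $X=\diag(a,b,b,c)$, exactly as in the proof of Proposition \ref{prop42}. Your explicit choices ($b=1$ for $j=1$, $b=e^{-\ri t/2}$ for $j=2$) check out, as do the resulting formulas \eqref{Z1t}, \eqref{Z2t} and the quaternion rotations; the garbled intermediate line in the $j=2$ case is harmless, since the direct computation that follows it is right.
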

\begin{proof}
This is a straightforward calculation. By applying \eqref{T7}, one first writes down the integral curve of $\cH_j$ starting at an initial value
in $M_0(\xi)$. This integral curve stays in $M(\xi)$. Then, one projects it back onto $M_0(\xi)$ by using the
action on $\SU(4)_{\delta(\xi)}^{u_0}$, as in the proof of Proposition \ref{prop42}.
\end{proof}

\subsection{Conjecture on a semi-local isomorphism}\label{Sec44}

The $2\pi$-periodic integral curves exhibited in Proposition \ref{prop44} give rise to a $\bT^2$ action on the fiber
$\hat \beta^{-1}(\xi)$ with $\xi$ in \eqref{F4}, and one may ask if this is also generated  by the action variables.
In fact, it can be shown that although $\hat \beta_1$ is not differentiable at $\hat \beta^{-1}(\xi)$  where it vanishes,
the following 3 functions are $C^\infty$ in an open neighbourhood of $\hat \beta^{-1}(\xi)$,
\be
\hat \beta_1^2,\,\,   (\hat \beta_1 + 2 \hat \beta_2 + \hat \beta_3),\,\, \hat \beta_3,
\label{3smooth}\ee
and the latter two action variables have the same integral curves  on the fiber as given by equations \eqref{Z1t} and \eqref{Z2t}.
The proof of this statement will be presented elsewhere.

The cotangent bundle $T^*S^3$ carries a Liouville integrable system whose restriction to a neighbourhood of the zero section
is strikingly similar to the restriction of our system on a neighbourhood of the spherical fiber $S^3\simeq \hat \beta^{-1}(\xi) \subset P(\mu_0(y))$.
The system on   $T^*S^3$, equipped with the canonical symplectic form, is defined
 by the  triple of Poisson commuting Hamiltonians
\be
H,\, L_{12},\, L_{34},
\label{L1234}\ee
where $L_{ij}$ denotes the angular momentum  component generating rotations in the $ij$ plane, and $H=\sum_{1\leq i<j \leq 4} L_{ij}^2$ is
twice the kinetic energy associated with the round metric on  $S^3 \subset \bR^4$.
The angular momenta $L_{12}$ and $L_{34}$ are globally smooth and generate $2\pi$-periodic flows.
The integral curves of $\sqrt{H}$  outside the zero section are also $2\pi$-periodic,
but this function is not differentiable at the zero section.
This is very similar to the behaviour of the triple \eqref{3smooth} in a neighbourhood of the singular fiber.
To exhibit an even stronger resemblance, let us consider the map
\be
W: T^* S^3 \to \bR^3, \quad
W = (W_1, W_2, W_3):= (\sqrt{H}, L_{12}, L_{34}).
\label{H+6}\ee
The range of this map \cite{K} is the polyhedron in $\bR^3$ whose points $(w_1,w_2,w_3)$ satisfy
\be
w_1 \geq \vert w_2\vert + \vert w_3 \vert.
\label{H+7}\ee
It has a vertex at the origin and 4 edges emanate from it along the directions  of the vectors
\be
v_1 = (1,1,0),\,\, v_2 = (1,0,1),\,\, v_3 = (1,-1,0),\,\, v_4 = (1,0,-1).
\label{H+4}\ee
Consider also the map
\be
X: P(\mu_0(y)) \to \bR^3,
\quad
X= (X_1, X_2, X_3):= (\hat \beta_1, \hat \beta_1 + 2 \hat \beta_2 + \hat \beta_3 - \pi, \hat \beta_3 - (\pi - 2y)).
\label{H+1}\ee
The  resulting  image of $P(\mu_0(y)$  can be obtained from our polytope on Figure \ref{Fig1}.
The image of $I(2)$ is the origin, and the direction vectors of
 the images of the 4 edges
\be
\langle I(2), R(2) \rangle,\,\, \langle I(2), R(3) \rangle,\,\, \langle I(2), R(4) \rangle,\,\, \langle I(2), I(4) \rangle
\label{H+5}\ee
are the same vectors \eqref{H+4} that arose from the `geodesic system' \eqref{H+6}.

Although the elements of both triples $W$ and $X$ have $2\pi$-periodic integral curves, these are not  true action maps
because the integral curves of the Hamiltonian
$(W_1 + W_2 + W_3)$ and of the Hamiltonian $(X_1 + X_2 + X_3)$ both have period $\pi$.
In the latter case this is clear from the equality
\be
X_1 + X_2 + X_3 = 2( \hat \beta_1 + \hat \beta_2 + \hat \beta_3) - 2(\pi - y),
\label{H+2}\ee
and concerning  $W$ \eqref{H+6} it has been shown in \cite{McLH}.

For our system, instead $(\hat \beta_1, \hat \beta_2, \hat \beta_3)$ we may introduce alternative action variables as
the components of the map
\be
\widetilde{X}:= ((X_1 + X_2 + X_3)/2, X_2, X_3) = (\hat \beta_1 + \hat \beta_2 + \hat \beta_3 - (\pi - y),
\hat \beta_1 + 2 \hat \beta_2 + \hat \beta_3 - \pi, \hat \beta_3 - (\pi - 2y)).
\label{W+8}\ee
Notice that  the change of action variables $(\hat \beta_1, \hat \beta_2, \hat \beta_3) \mapsto \widetilde X$
is an integral affine transformation, i.e., its linear part is given by a matrix in $\mathrm{SL}(3,\bZ)$;
while the linear part of
the affine map  $(\hat \beta_1, \hat \beta_2, \hat \beta_3) \mapsto X$ contains a matrix of determinant two.
The obvious analogue of $\widetilde{X}$ is
\be
\widetilde{W}:= ((W_1 + W_2 + W_3)/2, W_2, W_3) = ((\sqrt{H} + L_{12} + L_{34})/2, L_{12}, L_{34}).
\ee

According to a theorem of Weinstein \cite{W}, there exist open neighbourhoods of the zero section in  the cotangent bundle $T^*S^3$
and of the Lagrangian
submanifold $S^3\simeq \hat \beta^{-1}(I(2)) \subset P(\mu_0(y))$ that are symplectomorphic by a map  sending   the zero section onto $\hat \beta^{-1}(I(2))$.
We believe that a symplectomorphism can be chosen in such a way to intertwine the two integrable systems.

\begin{conj}
There exists a symplectomorphism  $\Phi: U \to \widehat U$  between open
 neighbourhoods $U$ and $\widehat U$ of the zero section of $T^*S^3$ and of the fiber $\hat \beta^{-1}(I(2))$  in $P(\mu_0(y))$
that satisfies the relation
\be
X \circ \Phi = W,
\label{H+conj1}\ee
 with $X$ \eqref{H+1} and $W$ \eqref{H+6}.
As a consequence, $\Phi$ also satisfies
\be
\hat \beta_1^2 \circ \Phi = H
\quad\hbox{and}\quad
 \widetilde{X} \circ \Phi = \widetilde{W}.
 \label{H+conj2}\ee
\end{conj}

 Notice that $\Phi$ subject to \eqref{H+conj2} must map
the zero section onto $\hat \beta^{-1}(I(2))$, since they are the zero sets of $H$ and $\hat \beta_1$, respectively.
The validity of the conjecture would also imply that $\widetilde{W}$ is a true action map outside the zero section,
generating a free $\bT^3$ action on its generic fibers.

It is known \cite{Al,BMZ} that the `geodesic system' \eqref{L1234} is semi-locally equivalent to the
 Gelfand-Cetlin system on a generic coadjoint orbit of the unitary group ${\mathrm{U}}(3)$, whose momentum map also admits
 an $S^3$ fiber (first found in \cite{NNU}).
 In the future, we hope to prove the conjecture by adapting the arguments of \cite{Al} to our case.

\section{Further examples}\label{Sec5}

Here, we generalize
the results of the previous section for the simplest type (ii) cases that occur for any $n\geq 4$.
These examples arise by taking the parameter $y$ as small as possible from the possible ranges of type (ii) values.
It turns that the polytope $\cA_y$ has $(n-2)n$ vertices.  We display the vertices  explicitly and also prove that all
the $2n$ defining inequalities correspond to facets.
Out of the vertices $n$ are regular and $n(n-3)$ are singular, with each of the latter having a single zero component.
By proceeding essentially as in Subsection 4.2, we  demonstrate that the fiber over every singular vertex
is a copy of $S^3$.
 We describe these results in the next subsection, and then present
 miscellaneous  further results.

 Let us a recall a few standard facts \cite{Br,PV,Z} about convex $D$-dimensional  polytopes (bounded polyhedra)
 defined by linear inequalities in an affine space of
 dimension $D$.
 A linear inequality defines a half-space, with equality holding on its boundary hyperplane.
  A proper face of a polytope $\cP$  is a non-empty intersection of $\cP$ with a hyperplane
 that bounds a half-space containing  $\cP$.
 A vertex is a face consisting of a single point.
  The dimension of a face is the dimension of its affine span.
 The polytope is the convex hull of its vertices.
 The proper faces are themselves polytopes, with their vertices giving subsets of the vertices of $\cP$.
 (The full polytope and the emptyset are non-proper faces, of dimension $D$ and $-1$, respectively.)
 The $(D-1)$-dimensional faces are called facets, and the corresponding hyperplanes form in general a subset of
 those that occur in the definition of the polytope.
 The intersection of any family of faces is a face, and all proper faces can be obtained as intersections of subsets of the facets.
 A $d$-dimensional face is contained in at least $(D-d)$ facets. In particular, a vertex is a zero dimensional face,
 and  thus it must be incident with at least $D$ facets.
 If a subset of the defining hyperplanes intersects $\cP$ in a single point, then this point
 is a vertex, as it is a face containing only that point.

\subsection{The simplest type (ii) cases for any $n$}\label{Sec51}
 We consider those type (ii) cases  for $\SU(n)$ whose parameter $y$ satisfies the condition
\be
 \frac{\pi}{n-1} < y < \frac{\pi}{n-2},
\label{51}\ee
for an any $n\geq 4$.
 We note that $1/n$ and $1/(n-1)$ as well as $1/(n-1)$ and $1/(n-2)$ are neighbours in the Farey sequence $\fF_n$.
Thus, we see from Proposition  \ref{thmI1} that the condition \eqref{51} restricts $y$ to the smallest type (ii) admissible values
 within the range \eqref{T21} with $k=1$.
 In general, there occur several other type (ii) intervals in this range.

\begin{thm}\label{thm51}
The polytope $\cA_y \subset E$ \eqref{T23} associated with $\SU(n)$ $(n\geq 4)$ and $y$ in the range \eqref{51} has $n(n-2)$ vertices and $2n$ facets.
The vertices are cyclic permutations of a regular vertex, called $R$, and $(n-3)$ singular vertices,
denoted $I_s$, which are given explicitly as follows:
\be
R:\, \xi_1=  (n-1)y - \pi,\,\, \xi_2 = \xi_n = \pi-(n-2)y,\,\,  \xi_j=y \,\,\, \hbox{for all other components},
\label{54}\ee
and,  with $s=1,\dots, n-3$,
\be
I_s:\, \xi_1=0,\, \,\xi_{2 + s} =\pi -(n-2) y,\,\, \xi_j=y \,\,\, \hbox{for all other components}.
\label{52}\ee
Each of the $2n$ defining equations
\be
\xi_\ell \leq y, \quad \xi_\ell + \xi_{\ell +1} \geq y,\quad  \ell=1,\dots, n,
\label{ineqs}\ee
corresponds to a facet.
 Every regular vertex is incident with $n-1$ facets, and every  singular one is incident with $n$ facets.
 \end{thm}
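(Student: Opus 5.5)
The plan is to work entirely with the explicit inequality description \eqref{ineqs}. This is the $k=1$ case of Theorem \ref{thmT1}: $\xi\in E$, $\xi_\ell\le y$, $\xi_\ell+\xi_{\ell+1}\ge y$. We use the standard polytope facts recalled just before Section \ref{Sec51}, and the cyclic symmetry $\sigma$ of Remark \ref{remT2} reduces everything to the representatives $R$ and $I_s$.

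\emph{Step 1: the listed points are vertices, with the stated incidences.}

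For $R$ \eqref{54}, first check feasibility. One has $0<(n-1)y-\pi<y$ and $0<\pi-(n-2)y<y$, both by \eqref{51}. The pair sums are either $\ge y$ because a summand equals $y$, or they are the two sums $\xi_n+\xi_1$ and $\xi_1+\xi_2$, both of which equal $y$ exactly. So the tight constraints are $\xi_j=y$ for $3\le j\le n-1$, together with $\xi_n+\xi_1=y$ and $\xi_1+\xi_2=y$. That is $n-1$ equalities, and they are not tight anywhere else.

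Next, the $n-1$ tight equalities together with $\sum_j\xi_j=\pi$ have a unique solution. Indeed, they fix $\xi_3,\dots,\xi_{n-1}$ and force $\xi_2=\xi_n=y-\xi_1$, after which the sum fixes $\xi_1$. Hence $R$ is a vertex incident with exactly $n-1$ of the defining hyperplanes.

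For $I_s$ \eqref{52} the sum is $(n-2)y+\pi-(n-2)y=\pi$. The tight constraints are the $n-2$ equalities $\xi_j=y$ and the two pair equalities $\xi_n+\xi_1=y=\xi_1+\xi_2$. Here one must use $1\le s\le n-3$, so that $\xi_2=\xi_n=y$. These $n$ equalities already determine the point: all $\xi_j$ with $j\ne 1,2+s$ are fixed, $\xi_1=0$ follows, and $\xi_{2+s}$ comes from the sum. So $I_s$ is a vertex lying on exactly $n$ hyperplanes.

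It also remains to check that the $n(n-2)$ cyclic images are pairwise distinct. This follows from the position of the unique zero component (for $I_s$) or of the unique component $(n-1)y-\pi$ (for $R$), combined with the value of $s$.

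\emph{Step 2: there are no other vertices (the main obstacle).}

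Let $\xi$ be a vertex, let $Z=\{\ell:\xi_\ell=y\}$, and let $p$ be the number of tight pair constraints. Since there must be at least $n-1$ independent tight constraints, $|Z|+p\ge n-1$. The sum condition $\pi<(n-1)y$ gives $|Z|\le n-2$, so $p\ge 1$.

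The key local facts are the following.
\begin{itemize}
\item[(a)] If $\xi_\ell=y$ and $\xi_\ell+\xi_{\ell\pm1}=y$, then $\xi_{\ell\pm1}=0$.
\item[(b)] Two zero components cannot be adjacent.
\item[(c)] Two consecutive tight pairs give $\xi_{\ell}=\xi_{\ell+2}$.
\end{itemize}
I would then bound $p$ from above using the sum. Each index outside $Z$ has $\xi_\ell\ge y-\xi_{\ell\pm1}$, and summing over a cyclic arrangement in which many pairs are tight forces $\sum\xi\ge$ something exceeding $\pi$ unless $|Z|\ge n-3$. Here one uses $(n-2)y>\pi-y$, i.e. $(n-1)y>\pi$.

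With $|Z|\in\{n-3,n-2\}$, the complement consists of two or three indices. A short case analysis, using (a)–(c) and the requirement of $n-1$ independent tight equalities, should leave exactly the patterns of $R$ and $I_s$ up to cyclic shift. I expect the precise inequality bookkeeping for the bound $|Z|\ge n-3$ to be the delicate part. The first thing I would try is to split the cycle into maximal runs of non-$Z$ indices and estimate the contribution of each run separately.

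\emph{Step 3: facets and incidence counts.}

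By cyclic symmetry it suffices to treat the hyperplanes $\xi_1=y$ and $\xi_1+\xi_2=y$. For each, I would list the vertices from Step 1 lying on it: the cyclic images of $R$ and of the $I_s$ that have the appropriate component equal to $y$, respectively the appropriate pair summing to $y$. Then I would exhibit $n-1$ of them that are affinely independent in $E$. Equivalently, I would check that the face is not contained in any other defining hyperplane, by producing a point in it at which all other inequalities are strict, namely the barycentre of those vertices. This shows that all $2n$ inequalities give facets.

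The incidence numbers, $n-1$ for regular vertices and $n$ for singular ones, are then exactly the tightness counts from Step 1, transported by $\sigma$.
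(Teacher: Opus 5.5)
Your Steps 1 and 3 are sound. In particular, the barycentre criterion in Step 3 is a valid alternative to the paper's explicit affine-span computation, and it only uses the points from Step 1.

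The genuine gap is Step 2. It carries the real content of the theorem, namely that there are no vertices besides the $n(n-2)$ listed ones, and you leave it as a plan. Worse, the plan points the wrong way. Tight pair constraints make $\sum_j\xi_j$ \emph{small}, not large: a tight pair contributes exactly $y$ for two indices, while every other component is at most $y$. So no lower-bound summation built from $\xi_\ell\ge y-\xi_{\ell\pm1}$ and $(n-1)y>\pi$ can give $|Z|\ge n-3$. What is needed is the other half of \eqref{51}, namely $(n-2)y<\pi$.

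The missing observation, which is the core of the paper's argument, is this. Two tight pairs $\xi_a+\xi_{a+1}=y=\xi_b+\xi_{b+1}$ with four distinct indices would give $\pi=\sum_j\xi_j\le 2y+(n-4)y=(n-2)y<\pi$. Hence any two tight pairs share an index. For $n\ge 4$ this allows at most two tight pairs, and two must be consecutive. This gives $p\le 2$, i.e.\ $|Z|\ge n-3$, at once.

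The case $p=1$ must also be excluded. Then $|Z|=n-2$, and the two free components sum to $\pi-(n-2)y<y$, so they cannot be adjacent. A tight pair must therefore be a zero next to an entry $y$. But that zero has both neighbours equal to $y$, which gives a second tight pair.

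So, up to $\sigma$, every vertex has the form $(A,B,\xi_3,\dots,\xi_{n-1},B)$ with $A+B=y$.
\begin{itemize}
\item If $A>0$, then $0<B<y$, since $B=0$ would force $\xi_3=y$ and hence a third tight pair. So all of $\xi_3,\dots,\xi_{n-1}$ must equal $y$, which gives $R$.
\item If $A=0$, the four tight constraints $\xi_2=\xi_n=y$ and $\xi_1+\xi_2=\xi_n+\xi_1=y$ have rank only three. So exactly $n-4$ of $\xi_3,\dots,\xi_{n-1}$ equal $y$; all $n-3$ would give $\sum_j\xi_j=(n-1)y>\pi$. This gives the $I_s$.
\end{itemize}
This rank drop also shows that the count $|Z|+p\ge n-1$ is not by itself a sufficient filter in your case analysis.

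A minor slip in Step 1: at $y=2\pi/(2n-3)$, which lies in the range \eqref{51} and is admissible, one has $(n-1)y-\pi=\pi-(n-2)y$. So that component of $R$ is not unique there. The cyclic shifts of $R$ are still distinct, for example by the position of the cyclic block of entries equal to $y$.
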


 \begin{proof}
The first part of the proof justifies the claim about the vertices and the second one deals with the facets.

 We begin by showing that all vertices are necessarily given be formulas \eqref{54}, \eqref{52} and their repeated cyclic permutations.
 Supposing that $\xi$ is a vertex, it must satisfy all the defining inequalities \eqref{ineqs}, and at least
 $(n-1)$ of them must hold as equalities.
 A key point to observe is that $\xi$ cannot satisfy two equalities
 \be
 \xi_a + \xi_{a+1} = y \quad \hbox{and}\quad \xi_{b}+ \xi_{b+1}=y
 \ee
with four different subscripts (understood modulo $n$), since otherwise the sum of the components of $\xi$ can be at most
$2y + (n-4)y = (n-2) y$, which is smaller than $\pi$ (because of \eqref{51}).
 It also cannot be that only one equality of the
type $\xi_\ell + \xi_{\ell +1} =y$ holds, since in that case we could write at least $(n-1)$  equalities for $\xi$ only by setting
all other $(n-2)$ components equal to $y$, but this is excluded since $(n-1) y > \pi$ by \eqref{51}.
 Therefore, up to cyclic permutations, every vertex must have the form
 \be
 \xi = (A, B,\xi_3,\dots, \xi_{n-1}, B),
 \label{ansatz} \ee
  where $\xi_1 + \xi_2 = \xi_n + \xi_1= A + B =y$.
  Notice that $\xi \in \cA_y$ cannot have two zero components, and in particular $B=0$ cannot hold, since $(n-2)y < \pi$.

 If $A=0$, then $B =y$. Out of the other $(n-3)$ components, only one can be smaller than $y$.
 Indeed, only 3 are independent from the 4 equalities $\xi_2=y$,  $\xi_n=y$, $\xi_1 + \xi_2=y$, $\xi_n + \xi_1=y$,
 and thus we need $(n-4)$ further equalities of the form $\xi_j=y$ in order to obtain a vertex.
 (As is  easily checked directly, with a smaller number of such equalities one obtains a face of non-zero dimension.)
 Supposing that the `special component' is $\xi_{2+s} <y$ for some $1\leq s \leq (n-3)$, the
 condition $\pi = \xi_1+ \dots + \xi_n  = (n-2) y + \xi_{2+s}$  leads to $\xi = I_s$ \eqref{52}.
 This point belongs to $\cA_y$ since  $0<\pi - (n-2)y <y$ due to \eqref{51}.

Let us summarize what we have proved up to now.
 The point $\xi=I_s\in \cA_y$ satisfies
\be
\xi_j = y\quad \hbox{for}\quad j \in \{1, 2,\dots,n\} \setminus \{1, 2+s\}
\quad\hbox{and}\quad \xi_1 + \xi_2 = y, \quad \xi_n + \xi_1 =y.
\label{53}\ee
The last equation is a consequence of the previous $(n-1)$ equations,
which are obviously independent. Our considerations imply
that the face satisfying these equalities consists of the single point $I_s$, and hence this is a vertex.
We have also seen that the vertices obeying \eqref{ansatz} with $A=0$ are exhausted by the
points $I_s$ with $s=1,\dots, n-3$.
A  further straightforward inspection shows that the cyclic permutations of the vertices $I_s$ are all different.
For a fixed $s$, they are different since the $0$ is in different positions.
For different values of $s$, they are different since the $0$ is followed by a different
number of components that are equal to $y$, before reaching the special component $\xi_{2+s}$.

 To get the full set of  vertices, we return to the formula \eqref{ansatz}, but now assume that $ A>0$.
 (Remember that  $\xi_\ell \geq 0$, $\forall \ell$ follows from  \eqref{54}.)
 In this case, all the other $(n-3)$ components of $\xi$ must be equal to $y$, otherwise $\xi$ cannot satisfy $(n-1)$ of the
 $2n$ defining  conditions \eqref{ineqs} as equalities. Then, the two equations $A + B =y$ and
 $\xi_1 + \dots + \xi_n= A + 2B + (n-3) y = \pi$  lead to $\xi=R$ \eqref{54}.
 On account of \eqref{51}, we see that $R \in \cA_y$.
 This is a vertex since
 it lies on the intersection of $(n-1)$ independent hyperplanes defined by
the equations:
\be
\xi_j = y \quad \hbox{for}\quad j \in \{3, \dots, n-1\}
\quad\hbox{and}\quad \xi_{1} + \xi_{2} = y, \quad \xi_{n} + \xi_1 =y.
\label{55}\ee
(One way to see
the independence of the hyperplanes is to view them in $\bR^n$ and verify that their normal vectors together with the
normal vector of $E$ \eqref{T23} form a basis of $\bR^n$.)
The proof also shows directly that $\xi=R$ is the unique point of $\cA_y$ satisfying \eqref{55}.
The cyclic permutations of $R$ give $n$ vertices.

Turning to the second part of the proof, we observe
from the  list of vertices that all the $2n$ hyperplanes of $E$ \eqref{T23}, defined by
\be
\xi_\ell = y,\quad \xi_\ell + \xi_{\ell+1}=y, \quad \ell=1,\dots, n,
\label{56}\ee
intersect the polytope $\cA_y$.
It remains to prove  that each of these intersections is a facet.

We  first demonstrate that the face defined by the intersection of $\cA_y$ with the hyperplane   $\xi_n=y$ of $E\subset \bR^n$
\eqref{T23}  is a facet.
To this end, we verify that the vertices satisfying $\xi_n=y$ generate an affine space of dimension $(n-2)$.
 Let us put
\be
a_n:= (n-1) y - \pi, \qquad b_n:= \pi - (n-2)y,
\ee
and denote by $e_i$ ($i=1,\dots,n$) the standard basis of $\bR^n$, now written as row vectors.
Consider the following pairs of vertices on the face $\xi_n=y$:
\be
(0,y,b_n,\dots, y)
\quad\hbox{and}\quad
(b_n,a_n,b_n,\dots, y),
\ee
\be
(y,0, y, b_n, \dots , y)
\quad\hbox{and}\quad
(y,b_n,a_n,b_n, \dots, y),
\ee
and $(n-4)$ pairs of the form
\be
(0,y, \dots,b_n,y,\dots, y)
\quad\hbox{and}\quad
(0,y,\dots, y,b_n,\dots, y).
\ee
The dots are in the same positions for each pairs and are all filled with $y$.
The differences of the displayed  pairs of vertices give non-zero multiples of the vectors
$(e_1-e_2), (e_2-e_3)$, and $(e_j - e_{j+1})$ for $3 \leq   j \leq n-2$.
This implies that the affine span of the vertices in the face $\xi_n=y$ is of dimension $(n-2)$,
proving that this face is a facet. By cyclic permutations, all faces defined
by $\xi_j=y$ are facets.

To finish, consider the face $\xi_1 + \xi_2 =y$ and its $(n-2)$  pairs of vertices of the form
\be
(0,y,b_n,\dots,y)
\quad\hbox{and}\quad
(b_n,a_n, b_n,\dots, y),
\ee
\be
(0,y,\dots, b_n,y, \dots )
\quad\hbox{and}\quad
(y,0,\dots, y,b_n,\dots).
\ee
The dots are missing if $n=4$, otherwise they are filled with $y$.
The corresponding differences give $b_n(e_1-e_2)$ and
\be
y(e_1-e_2) + a_n (e_j - e_{j+1}),\quad  j=3,\dots, n-1.
\ee
The dimension of the span of these vectors is clearly $(n-2)$.
Using also the cyclic permutation symmetry, this proves that every face having a defining
equation $(\xi_\ell + \xi_{\ell +1}) =y$ is a facet.
\end{proof}

\begin{rem}\label{rem52}
Let us display the full list of vertices contained in a facet.
For this, let $\sigma^k$ ($k=0,1, \dots, n-1$) denote the powers of the cyclic permutation \eqref{T25}, $\sigma(\xi)_j = \xi_{j+1}$.
One easily sees that $\xi_1=y$ holds for the following vertices:
\be
\sigma^k(R), \quad k=2,\dots, n-2
\quad \hbox{and}\quad \sigma^k(I_s), \quad k\in \{ 1,\dots, n-1\}\setminus \{2s+1\},
\label{57}\ee
for all $s=1,\dots, (n-3)$,
and $\xi_1+\xi_2 =y$ holds for the following vertices:
\be
R,\, \nu(R),\, I_s,\, \nu(I_s),
\label{59}\ee
where $\nu = \sigma^{-1}$.
Therefore the facet $\xi_1=y$ contains
\be
(n-3) + (n-3)(n-2) = (n-3)(n-1)
\label{58}\ee
vertices and the facet $\xi_1+\xi_2 =y$ contains
$2(n-2)$ vertices.
With some effort, since we have all vertices of every facet, the list of ridges
(codimension 2 intersections of pairs of facets) and other lower dimensional faces can also be derived from our results.
\end{rem}

Next, we deal with the fibers over the singular vertices.
We need a preliminary lemma.

\begin{lem}\label{lemfs}
Let $u\in \bC^n$ be the last column of a unitary matrix $g$ that satisfies equation \eqref{G2} for $\xi=I_s$ \eqref{52},
with $y$ in \eqref{51}.
Then  $u_k=0$ must be zero, except for $k= 2+s$ and $k=1,2$. Explicitly, we have
\be
\vert u_{2+s}\vert^2 = f_s(y)
\quad \hbox{with}\quad
f_s(y): = \frac{\sin((s+1) y)\sin((n-1)y)}{\sin(s y)\sin(n y)},
\quad \forall  s=1,2,\dots,  n-3.
\label{511}\ee
The functions $f_s$ satisfy $0< f_s(y) < 1$, possibly except for a finite set of $y$ values.
\end{lem}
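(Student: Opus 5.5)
Since $\xi=I_s$ satisfies $\delta(\xi)\sim\hat\mu(y,u)\delta(\xi)$ by Lemma \ref{lemG1}, the plan is to read off $u$ from the residue conditions \eqref{E7} for the rational function on the left-hand side of \eqref{E3}.

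\textbf{Step 1: eigenvalues of $\delta(I_s)$.} By \eqref{T11}, $\delta_{k+1}=e^{2\ri\xi_k}\delta_k$. For $\xi=I_s$ we have $\xi_1=0$, so $\delta_1=\delta_2=:\Delta_0$. All other multipliers are $e^{2\ri y}$, except $e^{2\ri b}$ at position $2+s$, where $b=\pi-(n-2)y$. Setting $q:=e^{2\ri y}$, this gives
\[
\delta_j=\Delta_0 q^{j-2}\ (2\le j\le 2+s),\qquad \delta_j=\Delta_0 q^{j-3}e^{2\ri b}\ (3+s\le j\le n).
\]
Admissibility together with $0<b<y$ should make these $n-1$ values distinct, the only coincidence being $\delta_1=\delta_2$. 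So $r=n-1$, with one double eigenvalue $\Delta_0$ (indices $\cM=\{1,2\}$) and simple eigenvalues elsewhere.

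\textbf{Step 2: cancellation in $F$.} In $F(\zeta)=\prod_j(\delta_j-\zeta)/\prod_j(q\delta_j-\zeta)-1$, the chains $\delta_j\mapsto q\delta_j=\delta_{j+1}$ telescope. The factor $\delta_{j+1}-\zeta$ in the numerator cancels $q\delta_j-\zeta$ in the denominator for $j=2,\dots,1+s$ and $j=3+s,\dots,n-1$. One must also use $\delta_n e^{2\ri\xi_n}=\delta_1$, i.e.\ $q\delta_n=\delta_1$, which cancels a further factor.

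After cancellation, the only surviving poles sit at $q\delta_1=q\delta_2$ (simple), at $q\delta_{2+s}$, and possibly at $q\delta_n$ depending on the bookkeeping. Every residue at $q\delta_k$ for $k\notin\{1,2,2+s\}$ vanishes. By \eqref{E7}, the corresponding simple eigenvalues $\Delta_k=\delta_k$ then force $|u_k|^2=0$, which is the first claim.

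\textbf{Step 3: the value of $|u_{2+s}|^2$.} Compute the residue of $F$ at $\zeta=q\delta_{2+s}$ from the reduced quotient. It is a ratio of a few products of differences of the $\delta$'s, each of the form $\Delta_0q^a(1-q^m e^{2\ri c})$. Rewrite each difference as $2\ri e^{\ri(\cdot)}\sin(\cdot)$ and divide by $(e^{2\ri(1-n)y}-q)\delta_{2+s}$. The phases should cancel, using $ny+b=\pi+2y$-type identities, leaving
\[
\frac{\sin((s+1)y)\sin((n-1)y)}{\sin(sy)\sin(ny)}.
\]
As a cross-check, I would use the analogous limiting form of \eqref{E5}, namely $z_{2+s}$ evaluated at a nearby regular point. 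I expect this trigonometric bookkeeping, especially getting the surviving factors and signs right, to be the main obstacle. I would verify it against $n=4$, $s=1$, where \eqref{F6} gives $1+1/(2\cos 2y)=\sin 2y\sin 3y/(\sin y\sin 4y)$.

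\textbf{Step 4: the bounds on $f_s$.} Nonnegativity of $|u_{2+s}|^2$ and $|u_1|^2+|u_2|^2=1-f_s$ give $0\le f_s\le1$ automatically. Alternatively, one can argue directly for $y\in(\pi/(n-1),\pi/(n-2))$: here $\sin ny<0$ and $\sin((n-1)y)<0$, while $\sin((s+1)y),\sin(sy)>0$ since $(s+1)y<\pi$. The strict inequalities fail only where $\sin((s+1)y)\sin((n-1)y)$ vanishes or equals $\sin(sy)\sin(ny)$. These are analytic equations in $y$ that are not identically satisfied, so they have only finitely many solutions, which yields the exceptional finite set.
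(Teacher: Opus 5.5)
Your proposal is correct and is essentially the paper's argument. The paper evaluates \eqref{E1} at $\zeta=e^{2\ri y}\delta_k(I_s)$: this forces $u_k=0$ whenever $\delta_k$ is simple and $e^{2\ri y}\delta_k$ is again an eigenvalue, and it yields $f_s$ for $k=2+s$, which is the same information as your residues. Your bookkeeping also closes cleanly, because the ratio reduces to $\frac{(\Delta_0-\zeta)(q^{2+s-n}\Delta_0-\zeta)}{(q\Delta_0-\zeta)(q^{s+1}\Delta_0-\zeta)}$; so there is no pole at $q\delta_n=\delta_1$, and the residue at $q^{s+1}\Delta_0$ gives exactly $f_s(y)$.

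For Step 4 you can drop the exceptional set entirely. The identity $\sin((s+1)y)\sin((n-1)y)-\sin(sy)\sin(ny)=\sin((n-1-s)y)\sin y>0$ holds on \eqref{51}, and together with your sign analysis it gives $0<f_s(y)<1$ for every $y$ in \eqref{51}, a point the paper leaves open.
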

\begin{proof}
We begin with a general observation. Suppose  that $\delta_k(\xi)$ has multiplicity $1$ as an eigenvalue of
$\delta(\xi)$ and there exists an index $j$ such that $\delta_k(\xi) e^{2\ri y} = \delta_j(\xi)$.
By evaluation of the equality \eqref{E1} at $\zeta = e^{2\pi\ri  y} \delta_k(\xi)$, we  find that
$\vert u_k \vert^2=0$.

Defining  $Y:= e^{2\ri y}$,  $\delta(I_s)$ has the structure
\be
\delta(I_s) = \delta_1(I_s)\, \diag (1,1, Y,Y^2,\dots, Y^s, Y^{-m}, Y^{-m+1},\dots, Y^{-1}),
\qquad m: = n-2 - s.
\label{dIs}\ee
Therefore the above observation is applicable  except for $k=1,2, s+2$.
As for the formula \eqref{511}, it results by evaluation  of \eqref{E1} at $\zeta= e^{2\ri y} \delta_{2+s}(I_s)$.

Since  $u$ is a column of a unitary matrix and we have proved \eqref{511}, the functions $f_s$ must satisfy
$0\leq f_s(y) \leq 1$. It is easily seen that $f_s(y)$ is never zero in the interval \eqref{51}.
We believe that it is also never equal to $1$, but have not yet proved this in  full generality.
We note that $f_s(y)$ goes to $0$ through positive values as $y$ \eqref{51} tends to the boundary point $\pi/(n-1)$.
For $n=4$, $s=1$, the formula \eqref{511} can be converted into \eqref{F6} by trigonometric identities.
\end{proof}

The following theorem represents a generalization of Proposition \ref{prop42}.

\begin{thm}\label{thm54}
The fiber of the momentum map $\hat \beta$ is a copy of $S^3$ over each of the $n(n-3)$
vertices of $\cA_y$  given by the cyclic permutations of $I_s$ \eqref{52} for $s=1,\dots, (n-3)$.
Here, we suppose that $y$ satisfies \eqref{51} and $f_s(y) \neq 1$.
\end{thm}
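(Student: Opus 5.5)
By the cyclic symmetry \eqref{T26+} of Remark \ref{remT2}, it suffices to treat $\xi=I_s$ for fixed $s$. Translating $B$ by the central element $e^{-2\ri\pi/n}$ maps $\mu^{-1}(\mu_0(y))$ to itself, commutes with the $\SU(n)_{\mu_0(y)}$ action, and carries $\hat\beta^{-1}(\xi)$ diffeomorphically onto $\hat\beta^{-1}(\sigma(\xi))$. We then mimic the proof of Proposition \ref{prop42}, using Theorem \ref{thm314} and Lemma \ref{lem310}, which identify the fiber with $\SU(n)_{\delta(\xi)}/\SU(n)^{u_0}_{\delta(\xi)}$ under the action \eqref{act33}.

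First we determine the groups. By \eqref{dIs}, $\delta(I_s)$ has the double eigenvalue $\delta_1(I_s)$ in positions $1,2$; all other eigenvalues are simple, because $Y^j\neq 1$ for $0<|j|<n$ by admissibility. Hence
\[
\SU(n)_{\delta(\xi)}=\{\diag(\Gamma,\Gamma_3,\dots,\Gamma_n)\mid \Gamma\in\U(2),\ \Gamma_j\in\U(1),\ \det\Gamma\,\Gamma_3\cdots\Gamma_n=1\}\simeq \U(2)\times\bT^{n-2}.
\]
By Lemma \ref{lemfs} and $f_s(y)\neq 1$, we may choose $u_0=(0,r,0,\dots,0,\sqrt{1-r^2},0,\dots,0)^T$ with $r=\sqrt{1-f_s(y)}>0$, the second nonzero entry sitting at position $2+s$. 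Here the assumption $f_s\neq 1$ guarantees that both nonzero entries are genuinely present, and $f_s>0$ holds on \eqref{51}. By the remark after Definition \ref{defn36}, $\SU(n)^{u_0}_{\delta(\xi)}$ consists of the elements of $\SU(n)_{\delta(\xi)}$ having $u_0$ as an eigenvector. This forces $\Gamma=\diag(a,b)$ and $\Gamma_{2+s}=b$, so $\SU(n)^{u_0}_{\delta(\xi)}$ is the diagonal torus of dimension $n-2$ with entries $a,b$ at positions $1,2$, equal entries at positions $2$ and $2+s$, and the remaining entries free, subject to determinant $1$.

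Next we compute $\Theta_0$. The map $\Theta_0$ is a group isomorphism from $\SU(n)^{u_0}_{\delta(\xi)}$ onto a subgroup of $\SU(n)_{\delta(\xi)}$, and in fact onto a maximal torus of it. Indeed, $\hat A_0$ conjugates $\delta(\xi)$ into $\hat\mu(y,u_0)\delta(\xi)$, and $\Theta_0(X)=\hat A_0^{-1}X\hat A_0$ commutes with $\delta(\xi)$. We would pick $\hat A_0$ explicitly, as in \eqref{F11}: on the coordinates orthogonal to $\mathrm{span}(e_2,e_{2+s})$ it is a permutation matrix matching $e^{2\ri y}\delta_j$ with $\delta_{\pi(j)}$, and it is a $2\times2$ block on the remaining pair. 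Then $\Theta_0(X)$ is diagonal, a permutation of the entries of $X$.

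The main step, and the expected obstacle, is to show that the slice $S=\{\diag(Z,1,\dots,1)\mid Z\in\SU(2)\}$ meets every orbit exactly once. The action sends $T$ to $\Theta_0(X)TX^{-1}$. On the $\U(2)$ block, with $\Theta_0(X)$ having entries $(p,q)$ there, it acts as $\Gamma\mapsto\diag(p,q)\Gamma\diag(a,b)^{-1}$. On each $\U(1)$ entry $j\geq3$ it acts by multiplication by the scalar $\Theta_0(X)_{jj}/X_{jj}$. Thus we must solve $n-1$ equations, namely $\det$ of the block equal to $1$ and $\Gamma_j=1$ for $j\geq3$ (one of which is redundant), for $X$ in the $(n-2)$-torus. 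This is a linear system in logarithms, and the point is to show that the associated character map $\SU(n)^{u_0}_{\delta(\xi)}\to\bT^{n-2}$, $X\mapsto(\det$ factor$,\ \Theta_0(X)_{jj}/X_{jj})$, is surjective with kernel $\bZ_n$. For injectivity up to $\bZ_n$, Remark \ref{remG9} gives freeness modulo the center; by a dimension count the map is then a covering onto a torus of dimension $n-2$, hence surjective, and only the kernel needs checking. This produces a unique $X$ modulo $\bZ_n$, and hence a unique $Z\in\SU(2)$. Because $\bZ_n$ acts trivially on $S$, the resulting bijection is smooth with smooth inverse, so it is a diffeomorphism and the fiber is $\simeq\SU(2)\simeq S^3$. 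The hard part is the careful bookkeeping of the permutation induced by $\hat A_0$, read off from \eqref{dIs}, which shifts exponents by one.
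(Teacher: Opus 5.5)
Your route is the paper's: you reduce to $I_s$ by the cyclic symmetry, identify the fiber with $\SU(n)_{\delta(\xi)}/\SU(n)^{u_0}_{\delta(\xi)}$, compute the two groups, take $\hat A_0$ to be a permutation matrix plus a $2\times 2$ block, and gauge-fix to $\diag(Z,1,\dots,1)$. There are two small slips: $\SU(n)_{\delta(\xi)}\simeq\U(2)\times\bT^{n-3}$, and $\Theta_0$ maps onto an $(n-2)$-torus, which is not a maximal torus since the rank is $n-1$.

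\textbf{The gap is in the step you call the main one.} Put $\chi(X)=(\Theta_0(X)_{jj}/X_{jj})_{j=3}^n$. Remark~\ref{remG9} does not imply that $\ker\chi$ is finite, let alone equal to $\bZ_n$. An element $X\in\ker\chi$ still acts on the $\U(2)$ block by $\Gamma\mapsto\diag(p,q)\Gamma\diag(a,b)^{-1}$. This action has no fixed point as soon as $\{p,q\}\neq\{a,b\}$; for example, left multiplication by $\diag(\alpha,\alpha^{-1})$ is free. So freeness modulo the center is compatible with a circle lying inside $\ker\chi$. In that scenario:
\begin{itemize}
\item $\chi$ would have $(n-3)$-dimensional image;
\item your slice would miss most orbits;
\item the quotient would fibre over a circle with two-dimensional fibres, still of total dimension $3=(n+1)-(n-2)$.
\end{itemize}
No dimension count can rule this out, so ``covering, hence surjective'' is unsupported. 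The sentence is also self-contradictory: if freeness gave injectivity modulo $\bZ_n$, there would be no kernel left to check.

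\textbf{What closes the gap} is the bookkeeping you postpone, which the paper carries out explicitly for $s=n-3$. Outside $\mathrm{span}(e_2,e_{2+s})$ the matrix $\hat\mu(y,u_0)\delta(\xi)$ is diagonal with entries $Y\delta_j(\xi)$. By \eqref{dIs}, the $2\times2$ block therefore carries the leftover eigenvalues $\delta_1(\xi)$ and $\delta_1(\xi)Y^{-m}$.

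Choose the columns of $\hat A_0$ as follows: $V_1=e_n$, $V_3=e_1$, $V_j=e_{j-1}$ for $j\notin\{1,2,3,3+s\}$, and $V_2,V_{3+s}$ in the block. Then $\Theta_0(X)$ has $\gamma_n$ in position $1$, $\gamma_1$ in position $3$, $\gamma_2$ in positions $2$ and $3+s$, and $\gamma_{j-1}$ in every other position $j$.

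The action on the $\U(1)$ entries is then $\Gamma_j\mapsto\Theta_0(X)_{jj}\Gamma_j\gamma_j^{-1}$. It runs along the single chain $\gamma_1\to\gamma_3\to\cdots\to\gamma_{2+s}=\gamma_2\to\gamma_{3+s}\to\cdots\to\gamma_n$. Imposing $\Gamma_j=1$ for $j\geq3$ gives $\gamma_j=\gamma_1\Gamma_3\cdots\Gamma_j$ and fixes $\gamma_2=\gamma_{2+s}$. The condition $\det X=1$ then fixes $\gamma_1^n$, so $X$ is determined up to $\bZ_n$. This is exactly the surjectivity of $\chi$ with kernel $\bZ_n$. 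After this gauge fixing, $\Gamma$ automatically lies in $\SU(2)$.
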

\begin{proof}
Recall that, in general, the fiber over $\xi\in \cA_y$ has been identified with the quotient space $\SU(n)_{\delta(\xi)}/\SU(n)_{\xi}^{u_0}$.
By the cyclic symmetry, it is enough to consider the vertices $I_s$.
We have to choose a vector $u$ subject to the properties established in Lemma \ref{lemfs}.
In particular, $u$ satisfies
\be
0<  \sqrt{\vert u_1\vert^2 + \vert u_2 \vert^2} := r_s < 1
\quad \hbox{and}\quad \vert u_{2+s}\vert^2 = (1-r_s^2),
\label{512}\ee
and all its other components vanish.
Clearly,   we have the isotropy group
\be
\SU(n)_{\delta(I_s)} = \{\diag(\Gamma, \Gamma_3, \dots, \Gamma_n) \mid \Gamma \in \U(2),\,
\Gamma_j \in \U(1),\,\,j=3,\dots, n\} \simeq \U(2) \times \bT^{n-3},
\label{513}\ee
where we took into account  the determinant $1$ condition.
 Choosing $u_0:=u$ with non-zero components $u_2 = r_s$ and
$u_{2+s} = \sqrt{1-r_s^2}$, we  obtain
\be
\SU(n)_{I_s}^{u_0} = \{\diag(\gamma_1, \gamma_2,\dots, \gamma_n)\mid \gamma_j \in \U(1),\,  \gamma_2 = \gamma_{2+s}\} = \bT^{n-2}.
\label{514}\ee
A simple counting implies that the fiber
\be
\hat \beta^{-1}(\xi) \simeq \SU(n)_{\delta(\xi)}/\SU(n)_\xi^{u_0}
\label{515}\ee
is a compact manifold of dimension $3$. A further close inspection identifies it as $S^3 \simeq \SU(2)$.

Let us sketch the key step of the inspection for the specific case $s= (n-3)$.
In this case we have
\be
\delta(\xi) = \delta_1 \diag(1,1, Y, Y^2,\dots, Y^{n-3}, Y^{-1})\quad  \hbox{with}\quad  \xi:= I_{n-3},
\label{516}\ee
and
\be
\hat \mu(u_0, y) = Y \delta(\xi) + Q,
\label{517}\ee
where the entries of the matrix $Q$ are zero expect for $Q_{2,2}$, $Q_{n-1,n-1}$, $Q_{2,n-1}$ and $Q_{n-1,2}$.
Let $V_1,\dots, V_n$ denote the columns of $\hat A_0\in \SU(n)$, determined by
\be
\hat A_0 \delta(\xi) = \hat \mu(u_0, y) \hat A_0.
\label{518}\ee
The columns are eigenvectors of $\hat \mu(u_0, y)$ with the respective eigenvalues collected in $\delta(\xi)$.
Obviously, we can choose these eigenvectors as follows:
\be
V_1 = e_n,\, V_3= e_1,\, V_4 = e_3,\dots, V_{n-1}= e _{n-2},
\label{519}\ee
and $V_2 = x_1 e_2 + x_2 e_{n-1}$, $V_n = x_3 e_2 + x_4 e_{n-1}$   with some coefficients $x_i\in \bC$, where
$e_1, \dots, e_n$ denotes the standard basis of $\bC^n$. This generalizes the formula \eqref{F11},
and then the group action \eqref{act33}  is easily calculated to acquire  the following explicit form:
\be
\Gamma \mapsto \diag(\gamma_n, \gamma_2) \Gamma \diag(\gamma_1^{-1}, \gamma_2^{-1}),
\label{520}\ee
and
\be
\Gamma_3 \mapsto \gamma_1 \Gamma_3 \gamma_3^{-1},\,\, \Gamma_j \mapsto \gamma_{j-1} \Gamma_j \gamma_j^{-1}
\quad \hbox{for}\quad j=4,\dots, n.
\label{521}\ee
These `gauge transformations' allow us to identify the quotient space
$\hat \beta^{-1}(\xi) \simeq \SU(n)_{\delta(\xi)}/\SU(n)_\xi^{u_0}$
\eqref{id310} with the set of elements of the form
\be
\{\diag(\Gamma, 1,\dots, 1) \mid \Gamma \in \SU(2)\}.
\label{522}\ee
In fact, transforming the $\Gamma_j$ $(j=3,\dots, n)$ to $1$ fixes all components $\gamma_i$ up to an overall $\bZ_n$ factor
and $\Gamma$  \eqref{513} gets transformed into $\SU(2)$.
This is a natural generalization of the proof of Proposition \ref{prop42},
and essentially the same calculation works  for all singular vertices.
\end{proof}

\begin{rem}\label{remgenI}
Take an arbitrary convex linear combination
\be
\xi = \sum_{s=1}^{n-3} c_s I_s.
\label{523}\ee
It is clear that $\SU(n)_{\delta(\xi)}$ has the same structure as in \eqref{513}, and $\xi \in \partial \cA \cap \cA_y$.
The structure of the corresponding fiber depends on what components of $u$  are zero and whether
$(\vert u_1 \vert^2 + \vert u_2 \vert^2)$ is  zero or not.
 The investigation of this question and the generalization of the statements of Remark \ref{rem41} and Proposition \ref{prop44}
are left to the interested reader.
\end{rem}

\subsection{Additional results}\label{Sec52}

We have performed a numerical exploration  of the type (ii) polytopes up to $n=15$
using the   software packages  {\tt polymake} \cite{poly} and {\tt howzat} \cite{howzat}.  The results are summarized
in part by Figure \ref{Fig2} and in more detail by the list of observations and tables located in Appendix \ref{AppC}.
In the appendix,  not only the number of faces of different dimensions but also the coordinates of the vertices
are presented up to $n=7$.
We observe that the structure changes as we change $k$,  for $k \frac{\pi}{n} < y < (k+1)\frac{\pi}{n}$,
and usually changes also when $y$ passes through an excluded value without changing $k$.
We imposed the condition $0<y<\pi/2$, since the systems associated with $y$ and $(\pi-y)$  are isomorphic \cite{FK}.
\emph{We found $2n$ facets in all type (ii) cases considered, and conjecture that this always holds}.

The software programs that we used are  capable to  return  the vertices for specific numerical values of the
parameter $y$, after inputting the defining inequalities.
We combined this with a linear interpolation to determine the vertices in intervals of $y$ between two excluded values.
The resulting points can be checked to be indeed vertices, and can be utilized for finding the faces.
We next describe the justification of this procedure.

Let us suppose that
\be
 k \pi/n < y_1 < y_2 < (k+1)\pi/n
 \ee
  holds and $\xi^{(1)}$ and $\xi^{(2)}$ verify the inequalities \eqref{T24} at $y_1$ and $y_2$, respectively.
As a result,
\be
\xi^{(t)}:= t \xi^{(1)} + (1-t) \xi^{(2)}
\ee
verifies the inequalities at $y(t) = t y_1 + (1-t) y_2$ for any $0\leq t \leq 1$.
Moreover, if $\xi^{(1)}$ and $\xi^{(2)}$ satisfies the same subset of the defining inequalities \emph{as equalities},
then $\xi^{(t)}$ satisfies precisely the same equalities.
Suppose that the pertinent set of \emph{inequalities and equalities} admits at most a single solution for any $y$ in the relevant range.
In this case, if $\xi^{(1)}$ and $\xi^{(2)}$ are vertices, then $\xi^{(t)}$ is also a vertex.
This property is consistent with the  expectation (confirmed by all known examples) that the structure of $\cA_y$ remains constant
between two excluded values of the parameter $y$.

The following lemma is worth stating, since it can be used to establish the existence of $2n$ facets once one knows the existence and the
form of vertices in $(\partial \cA_y)^\reg$.

\begin{lem}\label{facetlemL}
Suppose that $R \in (\partial \cA_y)^\reg$ is a vertex in a type (ii) case with $\frac{k}{n} \pi < y < \frac{k+1}{n}\pi $ that satisfies
two equalities of the form
\be
\xi_\ell+\cdots + \xi_{\ell +k-1} =y
\quad\hbox{and}\quad
\xi_m + \cdots + \xi_{m+k} = y
\label{facetcondL}\ee
for some $\ell$ and $m$. Then all the $2n$ defining inequalities of the polytope are facet generating, i.e., $\cA_y$ has $2n$ facets.
\end{lem}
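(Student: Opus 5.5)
The plan is to combine the local structure at a regular vertex, established in Section~\ref{Sec32}, with the cyclic symmetry $\sigma$ of $\cA_y$ from Remark~\ref{remT2}. First I record what is known about $R$. Since $R\in(\partial\cA_y)^\reg$ is a vertex, the discussion after Corollary~\ref{corG12} shows that $R$ satisfies \emph{precisely} $(n-1)$ of the $2n$ defining relations \eqref{T24} as equalities. That argument uses $z_\ell(R,y)=0$ for each tight relation together with $\sum_\ell z_\ell(R,y)=1$.

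Next I show that each of these $(n-1)$ tight relations cuts out a facet of $\cA_y$. By the general polytope facts recalled at the beginning of Section~\ref{Sec5}, a vertex of the $(n-1)$-dimensional polytope $\cA_y$ lies on at least $(n-1)$ facets. Every facet hyperplane is one of the defining hyperplanes, so every facet through $R$ is one of the hyperplanes of \eqref{T24} on which $R$ lies, and there are only $(n-1)$ of these. To conclude that each of them is a facet, I must check that distinct defining relations give distinct hyperplanes in $E$. Two $k$-sums, or two $(k+1)$-sums, with different starting indices have different normal vectors modulo the normal $(1,\dots,1)$ of $E$, since $1\le k\le n-2$. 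A $k$-sum hyperplane could coincide with a $(k+1)$-sum hyperplane only via complementation in $E$. This would require $n-k=k+1$ and $\pi-y=y$, i.e.\ $y=\pi/2$. That value is excluded by \eqref{I.6} with $m=2$. Hence the $(n-1)$ tight hyperplanes at $R$ are pairwise distinct, and each is a facet. In particular, the hypothesis \eqref{facetcondL} yields one facet of $k$-sum type, $\xi_\ell+\dots+\xi_{\ell+k-1}=y$, and one of $(k+1)$-sum type, $\xi_m+\dots+\xi_{m+k}=y$.

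Then I apply the cyclic symmetry. By Remark~\ref{remT2}, $\sigma$ maps $\cA_y$ onto itself and faces to faces of the same dimension, so it maps facets to facets. Under $\sigma^j$, the hyperplane $\xi_\ell+\dots+\xi_{\ell+k-1}=y$ is carried to the one with starting index $\ell-j$ modulo $n$, and similarly for the $(k+1)$-sums. Letting $j$ run over $0,\dots,n-1$ therefore shows that all $n$ relations of $k$-sum type and all $n$ of $(k+1)$-sum type are facets. This gives $2n$ distinct facets, again using the distinctness check above.

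The main obstacle I expect is the middle step: justifying rigorously that every tight relation at a regular vertex is facet-defining. This is where the exact count $(n-1)$, rather than merely ``at least $(n-1)$'', is essential, and where the possible coincidence of defining hyperplanes inside $E$ must be ruled out. I would handle it exactly as sketched, by the normal-vector comparison and the exclusion of $y=\pi/2$.
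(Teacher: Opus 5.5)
Your proposal is correct and follows the paper's own argument. The exact count of $(n-1)$ tight relations at a regular vertex, from the end of Section~\ref{Sec32}, forces each of them, in particular the two in \eqref{facetcondL}, to be facet-defining, and the cyclic symmetry $\sigma$ then carries this to all $2n$ hyperplanes.

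Your explicit check that the $2n$ hyperplanes are pairwise distinct is a useful addition that the paper leaves implicit. One small imprecision there: when $n=2k$ (two $k$-sums) or $n=2k+2$ (two $(k+1)$-sums), complementary index sets of equal length give parallel hyperplanes in $E$, so the normals are not "different modulo $(1,\dots,1)$" up to scale. These hyperplanes still coincide only at the excluded value $y=\pi/2$, so your conclusion stands.
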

\begin{proof}
It was explained at the end of Section \ref{Sec32}
that the defining hyperplanes of the polytope $\cA_y$ that meet at a vertex $R\in \cA^\reg$ are facet generating.
Thus, our assumptions   guarantee that the hyperplanes \eqref{facetcondL} correspond to facets.
By invoking the cyclic permutation symmetry, we conclude that all the $2n$ defining hyperplanes are facet generating.
\end{proof}

\begin{rem}
  Lemma \ref{facetlemL} is applicable to all examples presented in Appendix \ref{AppC}.
It can also be used to give an alternative proof of the claim of Theorem \ref{thm51} about the $2n$ facets, but
we preferred to develop a direct proof there.
\end{rem}

Incidentally, one may also study the polytope $\cA_y$ at the so far excluded values of $y$.
For example,  we know that at $y = \pi \frac{k}{n}$ with $\gcd(k,n)=1$  it contains only  the point $\xi^*$ \eqref{xistar}.
Less severe degenerations occur at other excluded values, as exemplified by Figure \ref{Fig1}.

Finally, let us mention some results on the possible consecutive zeroes of $\xi\in \cA_y$.
This is important since $m$ consecutive zeroes of $\xi$ imply that $\SU(n)_{\delta(\xi)}$ contains
an $\U(m+1)$ factor, and this enters into the structure of the fiber $\hat \beta^{-1}(\xi)$.
It is plain that $m$ consecutive zeroes might appear only if $y$ belongs to the range  \eqref{T21}  with $k \geq m$.
We inspected the possibilities with $k=2$ up to $n=10$  via a direct `by hand' calculation.
In this range, we found that two consecutive zeroes occur (only) for $n=9$.

\begin{prop}\label{prop56}
Consider the type (ii) cases for $n=9$ and $k=2$ \eqref{T21}. Then, the points of $\cA_y$ with 2 consecutive zeroes are exhausted by
\be
\xi = (0,0,y, 0, \pi - 3y, 4y - \pi, \pi - 3y, 0,y)
\quad
\hbox{with}\quad
 \frac{1}{4} \pi  < y  <  \frac{1}{3} \pi,\quad y\neq \frac{2}{7}\pi,
\ee
and its cyclic permutations.
For $k=2$, $n\leq 10$ there is no other type (ii) case with some $\xi\in \cA_y$ having 2 consecutive zeroes.
\end{prop}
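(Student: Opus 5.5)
The plan is a direct elimination argument built only on the defining inequalities \eqref{T24} with $k=2$:
\be
\xi_\ell+\xi_{\ell+1}\le y,\qquad \xi_\ell+\xi_{\ell+1}+\xi_{\ell+2}\ge y,\qquad \ell=1,\dots,n,
\ee
together with $\xi\in E$ and $2\pi/n<y<3\pi/n$. By the cyclic symmetry of Remark \ref{remT2}, it suffices to assume $\xi_1=\xi_2=0$ and then determine all remaining components.

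\textbf{Step 1 (propagation from the zero pair; any $n$).} The triple inequality at $\ell=n$ reads $\xi_n+\xi_1+\xi_2\ge y$, so $\xi_n\ge y$. The pair inequality $\xi_n+\xi_1\le y$ gives $\xi_n\le y$, hence $\xi_n=y$. Then $\xi_{n-1}+\xi_n\le y$ forces $\xi_{n-1}=0$. The mirror argument on the other side gives $\xi_3=y$ and $\xi_4=0$. Next, the triple at $\ell=4$ gives $\xi_5+\xi_6\ge y$, while the pair inequality gives $\xi_5+\xi_6\le y$; so $\xi_5+\xi_6=y$. Symmetrically, $\xi_{n-3}+\xi_{n-2}=y$. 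The unknowns are thus reduced to the middle block $\xi_5,\dots,\xi_{n-2}$. This block must satisfy the pair and triple inequalities, the two boundary equalities just found, and the sum condition $\xi_5+\cdots+\xi_{n-2}=\pi-2y$. All the other inequalities (those involving $\xi_1,\dots,\xi_4$ and $\xi_{n-1},\xi_n$) are automatically satisfied, and I will verify this explicitly.

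\textbf{Step 2 (case analysis $n\le 10$).} For $n\le 5$ the forced entries collide: for instance, $\xi_3=y$ and $\xi_{n-1}=0$ refer to the same index when $n=4$. Such collisions give contradictions. For $n=6$ the block is empty and $2y=\pi$, which lies outside the $k=2$ range. For $n=7$ the single entry $\xi_5$ must satisfy $\xi_5\ge y$ and $\xi_5\le y$, so $\xi_5=y$ and $3y=\pi$, an excluded value. For $n=8$ the condition $\xi_5+\xi_6=y$ gives $3y=\pi$, again excluded. For $n=10$ we get $\xi_5+\xi_6=\xi_7+\xi_8=y$, so $2y=\pi-2y$ and $y=\pi/4$, which is excluded since $1/4\in\fF_{10}$. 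For $n=9$ the block is $(a,b,c)$ with $a+b=b+c=y$, so $c=a$, and $2a+b=\pi-2y$. This yields $a=\pi-3y$ and $b=4y-\pi$, i.e.\ exactly the displayed point. The remaining inequalities ($a\le y$, $c\le y$, $a+b+c\ge y$) then hold, and nonnegativity is equivalent to $\pi/4\le y\le\pi/3$. Since $1/4$ and $1/3$ are excluded values, we obtain the open interval.

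\textbf{Step 3 (type (ii) bookkeeping).} For $n=9$ the relevant part of $\fF_9$ reads $1/5<2/9<1/4<2/7<1/3$. By Proposition \ref{thmI1}, the intervals adjacent to $2/9$ are of type (i), and $(1/4,2/7)$ and $(2/7,1/3)$ are of type (ii). So the admissible range is precisely $\pi/4<y<\pi/3$ with $y\neq 2\pi/7$. Moreover, type (i) values give no boundary points with zeros at all, consistent with Step 2. The main care point is Step 1's claim that nothing else constrains the block; I would check each of the $2n$ inequalities explicitly for $n=9$. The other cases are short finite checks.
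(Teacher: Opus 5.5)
Your proposal is correct and follows essentially the same route as the paper: sandwiching the pair and triple inequalities forces $\xi=(0,0,y,0,a,y-a,a,0,y)$, the sum condition gives $a=\pi-3y$, and Proposition~\ref{thmI1} identifies the type (ii) range, and you additionally carry out the $n\le 10$ no-go check that the paper leaves to the reader. One small inaccuracy: for $n=5$ the forced entries do not collide but produce the admissible-looking point $(0,0,y,0,y)$, which requires the excluded value $y=\pi/2$ (and $k=2$, $n=5$ contains no type (ii) values anyway), so your conclusion is unaffected.
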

\begin{proof}
If $\xi_1=\xi_2=0$, then $\xi$ must have the form
\be
\xi=(0,0,y,0, \xi_5,\xi_6,\xi_7, 0,y).
\ee
In fact, this is a direct consequence of the inequalities imposed on the 2 term and 3 term sums.
Since $\xi_5 + \xi_6 \leq y$ and $\xi_4 + \xi_5 + \xi_6 \geq y$, we get $\xi_5 + \xi_6 =y$
and in the same way $\xi_6 + \xi_7 = y$. Thus, $\xi$ must satisfy
\be
\xi=(0,0,y,0, a,\pi -a,a, 0,y)
\label{xi9}\ee
with some $a$. Since the sum of the components of $\xi$ is $\pi$, we get $a= \pi - 3y$.
This  leads to $\xi_6= 4y - \pi$, which is positive for $y > \pi/4$.
We note that one passes to type (i) cases below the excluded value $\pi/4$,
and $2\pi/7$ is the only excluded value between $\pi/4$ and $\pi/3$.
 This follows from Proposition \ref{thmI1} since  $2/9$, $1/4$, $2/7$ and $1/3$
 are consecutive elements of $\fF_9$.

The interested reader can confirm the `no go' claim by a similar elementary calculation.
\end{proof}

We presented the above result together with the proof since it
can be useful for studying other cases, too.
One can check that $\xi$ \eqref{xi9} is a  vertex as it belongs to a one point face.
In fact, it satisfies 15 out of the 18 relevant inequalities as equalities.
The only exceptions are $\xi_4 + \xi_5 = \xi_7 + \xi_8 = a <y$ and $\xi_1 + \xi_2 = 0 < y$.

 It is an interesting problem to determine the structure of the quotient space
 $\SU(n)_{\delta(\xi)}/\SU(n)_\xi^{u_0}$ \eqref{id310}
for $\xi$ \eqref{xi9}, which gives the fiber $\hat \beta^{-1}(\xi)$ as a manifold.

\section{Summary and outlook}\label{Sec6}

In this paper we studied the fibers of the momentum map for the Liouville integrable systems
$(P(\mu_0(y)), \omega_{\mu_0(y)}, \hat \fH_b)$ of \cite{FK}
 for the type (ii) admissible values of the parameter $y$.
More exactly, we were interested in the fibers over the `singular boundary points' $\xi \in \partial \cA_y \cap \partial \cA$
of the momentum polytope $\cA_y$.
In Section \ref{Sec3}, we proved that \emph{all fibers are smooth isotropic submanifolds}.
The proof led to the following algorithm for
determining the structure of the fibers:

\begin{enumerate}

\item{Find the boundary of the polytope $\cA_y\subset \cA$ \eqref{T8} defined by the inequalities \eqref{T25}.}

\item{Take a point $\xi \in \partial \cA_y \cap \partial \cA$ and consider the isotropy group
$\SU(n)_{\delta(\xi)}$ with $\delta(\xi)$ \eqref{T9}.}

\item{Pick a unit vector $u_0\in \bC^n$ whose components satisfy equations \eqref{E1}, \eqref{E7}
and find a matrix $\hat A_0\in \SU(n)$ for which $\hat A_0 \delta(\xi) \hat A_0^{-1}= \hat \mu(y,u_0) \delta(\xi)$ holds,
using the definition \eqref{G1}}.

\item{Consider the subgroup $\SU(n)_\xi^{u_0} <\SU(n)_{\delta(\xi)}$ whose elements
have $u_0$ as an eigenvector.}

\item{Finally, establish the structure of the quotient space
$\SU(n)_{\delta(\xi)}/\SU(n)_\xi^{u_0}$  for the action of $\SU(n)_\xi^{u_0}$ on $\SU(n)_{\delta(\xi)}$
defined by  \eqref{act33}, \eqref{G23}.}

\end{enumerate}

\noindent
We demonstrated that the fiber $\hat \beta^{-1}(\xi)$ is diffeomorphic to
$\SU(n)_{\delta(\xi)}/\SU(n)_\xi^{u_0}$, which is a smooth manifold since the pertinent isotropy
groups are all equal to  the center $\bZ_n$ of $\SU(n)$.

In Sections \ref{Sec4} and \ref{Sec5}, we successfully applied the algorithm to find the fibers over the `singular
vertices' for an infinite
series of type (ii) cases.
Namely, for $\SU(n)$ with $n\geq 4$ and $y$ in \eqref{51}, we proved that the polytope $ \cA_y$ possesses $n(n-3)$ singular
vertices and all the corresponding fibers are diffeomorphic to the manifold $S^3\simeq \SU(2)$.

For possible future work, we remark
that the structure of the fiber $\hat \beta^{-1}(\xi)$ depends critically on two key features.
First, $m$ consecutive zeroes of $\xi$ imply $m+1$ consecutive equal eigenvalues of $\delta(\xi)$ by \eqref{T11}, which
give rise to a $\U(m+1)$ factor in the group $\SU(n)_{\delta(\xi)}$.
Second, in the case when the corresponding $m+1$ components of $u_0$ vanish, this factor is
 `cancelled' by a corresponding $\U(m+1)$  factor that appears in $\SU(n)_\xi^{u_0}$. Whether this happens or not
 can be established, in principle, by  inspection of the relation \eqref{E7}.
 If the residue in \eqref{E7} is non-zero, then the corresponding factor of $\SU(n)_\xi^{u_0}$
 is isomorphic to the group $\U(m)\times \U(1)$,  and one expects to obtain an interesting fiber.

The application of the algorithm to more general type (ii) cases is left as an open problem.
 The first example to be studied is the fiber over the vertex with double zeroes of $\xi$
exhibited in  Proposition \ref{prop56}. Based on a preliminary investigation,
\emph{in this case the fiber is expected to be a Lagrangian $\SU(3)$ submanifold.}
 We note in passing that the description of the singular fibers as quotient spaces by the action \eqref{act33}
appears reminiscent of the characterization of the Gelfand--Cetlin fibers as `balanced products'
given by Theorem 3.4.3 in \cite{CL}.

In addition to the open problems mentioned above and those noted in the main text,
there are several further questions that would deserve attention.
A major problem is to better understand  the global geometry of the
phase spaces $P(\mu_0(y))$ \eqref{I.3} and, if possible, connect them to known
symplectic  manifolds.
Another  question is whether our integrable systems can be related to
toric degenerations \cite{HK}, as is the case for other systems exhibiting spherical
singularities \cite{NNU,NU-pol}.

The knowledge of action variables is useful not only for analyzing the classical dynamics of integrable Hamiltonians, but also
for obtaining their semiclassical spectra.  For compact toric systems, fitting an integral lattice on the momentum polytope
leads directly to the spectra of the quantized action variables \cite{FKlim2,Ham}.
 The action variables for certain non-compact toric \cite{KL} relatives of our systems have been recently
investigated in \cite{LMRS} and it could be interesting to explore their quantization from the semiclassical viewpoint.
The quantization of the  type (ii)  systems considered in this paper is  an important open problem, especially
because of their interpretation as compactified Ruijsenaars--Schneider systems (see Appendix \ref{AppA}).
 For the type (i) cases,
 a quantization utilizing discrete difference operators built on an integral lattice over the `configuration polytope'
has been worked out   in detail \cite{vDV,GH},  giving not only the spectra but also the joint eigenvectors
of the commuting Hamiltonians.
However, as was noted in \cite{GH},  it is not clear if  this method can be adapted to type (ii) cases.
In these circumstances, a case study of the  quantization of simplest type (ii) systems is warranted,
and the explicit knowledge of the vertices  provided by our Theorem \ref{thm51}
is expected to be useful for such a study.

\medskip

\subsubsection*{Acknowledgements}
 We are grateful to
 T.F.~G\"orbe and T.~Waldhauser for explaning us the relevance of Farey neighbours, which led to the
 presented formulation of Proposition \ref{thmI1}.
LF also wishes to thank J.~Evans, M.~Fairon, F.~Fodor, K.-H. Neeb and L.~Polterovich
for discussions and correspondence.
He thanks the Sydney Mathematical Research Institute  as well for support
during a visit when this work was started.
LF was supported in part  by the
grant NKKP Advanced 152467 and
HRD was supported by ARC grants DP210100951 and DP210101102.

\medskip

\appendix
\section{Interpretation as compactified trigonometric RS systems}\label{AppA}

The double $P$ \eqref{I.1} carries \emph{two} natural Abelian Poisson algebras,
defined by the class functions of either $A$ or $B$ for $(A,B) \in P$.
These descend to two  sets of Hamiltonians in involution  on the reduced phase space $P(\mu_0(y))$, which
yield two \emph{isomorphic}\footnote{This  means that the
action variables of the two systems can be converted into each other by a symplectomorphism.}
Liouville integrable systems.
They can be interpreted as two copies of the so-called compactified trigonometric Ruijsenaars--Schneider (RS) system,
which is an integrable many-body model of `point particles' moving on the circle.
As recalled below, this interpretation is tied to describing the commuting Hamiltonians of one integrable system in terms of  action-angle coordinates
of the other one.
It should be noted that what follows is valid for any admissible parameter $y\in [0,\pi]$.

Before plunging into details, it is in order to outline the history of these integrable systems.
The system on ${\bC}P^{n-1}$ was originally constructed by Ruijsenaars \cite{RIMS95} via a direct compactification
of the local phase space of the Hamiltonian \eqref{A6}, which made its flow complete.
In \cite{RIMS95},  where the relevant system is called the $\mathrm{III}_b$ system,
a restriction equivalent to $0<y < \pi/n$ was imposed based on a heuristic argument.
Under this restriction, the reduction treatment using the double $P$ \eqref{I.1} was worked out in \cite{FKlim}, motivated also
by earlier reduction derivations (e.g.~\cite{FR}) of the complex holomorphic version of the trigonometric
RS system  \eqref{A7}.
The paper \cite{FKlim} contains an exhaustive analysis of the relation between the two equivalent systems carried by $P(\mu_0(y))$, too.
The general compact case, and in particular the realization of the intriguing dependence on $y\in [0,\pi]$, is due to \cite{FK}.

The integrable system stemming from the class functions of $B$ admits action-angle
coordinates on the dense open submanifold  $P(\mu_0(y))^\reg:= \hat \beta^{-1}(\cA_y^o)$ \eqref{G30}.
Such a coordinate system can be obtained \cite{FKlim,FK} with the aid of a
cross-section for the $\SU(n)_{\mu_0(y)}$ action on $\beta^{-1}(\cA_y^o) \cap \mu^{-1}(\mu_0(y))$.
To present this, first
define the $\SU(n)$-valued matrix function $\cL_y^\loc$ on $\cA_{y}^o \times \bT^{n-1}$
by
\be
\cL_y^\loc(\xi,\theta)_{j \ell} :=
\frac{\sin (ny)}{\sin (y)}
\frac{e^{\ri y} - e^{-\ri y}}{e^{\ri y} \delta_j(\xi) \delta_\ell(\xi)^{-1} - e^{-\ri y}}
v_j(\xi,y) v_\ell(\xi,\pi-y) \varrho(\theta)_\ell,
\label{A1}\ee
where $\theta = (\theta_1,\dots, \theta_{n-1})$ parametrizes the maximal torus of $\SU(n)$ by the formula \eqref{T14}
and $v_\ell(\xi,y) := \sqrt{z_\ell(\xi, y)}$ with
the functions $z_\ell$  \eqref{E5} that are positive on $\cA_y^o = \cA_{\pi-y}^o$.
Then,
for any $v\in \bR^n$ that has unit norm and last component $v_n\neq -1$,
define the real unitary matrix $g(v)$  by
\bea
&&g(v)_{jn} := - g(v)_{nj}:= v_j,
\quad
\forall j=1,\ldots, n-1,
\quad
g(v)_{nn} := v_n,\nonumber\\
&& g(v)_{jl} := \delta_{jl}- \frac{v_j v_l}{ 1 + v_n},
\quad
\forall j,l=1,\ldots, n-1.
\label{A2}\eea
Using the positive vector $v(\xi,y)$,  denote $g_y(\xi):= g(v(\xi,y))$.

It is proved in \cite{FKlim,FK} that the set
\be
\cS:= \left\{\left( g_y(\xi)^{-1} \cL_y^\loc(\xi,\theta) g_y(\xi), g_y(\xi)^{-1}
\delta(\xi)g_y(\xi) \right)\,\Big\vert \,
(\xi,\theta)\in \cA^o_{y} \times \bT^{n-1}\,\right\}
\subset \mu^{-1}(\mu_0(y))
\label{A3}\ee
represents a  global cross-section of the orbits of $\SU(n)_{\mu_0(y)}$
in $\beta^{-1}(\cA^o_{y}) \cap \mu^{-1}(\mu_0(y))$.
Thus, $\cS$ provides a model of the quotient space $P(\mu_0(y))^\reg\subset P(\mu_0(y))$.
This model comes equipped with action-angle coordinates since the symplectic form  $\omega_{\mu_0(y)}$ can be written as
$ \sum_{k=1}^{n-1} \mathrm{d} \theta_k \wedge \mathrm{d} \xi_k$ on $P(\mu_0(y))^\reg \simeq \cS$.

Consider the Liouville integrable system on $P(\mu_0(y))$ that arises from the
Abelian Poisson algebra $\pi_1^* C^\infty(\SU(n))$, i.e., from the class functions of $A$.
 After restriction to the dense open subset modeled by $\cS$ \eqref{A3}, the commuting Hamiltonians of this system become
 the functions of the form
\be
\cH(\xi, \theta) = h( \cL_y^\loc(\xi,\theta)), \qquad \forall h \in C^\infty(\SU(n))^{\SU(n)}.
\label{A4}\ee
The simplest globally smooth complex class function is the matrix trace, which gives
\be
 \tr \bigl(\cL_y^\loc\bigr)=  s \sum_{\ell =1}^n \varrho(\theta)_\ell
 \Biggl[ \prod_{\substack{j=1\\j \neq \ell}}^n \left[ 1 + 4 \frac{\sin^2 y}
{ \bigl((\delta_\ell/\delta_j)^{\frac{1}{2}} - (\delta_\ell/\delta_j)^{-\frac{1}{2}}\bigr)^2} \right] \Biggr]^{\frac{1}{2}}
\label{A5}\ee
with
$s= {\mathrm{sign}}\left(\frac{ \sin(y)}{\sin(ny)}\right)$.
By putting $\delta_j:= e^{2 \ri q_j}$ and $\varrho(\theta)_j:= e^{\ri p_j}$ and dropping the sign-factor $s$,
the real part of the trace yields the Hamiltonian
\be
H_y^\loc(q,p) = \sum_{\ell=1}^n \cos (p_\ell)
\Biggl[ \prod_{\substack{j=1\\j \neq \ell}}^n \left[ 1 -\frac{\sin^2 y}
{ \sin^2(q_\ell - q_j)} \right] \Biggr]^{\frac{1}{2}}.
\label{A6}\ee
This Hamiltonian  defines a dynamics of $n$ interacting points moving on the circle, subject to
the `center of mass conditions' $\prod_{j=1}^n \delta_j = \prod_{j=1}^n e^{\ri p_j} =1$.
It represents  a compact form of the standard trigonometric Ruijsenaars--Schneider Hamiltonian \cite{RS}:
\be
H_{\mathrm{trigo-RS}}(q,p) = \sum_{\ell=1}^n \cosh (p_\ell)
 \prod_{\substack{j=1\\j \neq \ell}}^n \left[ 1 +\frac{\sinh^2 y}
{ \sin^2(q_\ell - q_j)} \right]^{\frac{1}{2}}.
\label{A7}\ee

The flows of reduced Hamiltonians descending from the real and imaginary parts of $\tr(A^k)$ for $k=1,\dots, n-1$
can leave the dense open subset $P(\mu_0(y))^\reg$,
but are of course complete on the full phase space  $P(\mu_0(y))$. Action variables for this Liouville integrable system
are provided by the mapping $\hat \alpha$ induced by $\alpha(A,B):= \Xi(A)$, similarly to \eqref{T13}.
The $\hat \alpha$ image of $P(\mu_0(y))$ is again $\cA_y$,   and the corresponding fibers have the same structure that we described
in the main text for $\hat \beta$.
\emph{The message is that all results of the present paper can be interpreted as statements about the action map (alias momentum map)
for the compactified trigonometric RS system.}

It is an interesting open question to find the `particle positions' and complementary coordinates
at which the action variables of the compact RS system reach the vertices
of the momentum polytope.
Let $\eta$ be a vertex of $\cA_y$ and consider the fiber $\hat \alpha^{-1}(\eta)$.
In principle, this fiber can also be described as a subset of $\hat \beta^{-1}(Q_\eta)$,
where
\be
Q_\eta:= \hat \beta(\hat \alpha^{-1}(\eta)).
\label{A8}\ee
The question is to find $Q_\eta$, and the points of the $\hat \beta$-fibers over it that fill
$\hat \alpha^{-1}(\eta)$.
This question is open even for the `regular vertices' that correspond to fixed points for the torus action generated
by $\hat \alpha$, both for type (i) and type (ii) systems.

\section{Elaborations regarding Proposition \ref{thmI1}}\label{newAppB}

First,  let us explain the relation of Proposition \ref{thmI1} to  corresponding statements in \cite{FK}.

Fix $1 \leq \kappa \leq n-1$ coprime to $n$, and consider the conditions in \eqref{HardyI1}
for  integers $a,b,c,d$ supposing that
 $1 \leq b,d \leq (n-1)$.
It is a fundamental fact \cite{HW} that these requirements admit unique integer solutions, and the solutions also satisfy
\be
0 \leq  a,c \leq n-1, \quad  \kappa=a + c, \quad n = b+d \quad \hbox{and} \quad \gcd(a,b) = \gcd(c,d)=1.
\label{Hardy2}\ee
Note that $b=1$ and $a=0$ if $\kappa =1$, and one puts $\gcd(0,1):=1$.
It has been shown in \cite{FK} that all non-excluded values of $y/\pi$ in the intervals $(a/b, \kappa/n)$ and $(\kappa/n, c/d)$ are type (i)
and all the other non-excluded values of $y/\pi$ outside these intervals are type (ii).
The only property that has not been shown  (but conjectured) in \cite{FK} is that there are no excluded values inside
the intervals $(a/b, \kappa/n)$ or $(\kappa/n, c/d)$.
 If one knows about Farey sequences,
 this follows immediately since the equations \eqref{HardyI1} and \eqref{Hardy2} imply  that
$a/b$ and $c/d$ are the neighbours of $\kappa/n$ in $\fF_n$.
In fact \cite{HW,R}, two terms $p/q < r/s$ of $\fF_n$ (written in reduced form) are neighbours if and only if
$rq-ps=1$ and $q+s > n$.

Second, let us comment on the asymptotics of the number of type (ii) and type (i) intervals as functions of $n$.
  As is shown in \cite{HW} (Theorem 331), the leading
  asymptotics of the number of terms in $\fF_n$ is $3 n^2/\pi^2$.
   On the other hand the number of coprime  positive integers less than $n$ is Euler's totient function $\phi(n)$, and its asymptotics
   is linear in $n$. The maximum $n-1$ is attained for prime $n$.
(Note that in \cite{HW} $\fF_n$ is called Farey \emph{series.})

\section{Some technical details}\label{AppB}

In Section \ref{Sec3}, we proved that the fiber $\hat \beta^{-1}(\xi) \subset P(\mu_0(y))$ is an embedded submanifold,
diffeomorphic to the quotient space $\SU(n)_{\delta(\xi)}/ \SU(n)_{\delta(\xi)}^{u_0}$ with respect to the action \eqref{act33}.
The definition of the latter quotient space involved arbitrary choices, which
must be equivalent since the embedded submanifold structure of the fiber is uniquely determined.
For completeness, we here show explicitly how the various choices lead to equivalent models of the fiber.

There are two steps where one has a freedom of choice. First, after fixing $u_0$ one may replace
$\hat A_0$ in $\hat A_0 \delta(\xi) \hat A_0^{-1} = \hat \mu(u_0,y) \delta(\xi)$
 by $\hat A_0' = \hat A_0 T_0^{-1}$ using an arbitrary element $T_0 \in \SU(n)_{\delta(\xi)}$.
With the old choice, $X \in \SU(n)_{\delta(\xi)}^{u_0}$ acts on $T\in \SU(n)_{\delta(\xi)}$ by the
diffeomorphism $f_X$ given by
\be
f_X(T) = \hat A_0^{-1}  X \hat A_0 T X^{-1}.
\ee
This gets replaced by $f'_X$ that operates as
\be
f_X'(T) = (T_0  \hat A_0^{-1})  X (A_0  T_0^{-1})  T  X^{-1}.
\ee
Letting  $L_{T_0}: \SU(n)_{\delta(\xi)} \to \SU(n)_{\delta(\xi)}$ be the map $L_{T_0}(T) := T_0 T$,
we obtained the relation
\be
 f'_X = L_{T_0} \circ f_X \circ (L_{T_0})^{-1}, \qquad \forall X \in \SU(n)_{\delta(\xi)}^{u_0}.
 \ee
Thus, the two group actions are intertwined by the diffeomorphism $L_{T_0}$, which induces
a diffeomorphism between the quotient spaces.

The other independent freedom is that we can  replace $u_0$ by $u'_0 = \Gamma_0 Q_0 u_0$ for some $\Gamma_0 \in \U(1)$ and
$Q_0\in \SU(n)_{\delta(\xi)}$, and also replace $\hat A_0$ by $\hat A_0' = Q_0 A_0 Q_0^{-1}$.
Then, we have
\be
\SU(n)_{\delta(\xi)}^{u_0'} = Q_0 \SU(n)_{\delta(\xi)}^{u_0} Q_0^{-1},
\ee
and the corresponding actions on $\SU(n)_{\delta(\xi)}$ enjoy the identity
\be
f'_{Q_0 X Q_0^{-1}} = \Ad_{Q_0} \circ f_X \circ (\Ad_{Q_0})^{-1}, \qquad \forall  X \in \SU(n)_{\delta(\xi)}^{u_0},
\ee
where $\Ad_{Q_0}(T) := Q_0 T Q_0^{-1}$ for all $T\in \SU(n)_{\delta(\xi)}$.
Clearly, this relation also induces a diffeomorphism between the quotient spaces.

\section{Exploration of type $\mathrm{(ii)}$ cases up to $n=15$}\label{AppC}

We computed the type (ii) polytopes up to including $n=7$ with the software package {\tt  polymake}
 \cite{poly}
and for larger $n$ with the software package {\tt howzat} \cite{howzat}.

This led to the following observations:
\begin{itemize}
\item Every polytope has $2n$ facets.
\item Every polytope has exactly $n$ regular vertices, equivalently, a single vertex up to cyclic permutations.
\item The coordinate functions for the vertices (see below for some examples) are of the form $a + b x$
where $|b| < n$ and $|a| < \lfloor n/2 \rfloor$ for integers $a,b$.
\item For prime $n$ there are exactly $n-1$ intervals of type (i), all other intervals are of type (ii) and there
are $(|\fF_n|-1)/2 - n + 1$ such intervals where $\fF_n$ denotes the Farey sequence.
The polytopes corresponding to these intervals are all distinct. In particular this suggests that
the number of distinct type (ii) polytopes grows like $O(n^2)$.
\item For composite $n$ there are distinct intervals of type (ii) which have the same face vectors,  and those polytopes are probably isomorphic.
They occur for intervals in the Farey sequence  separated by fractions of the type
 $p/q$ where $q$ is a divisor of $n$ and $\gcd(p, q) = 1$.
(Such an example occurs for $n=6$ with $p/q=1/3$ and in Figure \ref{Fig2}  the relevant  face vector is written over two intervals.)
\item The number of facets, ridges, edges, and vertices are divisible by $n$ for odd $n$ and they are divisible by $n/2$ for even $n$.
 \item Double consecutive zeroes of a vertex $\xi$ exist for $n=9$ and $n\geq 11$, but no triple zeroes were found up to n=15.
\item The following table lists the number of intervals of type (i) and type (ii) for $n = 4, \dots, 15$,
and the minimal and maximal number of vertices (up to cyclic permutation) of all type (ii) polytopes for that $n$.
\[
\left(
\begin{array}{c|cccccccccccc}
n &  4 & 5 & 6 & 7 & 8 & 9 & 10 & 11 & 12 & 13 & 14 & 15 \\ \hline
\#(i) & 2 & 4 & 2 & 6 & 4 & 6 & 4 & 10 & 4 & 12 & 6 & 8 \\
\#(ii) & 1 & 1 & 4 & 3 & 7 & 8 & 12 & 11 & 19 & 17 & 26 & 28 \\
min\#v(ii) & 2 & 3 & 3 & 5 & 4 & 7 & 5 & 9 & 6 & 11 & 7 & 13 \\
max\#v(ii) & 2 & 3 & 4 & 6 & 10 & 15 & 21 & 35 & 56 & 84 & 126 & 210 \\
\end{array}
\right)
\]
\end{itemize}

The subsequent tables contain the results of calculations made by using  {\tt{polymake}}  \cite{poly}.
See also  Figure \ref{Fig2} and the description of the method outlined in Section \ref{Sec52}.
In the following we write $x = y/ \pi$. The regular vertex is always listed first and denoted by $R$.

\goodbreak
\smallskip
\noindent $\mathbf{n=4}$.
Type (ii):
$n=4, k=1, \tfrac13 < x < \tfrac12$, face vector $(8, 14, 8)$, $8=4+4$ facets, 4 facets are triangles and 4 facets are quadrangles. 14 edges; 8 vertices.
\[
\begin{array}{cccccc}
R: &-1+3x & 1-2x &  x & 1 - 2 x \\
I: & 0 &  x & 1 - 2 x & x
\end{array}
\]

\goodbreak
\noindent $\mathbf{n=5}$.
Type (ii): $n=5, k=1, \tfrac14 < x < \tfrac13$
face vector $(15, 35, 30, 10)$, $10 = 5+5$ facets,
5 facets with 8 vertices (heptahedra with 2 triangular faces and 5 quad faces),
5 facets with 6 vertices (pentahedra with 2 triangular and 3 quad faces (triangular prism)).
$30 = 10+20$ ridges, 10 triangles,  20 quadrangles. 35 edges; 15 vertices.
\[
\begin{array}{cccccc}
R: &-1 + 4 x & 1- 3x &  x &   x &  1 - 3 x \\
I: & 0 &  x &  1 - 3 x &  x &  x \\
J: & 0 &  x &  x &  1 - 3 x &  x
\end{array}
\]

\goodbreak
\noindent $\mathbf{n=6}$.
For $n=6$ there are 4 intervals of type (ii).

$n=6, k=1, \tfrac15 < x < \tfrac14$, face vector  $(24, 69, 80, 45, 12)$, 
$12 = 6+6$ facets, 6 facets with 8 vertices and 6 facets with 15 vertices.
\[
\begin{array}{ccccccc}
R: & -1 + 5 x & 1-4 x &  x &  x &   x &  1 - 4 x \\
I: &0 &  x &  1 - 4 x &  x &  x &  x \\
J: & 0 &  x &  x &  1 - 4 x &  x &  x  \\
K: & 0 &  x &  x &  x &  1 - 4 x &  x
\end{array}
\]

$n=6, k=1,  \tfrac14 < x < \tfrac13$, face vector $(18, 57, 74, 45, 12)$, 
$12 = 6+6$ facets, 6 facets with 6 vertices and 6 facets with 12 vertices.
\[
\begin{array}{ccccccc}
R: &  -1 + 4 x &  1 - 3 x &  x &  1 - 3 x &  -1 + 4 x &  1 - 3 x\\
I: &  0 &  x &  1 - 3 x &  -1 + 4 x &  1 - 3 x  & x \\
J: &  0  & x & 0 &  x &  1 - 3 x &  x
\end{array}
\]

$n=6, k=2,  \tfrac13 < x < \tfrac25$, 
 same face vector as in previous case with $k=1$, but the coordinates are different:
\[
\begin{array}{ccccccc}
R: &  -1 + 3 x &  1 - 2 x &  -1 + 3 x &  1 - 2 x &  -1 + 3 x & 2 - 5 x\\
I:  & 0 &  1 - 2 x &  -1 + 3 x &  1 - 2 x & -1 + 3 x &  1 - 2 x  \\
J:  &  0 &  x &  0 & 1 - 2 x & -1 + 3 x &  1 - 2 x
\end{array}
\]

$n=6, k=2,  \tfrac25 < x < \tfrac12$, face vector  $(24, 75,  92, 51, 12)$, 
12 facets with 12 vertices.
\[
\begin{array}{ccccccc}
R:  &  -2 + 5 x &  1 - 2 x &  1 - 2 x & -1 + 3 x &  1 - 2 x &  1 - 2 x \\
I: & 0  &  1 - 2 x   &  -1 + 3 x &  1 - 2 x &  1 - 2 x &  -1 + 3 x \\
J: & 0 &  -1 + 3 x &  1 - 2 x &  1 - 2 x &  -1 + 3 x &  1 - 2 x \\
K: &   0 & x & 0 &  1 - 2 x & -1 + 3 x &  1 - 2 x
\end{array}
\]

\goodbreak
\noindent $\mathbf{n=7}$.
For $n=7$ there are 3 intervals of type (ii).

$n=7, k=1, \tfrac16 < x < \tfrac15 $,  face vector $(35,119,175,140,63,14)$
\[
\begin{array}{cccccccc}
R: &  x & x & x & 1-5 x & 6 x-1 & 1-5 x & x \\
I:& 0 & x & 1-5 x & x & x & x & x \\
J:& 0 & x & x & 1-5 x & x & x & x \\
K:& 0 & x & x & x & 1-5 x & x & x \\
L:& 0 & x & x & x & x & 1-5 x & x
\end{array}
\]

$n=7, k=1, \tfrac15 < x < \tfrac14 $, face vector $(42, 161, 252, 196, 77, 14)$
\[
\begin{array}{cccccccc}
R:& x & 1-4 x & 5 x-1 & 1-4 x & 5 x-1 & 1-4 x & x \\
I:& 0 & x & 1-4 x & 5 x-1 & 1-4 x & x & x \\
J:& 0 & x & x & 1-4 x & 5 x-1 & 1-4 x & x \\
K:& 0 & x & 0 & x & x & 1-4 x & x \\
L:& 0 & x & 0 & x & 1-4 x & x & x \\
M:& 0 & x & x & 0 & x & 1-4 x & x
\end{array}
\]

$n=7, k=2, \tfrac13 < x < \tfrac25 $, face vector $(35, 133, 210, 168, 70, 14)$
\[
\begin{array}{cccccccc}
R:& 3 x-1 & 2-5 x & 3 x-1 & 3 x-1 & 2-5 x & 3 x-1 & 1-2 x \\
I:& 0 & 3 x-1 & 1-2 x & 3 x-1 & 2-5 x & 3 x-1 & 1-2 x \\
J:& 0 & 1-2 x & 3 x-1 & 2-5 x & 3 x-1 & 1-2 x & 3 x-1 \\
K:& 0 & 1-2 x & 3 x-1 & 0 & 1-2 x & 3 x-1 & 1-2 x \\
L:& 0 & 3 x-1 & 1-2 x & 0 & 1-2 x & 3 x-1 & 1-2 x
\end{array}
\]

\end{document}